\theoremstyle{plain}
\numberwithin{equation}{section} 
\newtheorem{theorem}{Theorem}[section]   
\newtheorem{prop}[theorem]{Proposition}
\newtheorem{rem}[theorem]{Remark}
\newtheorem{ass}[theorem]{Assumption}
\newtheorem{lemma}[theorem]{Lemma}
\newtheorem{cor}[theorem]{Corollary}
\theoremstyle{definition}
\newtheorem{eg}[theorem]{Example}
\newcommand{\tr}{\mathrm{Tr\,}}
\newcommand{\dd}{\mathrm{d}}
\newcommand{\ketbra}[1]{\ensuremath{\left|#1\right\rangle
\hspace{-3pt}\left\langle #1\right|}}
\newcommand{\vep}{\varepsilon}
\newcommand{\wt}{\widetilde}
\newcommand{\wh}{\widehat}
\newcommand{\var}{\mathrm{Var}}
\begin{document}
%\title{On quantum maximal likelihood estimator}
\title{The likelihood operator and Fisher information  in quantum probability}
\author{Kalyan B. Sinha}
\address{Jawaharlal Nehru Centre for Advanced Scientific Research, Jakkur, Bangalore
560 064, India} 
\address{Indian Statistical Institute, Stat-Math. Unit, R V College Post, Bengaluru 560 059, India}
\email[Kalyan B. Sinha]{kbs@jncasr.ac.in}

\author{Ritabrata Sengupta}
\address{Department of Mathematical Sciences, Indian Institute of
Science Education \& Research (IISER) Berhampur, Transit Campus, Govt. ITI, 
Berhampur 760 010, Ganjam, Odisha, India}
\email[Ritabrata Sengupta]{rb@iiserbpr.ac.in}

\begin{abstract}
We study the problem of Quantum Likelihood 
Operators (LO)
and their connection with quantum  Fisher information (QFI).  It is observed that the present approaches to
this problem tacitly assume commutativity of parametrised density
matrix $\rho_\theta$ and its derivative, which, in general, need not be
true, and this has nontrivial consequences in QFI. As examples, we discuss the parametrised two-level system exhaustively, and, as a further example, the one-mode coherent states of an infinite-dimensional system.

\end{abstract}

\maketitle
%%%%%%%%%%%%%%%%%%%%%%%%%%%%%%%%%%
\section{Introduction}\label{s1}
\par In quantum (non-commutative) probability theory, the notion of
states (described by a density matrix (or operator) $\rho$ in a
complex separable Hilbert space $\mathcal{H}$) and observables $X$
(often corresponds to bounded self-adjoint operators in
$\mathcal{H}$) play a fundamental role, replacing that of a probability
measure and of real-valued random variables in a suitable measure
space. If real $\lambda$ is an eigenvalue of $X$ with eigenprojection
$P$, then one say that a measurement of $X$  in the state $\rho$ would
find the value $\lambda$ with probability $\tr(\rho P)$. 

\par Often, the exact state $\rho$ of the system is not known, and as in
the case of the estimator of the parameters of a probability
distribution, one needs to consider a theory of estimation of the
\emph{ most likely} state among a given family, based on a set of observations on the
system. One such model, inspired by the classical theory is to assume
 the set of possible states is given by a family $\{\rho_\theta
\in \mathcal{B}_{1,+}(\mathcal{H})$ non-negative trace-class operators
in $\mathcal{H}$ with $\tr\rho_\theta =1|~\theta \in$ an open
interval $I \subseteq \mathbb{R}\}$ and the task is to estimate  the
value $\wh{\theta}$ of $\theta$ such that $\rho(\wh{\theta})$ is an
unbiased state. Some of the earliest studies in this direction are due
to Holevo \cite{holevo1}, Helstorm \cite{hels-67, hels-69, hels-73, hell-book}; here the authors write down versions of the
\emph{quantum likelihood operator} (QLO or just LO for short) leading
to the \emph{quantum Fisher information} (QFI) and associated quantum
Cramer-Rao bound (CR for short). In some of these references, as
well as in Parthasarathy \cite{krp-hyp}, a cost function associated with the
parameter space is defined, and its minimum is sought or the maximal
likelihood method is introduced for testing of \emph{quantum hypothesis}. One
should also mention the work of Krishna and Parthasarathy \cite{krishna-krp} and of
Parthasarathy \cite{krp21}  in which the ideas were extended to the case of
generalised measurements in finite-level quantum systems. 

\par The present work is presumed to be the first in a series to
study possible definitions of QLO, QFI, and their properties and
establish quantum CR bound in each case, with many of the results
valid for $\mathcal{H}$ of arbitrary dimensions.  

\par In Section \ref{s2}, we briefly revisit the standard procedure in
classical theory (in which the assumption that there exists a
$\theta$-independent measure $\mu(\cdot)$ on the measure space such that
every $\theta$-dependent probability measure $P_\theta(\cdot)$ is
absolutely continuous with respect to $\mu(\cdot)$, plays a major
role) and since the property of \emph{absolute continuity} among states  is not 
a simple concept in the quantum domain, the classical set-up is reformulated
obviating the necessity of having such a reference measure
$\mu(\cdot)$. This throws light on the path to adopt in the quantum
theory. In this process, two striking observations appear: firstly, a
two-dimensional example where though $\theta \mapsto \rho_\theta$ is
continuously differentiable with respect to the parameter $\theta \in I$, a bounded closed interval in $\mathbb{R}$, its constituent eigen-projections are not
even continuous, and secondly, for a differentiable state density
matrix $\rho_\theta$, its derivative $\rho_\theta'$ commutes with
$\rho_\theta$ for each $\theta$ if and only if the term containing
the derivatives of the eigen-projections vanish. In Section \ref{s3}, the
minimal set of assumptions necessary to begin a theory of parametric quantum
states are stated, the counter-example in two dimensions is
constructed, which forces a stronger hypothesis on the
$\theta$-dependence of $\rho_\theta$. The Boltzmann-von Neumann Logarithmic Derivative model is introduced first (in Section \ref{s3}), and its alternative three possible scenarios for the logarithmic derivative operator (including that of Helstrom) is studied in Section \ref{hels}. It is found that the logarithmic derivative operators, and consequently the QFIs, are all equal in all these models, if the eigenprojections are independent of the parameter $\theta$ (see Theorem \ref{th3.1} and Corollary \ref{cor4.4}). 
This section follows an explicit computation of the Cramér-Rao bound and a calculation of quantum Fisher Information (QFI) for all four of these models.

\par Section \ref{s5} explicitly addresses the scenario for infinite-dimensional systems. It extends the four models previously described to the case where $\dim \mathcal{H} =\infty$. In this article, it was noted repeatedly that the LD operators (in infinite-dimensional systems) are self-adjoint and unbounded, necessitating certain adjustments in the definitions of expectation in quantum theory to ensure that the resulting operators are of trace-class. For a physically relevant example of an infinite-dimensional scenario, we consider the one-parameter family of coherent states, and demonstrate explicit calculations of the QFI in two different models.
Section \ref{s6} concludes the paper with explicit calculations of QFI for two-level systems in all four models.  

%%%%%%%%%%%%%%%%%%%%%%%%%%%%%%%%
\section{Revisiting classical theory}\label{s2}
\par We revisit the classical case first, then continue to the quantum
information theory.  Let $\{P_\theta: \theta \in \mathbb{R}\}$ be a
probability measure on sample space $\Omega$, which we take as a
compact topological space for simplicity. Let $\mu$ be a standard
Borel measure on $\Omega$ such that for all $\theta$, $P_\theta(\cdot)$ is absolutely
continuous with respect to $\mu(\cdot)$ with  the Radon–Nikodym
derivative $f_\theta(\omega), \, \forall \theta,\,
\omega \in \Omega$.

\par The {\bf likelihood function/estimator} is given by the natural logarithm of this function:
$\ell(\theta,\omega) \equiv \ln f_\theta(\omega)$. 
In general, some additional regularity conditions are assumed, viz 
the probability density function $f$ is in $C^2$ with respect to parameter $\theta$.
\par It is to be noted that, unlike the classical probability measures, the property of absolute continuity among quantum states is more complicated and 
 in this section, we want to construct a
framework avoiding any reference measure/state.

\par The expectation
of the derivative of the likelihood function, i.e.  the Logarithmic Derivative (LD) is given as:
$\mathbb{E}_\theta  \ell'(\theta, \omega) = \int P_\theta (\dd \omega)
\dfrac{\partial_\theta f_\theta(\omega)}{f_\theta(\omega)},$ 
where for convenience, we have set 
 \newline$\ell'(\theta, \omega)
\equiv \partial_\theta\ell(\theta, \omega)$.
This is not defined for $\omega$ for which  $f_\theta(\omega) =0$, though we note that for 
\newline $\Delta =
  \{\omega: f_\theta(\omega) =0\}$, 
$P_\theta(\Delta) = \int_\Delta f_\theta(\omega)  \mu(\dd \omega) =0,$
and the above expression should not create a problem. In fact,  the
expectation takes the form,
\[\mathbb{E}_\theta ~ \ell'(\theta, \omega) 
= \int \partial_\theta f_\theta(\omega) \mu(\dd \omega) = \int \partial_\theta  f_\theta(\omega) \mu(\dd \omega) = \partial_\theta \langle \mu(\cdot), f_\theta(\cdot)\rangle = \partial_\theta P(\Omega) =0, \] 
where differentiation is in the weak sense with respect to the pairing $\left \langle \mu(\cdot), f_\theta(\cdot) \right\rangle$. 

\par Next the \emph{classical Fisher Information}  is given as:
\begin{eqnarray}
I(\theta) = \var_\theta(\ell'(\theta, \cdot)) &=& \mathbb{E}_\theta [ \ell'(\theta, \cdot) - \mathbb{E}_\theta(\ell'(\theta, \cdot))]^2 \\
&=& \int P_\theta(\dd \omega)  \left( \frac{\partial_\theta f_\theta(\omega)}{f_\theta (\omega)} \right)^2 
= \int \mu(\dd \omega) \frac{(\partial_\theta f_\theta(\omega))^2}{f_\theta (\omega)}.\nonumber
\end{eqnarray}

\par For the maximality of the $\ell(\theta, \cdot)$, the second derivative is needed as well, the expectation of which is given by 
\begin{eqnarray*}
\mathbb{E}_\theta ~\partial_\theta^2 \ell(\theta, \omega) &=& \int
P_\theta (\dd \omega) \left[ \frac{\partial_\theta^2
f_\theta(\omega)}{f_\theta(\omega)} - \left( \frac{\partial_\theta
f_\theta(\omega)}{f_\theta(\omega)}\right)^2 \right]\\
&=& \int \mu(\dd \omega) \partial_\theta^2 f_\theta(\omega)  - \int
\frac{(\partial_\theta f_\theta(\omega))^2}{f_\theta(\omega)} \mu(\dd
\omega)\equiv \partial_\theta^2(P_\theta(\Omega)) -I(\theta) =  -I(\theta)
\le 0.
\end{eqnarray*} 
Thus the likelihood function $\ell(\theta,\omega)$ attains its local
maximum, in expectation.

%%%%%%%%%%%%%%%%%%%%%%%%%%%%%%

 \par It may be noted that there is an implicit assumption in the
discussion above, viz.  $f_\theta(\omega) =0$ implies that
$\partial_\theta f_\theta(\omega) =0$. In other words, it is implicit
that not only $P_\theta(\cdot)$ is absolutely continuous with respect to $\mu$ (to ensure existence of $f_\theta(\cdot)$) but $\partial_\theta P_\theta(\cdot)$ needs to be absolutely continuous with respect to $P_\theta(\cdot)$ itself (to ensure meaning to $\frac{\partial_\theta f_\theta}{f_\theta}(\cdot)$). 

 \par Let $P_\theta(\cdot)$ be  a probability measure over $C_\mathbb{R}(\Omega)$
and let  $g \in C_{\mathbb{R}}(\Omega)$. Let
 \begin{equation}\label{eq3.1}
 \theta \longmapsto  \langle P_\theta, g \rangle = \langle P_\theta(\cdot), g(\cdot)\rangle \equiv \int
P_\theta(\dd \omega) g(\omega),
 \end{equation}
 be a $C^1$ for all  $g$. Then, by the Uniform Boundedness Principle and the Riesz' representation theorem \cite[pp 55, pp 235]{MR1009162}, there exists a
set function $\partial_\theta P_\theta(\cdot)$ of absolutely bounded
variation such that,
 \[ \left| \langle \partial_\theta P_\theta(\cdot), g(\cdot) \rangle \right| = C\|g\|_\infty, \quad \text{where $C$ is a constant}.\]

\par This implies that if one needs to build a classical theory without needing a reference measure, it is sufficient to assume that 
(i) the family $P_\theta(\cdot)$ is weak-$*$ continuously twice differentiable as above, and 
(ii) the derivative $\partial_\theta P_\theta(\cdot)$, being of absolute bounded variation on $\Omega$, is absolutely continuous with respect to $P_\theta(\cdot)$ itself.
Thus in such a case, there exists a unique real-valued measurable function $h(\theta, \cdot) \in L^1(\Omega, P_\theta)$ such that $\partial_\theta P_\theta (\Delta) =  \int_\Delta h(\theta, \omega) P_\theta (\dd \omega)$. This function $h(\theta, \cdot)$ plays the role of the logarithmic derivative (LD) mentioned earlier (in the traditional cases with a $\theta$-independent reference measure $\mu$). 
The likelihood function from this point of view is given by the
indefinite integral:
\[\ell (\theta, \omega) =\int^\theta h(\tau, \omega) \dd\tau. \]
In this scenario, the expectation  and variance of $h$  are given respectively by:
\begin{eqnarray*}
\mathbb{E}_\theta(\ell'(\theta,\cdot)) &=& \mathbb{E}_\theta(h(\theta,\cdot))= \int_\Omega P_\theta(\dd \omega) h(\theta, \omega)
= \partial_\theta \langle P_\theta (\cdot), 1 \rangle 
= \partial_\theta(P_\theta(\Omega)) =0, \\
 \var_\theta(h(\theta, \cdot))
&=& \mathbb{E}_\theta \left(h(\theta, \cdot)\right)^2
= \int P_\theta(\dd \omega) \left[ \frac{\partial_\theta P_\theta(\dd \omega)}{P_\theta(\dd \omega)}\right]^2
= \int P_\theta(\dd \omega) (h(\theta, \omega))^2
\equiv I(\theta) \ge 0,
\end{eqnarray*}
which is the (new form of) \emph{Fisher information}.  Finally, since by our assumption $h(\theta, \omega)$ is differentiable with respect to $\theta$,
\[\mathbb{E}_\theta (\ell''(\theta, \cdot)) = \int P_\theta (\dd \omega) ~\frac{\partial}{\partial \theta} h(\theta, \omega).\]
Furthermore,
\[
\partial_\theta \langle P_\theta(\cdot), h(\theta, \cdot) \rangle = \langle \partial_\theta P_\theta(\cdot), h(\theta, \cdot) \rangle + \langle P_\theta(\cdot), \partial_\theta h(\theta, \cdot)\rangle 
= \int P_\theta(\dd \omega) h(\theta, \omega)^2 + \mathbb{E}_\theta(\ell''(\theta,\cdot)).
\]
On the other hand, for the left-hand side above
\[ \partial_\theta \langle P_\theta(\cdot), h(\theta, \cdot) \rangle = \partial_\theta \left( \int P_\theta(\dd \omega) \frac{\partial_\theta P_\theta(\cdot)}{P_\theta(\cdot)}(\omega)\right) = \partial_\theta^2P_\theta(\Omega) =0,\]
showing that $\mathbb{E}_\theta(\ell''(\theta, \cdot)) = -I(\theta) \le 0$ i.e. the  maximality of the likelihood  function in expectation.

\begin{rem}
There is an implicit
hypothesis that the null set does not depend on $\theta$. In fact, the
assumption of the existence of a reference measure $\mu$ independent
of $\theta$ in the earlier discussion precisely ensures that.
\end{rem}

%%%%%%%%%%%%%%%%%%%%%%%%%%%%%%%%%
\section{Quantum case}\label{s3}
\par Taking the cue from the second part of Section \ref{s2}, we find that
the main mathematical issues for defining an analogous theory for a
quantum system are: 
\begin{enumerate}[(i)]
\item differentiability of the state $\rho_\theta$ with respect to the parameter $\theta$, which is understood as  weak differentiability with respect to $\theta$; this will imply that $\rho_\theta'$ is trace-class self adjoint, 
\item the possible representation (realisation) of \emph{the
logarithmic derivative of $\rho_\theta$}, the minimum requirement for
which is that the state density matrix $\rho_\theta$ has full rank,
i.e. $\mathrm{ker}(\rho_\theta) = \{0\},\, \forall \theta \in I$, an interval in $\mathbb{R}$.
\end{enumerate}
Thus $\rho_\theta = \sum_k \lambda_k(\theta) P_k(\theta)$, where the eigenvalues $\lambda_k(\theta) >0$ for each $k$ and for each $\theta \in I$, $\sum_k \lambda_k(\theta) =1$ for each $\theta$,  and $P_k(\theta)$ are the eigen-projections. 

\par If the density matrix $\rho_\theta$ is in a Hilbert space $\mathcal{H}$ with $\dim \mathcal{H} \equiv d<\infty$, then differentiability of $\lambda_k(\cdot)$ and $P_k(\cdot)$ clearly implies the differentiability of $\rho_\theta$. However, the converse is not true, as is shown in the following counterexample for $d=2$. %Question arise whether  the differentiability of $\rho_\theta$ with respect to $\theta$ necessarily implies the differentiability of eigenvalues $\lambda_k(\theta)$ and eigen-projections $P_k(\theta)$ with respect to $\theta$. This is not so even in the case of finite dimensions, which is shown in the following counterexample.  
\begin{eg}\label{exk}
Let $\mathcal{H} = \mathbb{C}^2$, and $\theta$ be the parameter such
that $\theta \in [-1,1] \subseteq \mathbb{R}$. Consider the density
function $\rho_\theta$ defined as
\begin{eqnarray}\label{exeq}
\rho_\theta &\equiv& \frac{1}{2} \left(1 + \exp\left( -\frac{1}{\theta^2}\right)\right)\begin{bmatrix}
\cos^2\left(\frac{1}{\theta}\right) & \cos\left(\frac{1}{\theta}\right) \sin\left(\frac{1}{\theta}\right) \nonumber\\
\cos\left(\frac{1}{\theta}\right) \sin\left(\frac{1}{\theta}\right) & \sin^2\left(\frac{1}{\theta}\right) \end{bmatrix} \\
&& + \frac{1}{2}  \left(1 - \exp\left( -\frac{1}{\theta^2}\right)\right)\begin{bmatrix}
\sin^2\left(\frac{1}{\theta}\right) & -\cos\left(\frac{1}{\theta}\right) \sin\left(\frac{1}{\theta}\right) \nonumber\\
-\cos\left(\frac{1}{\theta}\right) \sin\left(\frac{1}{\theta}\right) & \cos^2\left(\frac{1}{\theta}\right) \end{bmatrix}\\
&\equiv&  \lambda_1(\theta) P_1(\theta) + \lambda_2(\theta) P_2(\theta), \quad \text{ for $\theta \neq 0$, and }\\
\rho_0 &\equiv& \frac{1}{2} \begin{bmatrix} 1 & 0 \\ 0 & 0 \end{bmatrix} + \frac{1}{2} \begin{bmatrix} 0 & 0 \\ 0 & 1 \end{bmatrix} \quad \text{ for $\theta =0$.} \nonumber
\end{eqnarray}
% for $\theta \neq 0$, and  
%\[ \rho_0 \equiv \frac{1}{2} \begin{bmatrix} 1 & 0 \\ 0 & 0 \end{bmatrix} + \frac{1}{2} \begin{bmatrix} 0 & 0 \\ 0 & 1 \end{bmatrix} .\] 
Here $\lambda_j(\theta)$'s and $P_j(\theta)$'s, for $j =1,\,2$ are the eigenvalues and corresponding eigen-projections, given by 
\begin{alignat*}{2}
  \lambda_1(\theta)  &= \frac{1}{2} \left(1 + \exp\left( -\frac{1}{\theta^2}\right)\right),  \qquad       &   P_1(\theta)&= \begin{bmatrix}
\cos^2\left(\frac{1}{\theta}\right) & \cos\left(\frac{1}{\theta}\right) \sin\left(\frac{1}{\theta}\right) \\
\cos\left(\frac{1}{\theta}\right) \sin\left(\frac{1}{\theta}\right) & \sin^2\left(\frac{1}{\theta}\right) \end{bmatrix}, \\
   \lambda_2(\theta)  &= \frac{1}{2} \left(1 - \exp\left( -\frac{1}{\theta^2}\right)\right) ,        & P_2(\theta) &= \begin{bmatrix}
\sin^2\left(\frac{1}{\theta}\right) & -\cos\left(\frac{1}{\theta}\right) \sin\left(\frac{1}{\theta}\right) \\
-\cos\left(\frac{1}{\theta}\right) \sin\left(\frac{1}{\theta}\right) & \cos^2\left(\frac{1}{\theta}\right) \end{bmatrix}.
\end{alignat*}

It is easy to see that both $\lambda_1(\theta)$ and
$\lambda_2(\theta)$ are greater than or equal to $ 0$ for all
$\theta$. So $\rho_\theta$ is
faithful,\footnote{Note that a state $\phi: \mathcal{A} \to
\mathbb{C}$ is said to be faithful if $\phi(x^*x) =0$ implies that
$x=0$ for $x \in \mathcal{A}$.} i.e. $\mathrm{ker}(\rho_\theta)
=\{0\}$ for all $\theta \in [-1,1]$. Furthermore, $P_1(\theta)$ and
$P_2(\theta)$ are mutually orthogonal projections for all $\theta$, and $P_1(\theta)+P_2(\theta)=I_2$, i.e. the identity matrix. Write

\begin{eqnarray*}
\rho_\theta &=& \frac{1}{2} (P_1(\theta)+P_2(\theta)) + \frac{1}{2}\exp\left( -\frac{1}{\theta^2}\right) (P_1(\theta)- P_2(\theta))\\
&=& \frac{1}{2} I_2 + \frac{1}{2}\exp\left( -\frac{1}{\theta^2}\right)(P_1(\theta)- P_2(\theta)). 
\end{eqnarray*}
The the map $\rho_\theta \mapsto \rho_0 = \frac{1}{2} I_2$, as $|\theta| \to 0$ since the trigonometric functions above are bounded and $\exp\left( -\frac{1}{\theta^2}\right) \to 0$. In fact $[-1,1] \ni \theta \mapsto \rho_\theta$ is a $C^\infty$ function.
\par Observe that the eigenvalues as map $\theta \mapsto \lambda_1(\theta)$ and $\lambda_2(\theta)$ are both $C^\infty$ functions since
\[\frac{\dd^n}{\dd\theta^n} \exp\left( -\frac{1}{\theta^2}\right) \to 0 \quad \text{ as }\theta \to 0,\]
 so that both $\lambda_1(\theta)$ and $\lambda_2(\theta) \to \lambda_1(0)$ and $\lambda_2(0)$ respectively, each of  which equals $\frac{1}{2}$. 
\par On the other hand, the map to the eigen-projections, i.e. $\theta
\mapsto P_1(\theta)$ or $P_2(\theta)$ is not even continuous at
$\theta =0$, as the functions in $P_j(\theta)$ for $j=1, \,2$ does not
converge anywhere at $\theta =0$.
\end{eg}
\par Starting with the seminal paper of  Helstrom \cite{hels-73} many of the authors implicitly (or explicitly) assumed that in the canonical representation of $\rho_\theta$, only the eigenvalues depend on the parameter $\theta$,
not the eigen-projections.  This
counterexample \ref{exk} in $\mathbb{C}^2$ shows that merely the assumption of
differentiability $\theta \longmapsto \rho_\theta$ is not sufficient to
ensure the continuity of the eigen-projections of $\rho_\theta$ with
respect to $\theta$, and hence, it forces us to make a stronger
assumption for Hilbert spaces of all dimensions. 

\begin{ass}\label{as3.4}
Let $\{\rho_\theta : \theta \in I\}$ be a family of states in a complex separable Hilbert space $\mathcal{H}$.  We assume that (i) $\ker(\rho_\theta) =\{0\}$ for all $\theta \in $ interval $I \subseteq \mathbb{R}$ and (ii) $I \ni \theta \longmapsto \rho_\theta \in \mathcal{B}_{1\, +}$
 be a real analytic function. 
\end{ass}
\par This assumption ensures that in the absence of exceptional points \cite[p 64 -- 68 and 385 -- 387]{kato}, the eigenvalue and eigen-projections are analytic functions in $\theta$.\footnote{More generally, if there is an exceptional point, there may be several branches, and one needs to restrict to any of these branches for the same calculations.} Furthermore, since $\rho_\theta$ are bounded self-adjoint family, it also follows from the above that there exist unitary families $U(\theta)$ (for $\theta \in I \subset \mathbb{R}$) such that $P_k(\theta) =U(\theta) P_k(0) U(\theta)^*$ \cite[p 99 and p 386]{kato}. It can be seen easily from this that $\tr P_k(\theta) = \tr P_k(0) \equiv m_k \in \mathbb{Z}_+$, independent of $\theta$. Furthermore, under the hypothesis, one can differentiate inside the summation to get 
\begin{equation}\label{3.2}
\rho_\theta' \equiv \frac{\dd \rho_\theta}{\dd\theta} = \sum_{k =1}^ \infty \lambda_k'(\theta) P_k(\theta) + \sum_{k =1}^ \infty \lambda_k(\theta) P_k'(\theta).
\end{equation}

\par Next, a few general results related to the eigen-projections of the
family of states $\{\rho_\theta\}$ are given. These lead to a necessary and
sufficient condition when $\rho_\theta$ and  $\rho_\theta'$ commute
for each $\theta$.
\begin{lemma}\label{l1}
Let the family $\{\rho_\theta\}$ of states satisfy the Assumption 3.2.  Then
\begin{enumerate}[I.]
\item
\begin{enumerate}[(i)]
\item $P_j'(\theta) P_j(\theta) = P_j(\theta)^\perp P_j'(\theta)$,   $P_j'(\theta) P_j(\theta)^\perp =  P_j(\theta) P_j'(\theta)$, and \newline $P_j(\theta)' P_k(\theta) = - P_j(\theta) P_k(\theta)'$  $(j \neq k)$, 
\item $P_k(\theta)P_j'(\theta)P_k(\theta) =0$ for all $k$ and all $j$. Thus $P_j'(\theta)$ in the basis of $\{P_k(\theta)\}$ is an 
\newline off-diagonal operator, 
\item $(P_j'(\theta)P_k'(\theta)) P_j(\theta) = P_j(\theta)(P_j'(\theta) P_k'(\theta))^*$ for all $j,\,k$; $(P_j'(\theta))^2$ commutes with $P_j(\theta)$.
\end{enumerate}
If furthermore $\{P_j(\theta)\}$ is $C^2$, then 
\begin{enumerate}[(iv)]
\item $ P_j(\theta) P_k(\theta)'' + P_j(\theta)'' P_k(\theta) + 2 P_j(\theta)' P_k(\theta)' = \delta_{kj} P_j(\theta)''$. 
\end{enumerate}
\item $\rho_\theta$ commutes with $\rho_\theta'$ for all $\theta \in I$ if and only if $\sum_j \lambda_j P_j'(\theta) =0$. 
\end{enumerate}
\end{lemma}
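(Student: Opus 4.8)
\emph{Plan.} The strategy is to deduce every assertion by differentiating the purely algebraic relations satisfied by the eigen-projections, namely $P_j(\theta)^2 = P_j(\theta)$, $P_j(\theta)P_k(\theta) = \delta_{jk}P_j(\theta)$ and $\sum_k P_k(\theta) = I$, together with the self-adjointness of each $P_k(\theta)$ (hence of $P_k'(\theta)$). All of these identities, and the term-by-term differentiation in \eqref{3.2}, are legitimate on a branch where Assumption~\ref{as3.4} makes $\lambda_k(\cdot)$ and $P_k(\cdot)$ analytic; throughout I suppress the argument $\theta$.

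\emph{Part I.} For (i): differentiating $P_j^2 = P_j$ gives $P_j'P_j + P_jP_j' = P_j'$, and moving $P_jP_j'$ (resp.\ $P_j'P_j$) to the other side, with $P_j^\perp = I-P_j$, yields $P_j'P_j = P_j^\perp P_j'$ and $P_j'P_j^\perp = P_jP_j'$; differentiating $P_jP_k = 0$ ($j\neq k$) gives $P_j'P_k = -P_jP_k'$. For (ii): sandwiching $P_j'P_j+P_jP_j' = P_j'$ by $P_j$ on the left and right gives $2P_jP_j'P_j = P_jP_j'P_j$, hence $P_jP_j'P_j = 0$, while for $k\neq j$, $P_kP_j'P_k = -P_kP_jP_k' = 0$ by the third relation of (i); so each $P_j'$ is off-diagonal in the $\{P_k\}$ decomposition. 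For (iii): when $j=k$, applying (i) twice gives $(P_j')^2P_j = P_j'(P_j^\perp P_j') = (P_jP_j')P_j' = P_j(P_j')^2$, which --- as $(P_j')^2$ is self-adjoint --- is simultaneously the commutation statement and the $j=k$ case of the displayed identity; when $j\neq k$, combining $P_k'P_j = -P_kP_j'$ with $P_j'P_k = -P_jP_k'$ gives $P_j'P_k'P_j = -P_j'P_kP_j' = P_jP_k'P_j'$, which is precisely $(P_j'P_k')P_j = P_j(P_j'P_k')^*$ since $(P_j'P_k')^* = P_k'P_j'$. For (iv): differentiate $P_jP_k = \delta_{jk}P_j$ twice.

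\emph{Part II.} Write $\rho_\theta' = A + B$ with $A := \sum_k\lambda_k'P_k$ and $B := \sum_k\lambda_kP_k'$, as in \eqref{3.2}. Since $A$ and $\rho_\theta$ are both diagonal in $\{P_k\}$ they commute, so $[\rho_\theta,\rho_\theta'] = [\rho_\theta,B]$; by Part I(ii) the diagonal blocks $P_kBP_k$ vanish, and a direct computation of the $(j,k)$ block gives $P_j[\rho_\theta,B]P_k = (\lambda_j-\lambda_k)\,P_jBP_k$. Hence $\rho_\theta$ commutes with $\rho_\theta'$ iff $(\lambda_j-\lambda_k)P_jBP_k = 0$ for all $j,k$; since distinct eigen-projections belong to distinct eigenvalues (so $\lambda_j(\theta)\neq\lambda_k(\theta)$ for $j\neq k$, away from exceptional points), this forces $P_jBP_k = 0$ for $j\neq k$, and with the already vanishing diagonal blocks $B = \bigl(\sum_jP_j\bigr)B\bigl(\sum_kP_k\bigr) = 0$; the converse is immediate. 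The step I expect to require the most care is the infinite-dimensional bookkeeping --- checking that $B = \sum_k\lambda_kP_k'$ and the block expansion $\sum_{j,k}P_jBP_k$ converge so that the argument may be run term by term --- but this is underwritten by the analyticity hypothesis and by the validity of \eqref{3.2}; the remaining manipulations are routine.
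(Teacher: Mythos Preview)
Your proof is correct and follows essentially the same approach as the paper: Part~I is obtained by differentiating the projection identities exactly as the authors do, and Part~II proceeds by splitting $\rho_\theta'$ into its diagonal and off-diagonal pieces and showing the latter must vanish. The only tactical difference is that in Part~II the paper invokes the spectral theorem (commuting with $\rho_\theta$ $\Leftrightarrow$ commuting with each $P_j$, whence $P_j\nu = P_j\nu P_j = 0$), whereas you compute the $(j,k)$ blocks of $[\rho_\theta,B]$ directly; both routes rely on the distinctness of the $\lambda_j$ and on I(ii), and arrive at the same conclusion.
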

\begin{proof}
As per discussion after Assumption 3.2,  $\lambda_k(\theta)$ and $P_k(\theta)$ are both $C^\infty$ in $\theta$. 
Differentiating the equation $(P_j(\theta))^2 =P_j(\theta)$ one gets 
\begin{eqnarray*}
P_j'(\theta) &=& P_j(\theta) P_j'(\theta) +P_j'(\theta) P_j(\theta) \\
\Rightarrow P_j'(\theta) P_j(\theta)^\perp &=& P_j(\theta) P_j'(\theta)
\end{eqnarray*}
Thus $P_j(\theta)P_j'(\theta)P_j(\theta) =0$ for all $j$. 
\par For $k \neq j, \, P_k(\theta)P_j(\theta) =0$. Differentiating we get
\[P_k'(\theta)P_j(\theta)+P_k(\theta) P_j'(\theta)=0.\]
Multiplying the right hand side by $P_k(\theta)$ we get $P_k(\theta)P_j'(\theta)P_k(\theta) =0, \quad \forall k \neq j$, which proves (ii). 
\par For $j \neq k$,  $P_j'(\theta)P_k'(\theta) P_j(\theta) = P_j'(\theta)(-P_k(\theta) P_j'(\theta)) = -(-P_j(\theta)P_k'(\theta)) P_j'(\theta) = P_j(\theta)P_k'(\theta) P_j'(\theta) = P_j(\theta)(P_j'(\theta) P_k'(\theta))^*. $ This proves the first part of (iii), and the second part follows by putting  $k=j$,  since $P_j'(\theta)^2$ is self-adjoint. 
\par The part (iv) is a consequence of differentiating twice the relation $P_k(\theta) P_j(\theta) = \delta_{kj} P_j(\theta)$. 
\par Proof of II:  $\rho_\theta'= \sum_{k=1}^\infty \lambda_k'(\theta) P_k(\theta) + \sum_{k=1}^\infty \lambda_k(\theta) P_k'(\theta) \equiv \mu + \nu$, (i.e. $\mu$'s are terms with $P_k(\theta)$'s while the $\nu$'s are terms with $P_k'(\theta)$'s). Thus, since $\mu$ commutes with $\rho_\theta$, $\rho_\theta'$ can commute with $\rho_\theta$ if and only if $\nu$ commutes with $\rho_\theta$ or equivalently (by spectral theorem) if and only if $\nu$ commutes with each $\{P_j(\theta)\}$. This implies that 
\[P_j(\theta) \nu = P_j(\theta) \left( \sum_{k=1}^\infty \lambda_k(\theta) P_k'(\theta)\right) = \nu P_j(\theta) =\left( \sum_{k=1}^\infty \lambda_k(\theta) P_k'(\theta)\right)P_j(\theta).\]
Since as a consequence of Assumption 3.2,  $\sum_k \lambda_k(\theta) P_k'(\theta)$ is absolutely, uniformly convergent in $\mathcal{B}_1$-norm, it follows that 
\[ P_j(\theta) \nu = P_j(\theta) \left( \sum_{k=1}^\infty \lambda_k(\theta) P_k'(\theta)\right) P_j(\theta) = \sum_{k=1}^\infty \lambda_k(\theta) \left ( P_j(\theta)P_k'(\theta) P_j(\theta)\right) =0,\]
by Lemma \ref{l1}(ii) for all $j$. That $\nu=0$ follows by summing over $j$ and noting that
\newline  $\nu = \left(\sum_{j=1}^\infty P_j(\theta) \right) \nu = \sum_{j=1}^\infty \left(P_j(\theta) \nu \right) =0$.

\end{proof}

\begin{theorem}\label{new}
For a one-parameter family of states $\rho_\theta$ with spectral
decomposition  be 
\newline $\rho_\theta =\sum_{k=1}^\infty \lambda_k(\theta) \sum_{r=1}^{m_k}
e_{k,r}(\theta) e_{k,r}(\theta)^*= \sum_{k=1}^\infty \lambda_k(\theta)
P_k(\theta)$, where $\lambda_k(\theta)$'s are eigenvalues, 
\newline $\{ e_{k,r}(\theta): 1 \le r \le m_k, 1 \le k < \infty\}$ are corresponding eigenvectors, and $P_k(\theta)$'s are corresponding eigen-projections of rank $m_k$.
%\footnote{$\{e_k(\theta)\}$ is a complete orthonormal system of eigenvectors corresponding to eigenvalues $\{\lambda_k(\theta)\}$, repeated according to the multiplicity, and $P_j(\theta)$ are the associated projections.  
%%In the second equality, $P_k(\theta)$  is the orthogonal projections onto the eigenspace of dimension %$m_k$, the multiplicity of $\lambda_k(\theta)$. 
%}  
If
$\rho_\theta' \in \mathcal{B}_1$, then $\sum_{k=1}^\infty |\lambda_k'(\theta)| <
\infty$. 
Conversely, if $\rho_\theta$ is real analytic in $\theta$ and if $\sum_{k=1}^\infty  |\lambda_k(\theta)|\sum_{r=1}^{m_r}  \| e_{k,r}'(\theta)\|$ converges uniformly in $\theta$, then  $\rho_\theta' \in \mathcal{B}_1$ and
\begin{equation}
\rho_\theta' = \sum_{k=1}^\infty \lambda_k'(\theta) P_k(\theta) + \sum_{k=1}^\infty \lambda_k(\theta) P_k'(\theta).
\end{equation}
%Conversely if $\rho_\theta$ is real analytic in $I$ with no exceptional point, then in the representation $\rho_\theta = \sum_k \lambda_k(\theta) P_k(\theta)$, one can differentiate term by term and if furthermore,  $\sum|\lambda_k'(\theta)| $ and $\sum
%\lambda_k(\theta) \|e_k'(\theta)\|$ converges uniformly for $\theta \in I$, then 
%\[\rho_\theta' \equiv \rho' = \sum_k \lambda_l'(\theta) P_k(\theta) + \sum_k \lambda_l(\theta) P_k'(\theta).\]}
\end{theorem}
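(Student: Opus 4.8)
\emph{Strategy.} I would prove the two implications separately. The one analytic fact behind the forward implication is the elementary estimate
\[
\sum_n \big|\langle f_n, A f_n\rangle\big| \;\le\; \|A\|_1 \qquad (A \in \mathcal{B}_1,\ \{f_n\}\ \text{an orthonormal system}),
\]
obtained from the polar decomposition $A = U|A|$ by writing $\langle f_n, Af_n\rangle = \langle |A|^{1/2}U^*f_n,\,|A|^{1/2}f_n\rangle$, applying Cauchy--Schwarz and the inequality $2ab\le a^2+b^2$, and summing. The converse is a truncation argument: approximate $\rho_\theta$ by the finite-rank partial sums $\rho_\theta^{(N)} = \sum_{k=1}^N \lambda_k(\theta)P_k(\theta)$, differentiate term by term, and pass to the limit using the given uniform convergence together with the first half of the theorem.

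\emph{Forward implication.} Assume $\rho_\theta' \in \mathcal{B}_1$. Since (under the standing regularity of Assumption \ref{as3.4} and \cite{kato}) the eigenvalues and eigenvectors are differentiable, I would first record the Hellmann--Feynman identity
\[
\lambda_k'(\theta) \;=\; \big\langle e_{k,r}(\theta),\, \rho_\theta'\, e_{k,r}(\theta)\big\rangle \qquad (1 \le r \le m_k):
\]
differentiate $\rho_\theta e_{k,r}(\theta) = \lambda_k(\theta) e_{k,r}(\theta)$, pair with $e_{k,r}(\theta)$, and note that $\langle e_{k,r},\rho_\theta e_{k,r}'\rangle$ and $\lambda_k\langle e_{k,r},e_{k,r}'\rangle$ cancel because $\rho_\theta = \rho_\theta^*$ and $\|e_{k,r}(\theta)\|=1$. (Equivalently $P_k(\theta)\rho_\theta'P_k(\theta) = \lambda_k'(\theta)P_k(\theta)$, using $P_k\rho_\theta P_k = \lambda_k P_k$, \eqref{3.2} and Lemma \ref{l1}(ii).) Because $\ker(\rho_\theta)=\{0\}$, the family $\{e_{k,r}(\theta): 1\le r\le m_k,\ k\ge1\}$ is a complete orthonormal system, so the estimate above with $A=\rho_\theta'$ gives
\[
\sum_{k}m_k\,\big|\lambda_k'(\theta)\big| \;=\; \sum_{k}\sum_{r=1}^{m_k}\big|\langle e_{k,r}(\theta),\rho_\theta' e_{k,r}(\theta)\rangle\big| \;\le\; \|\rho_\theta'\|_1 \;<\;\infty,
\]
and in particular $\sum_k |\lambda_k'(\theta)| \le \sum_k m_k|\lambda_k'(\theta)| < \infty$.

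\emph{Converse implication.} Under real analyticity of $\theta\mapsto\rho_\theta$ the functions $\lambda_k(\cdot),e_{k,r}(\cdot)$ are locally analytic, and analyticity of this $\mathcal{B}_1$-valued map already gives $\rho_\theta'\in\mathcal{B}_1$, whence the forward implication yields $\sum_k m_k|\lambda_k'(\theta)|<\infty$. I would then write $\rho_\theta^{(N)} = \sum_{k=1}^N\lambda_k(\theta)P_k(\theta)$, finite-rank (hence trace class), with
\[
(\rho_\theta^{(N)})' = \sum_{k=1}^N \lambda_k'(\theta)P_k(\theta) + \sum_{k=1}^N \lambda_k(\theta)P_k'(\theta), \qquad P_k'(\theta) = \sum_{r=1}^{m_k}\Big(e_{k,r}'(\theta)\,e_{k,r}(\theta)^* + e_{k,r}(\theta)\,e_{k,r}'(\theta)^*\Big),
\]
and collect three facts: (i) $\|\rho_\theta-\rho_\theta^{(N)}\|_1 = \sum_{k>N}\lambda_k(\theta)m_k\to0$, the tail of $\sum_k\lambda_k m_k = \tr\rho_\theta = 1$ (non-negativity of the $\lambda_k$ is used here); (ii) $\big\|\lambda_k(\theta)P_k'(\theta)\big\|_1 \le 2\,\lambda_k(\theta)\sum_{r=1}^{m_k}\|e_{k,r}'(\theta)\|$, using $\|uv^*\|_1 = \|u\|\,\|v\|$ and $\|e_{k,r}\|=1$, so the off-diagonal series converges absolutely in $\mathcal{B}_1$ uniformly in $\theta$ by hypothesis; (iii) the diagonal series $\sum_k\lambda_k'(\theta)P_k(\theta)$ converges in $\mathcal{B}_1$, its $N$-tail having trace norm $\sum_{k>N}m_k|\lambda_k'(\theta)|$, finite by the forward implication. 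Combining, $(\rho_\theta^{(N)})' \to \sum_k\lambda_k'(\theta)P_k(\theta) + \sum_k\lambda_k(\theta)P_k'(\theta)$ in $\mathcal{B}_1$ while $\rho_\theta^{(N)}\to\rho_\theta$; the standard theorem on termwise differentiation of Banach-space-valued sequences --- pointwise convergence plus uniform convergence of the derivatives on compact subintervals of $I$ --- identifies the limit of the derivatives with $\rho_\theta'$, giving both $\rho_\theta'\in\mathcal{B}_1$ and the stated formula.

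\emph{Main obstacle.} The delicate point is passing the derivative through the infinite sum: the hypothesis controls only the off-diagonal series $\sum_k\lambda_k P_k'$, so trace-norm convergence of the diagonal series must be extracted from the first half of the theorem (equivalently from $\rho_\theta'\in\mathcal{B}_1$, which is where real analyticity is genuinely used), and one must upgrade the pointwise statements (i) and (iii) to locally uniform convergence so that the termwise-differentiation theorem applies. Once those uniformities are secured, the rest is bookkeeping with the two trace-norm inequalities above.
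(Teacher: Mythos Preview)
Your proof is correct and follows essentially the same route as the paper: the forward implication via the Hellmann--Feynman identity $\lambda_k'=\langle e_{k,r},\rho_\theta' e_{k,r}\rangle$ together with the bound $\sum_n|\langle f_n,Af_n\rangle|\le\|A\|_1$, and the converse by bounding the diagonal and off-diagonal series separately in $\mathcal{B}_1$-norm using $\|P_k\|_1=m_k$ and $\|P_k'\|_1\le 2\sum_r\|e_{k,r}'\|$. Your version is in fact more explicit than the paper's on two points --- you spell out the truncation/termwise-differentiation argument and you explain why the diagonal series $\sum_k m_k|\lambda_k'|$ converges (via real analyticity $\Rightarrow\rho_\theta'\in\mathcal{B}_1$ $\Rightarrow$ forward half), where the paper simply writes ``converges uniformly by hypothesis.''
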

\begin{proof}
\par Since $e_{k,r}(\theta)$ is an eigenvector, $(\rho_\theta -
\lambda(\theta))e_{k,r}(\theta)=0$. Differentiating we get 
\[ (\rho_\theta' - \lambda_k'(\theta))e_{k,r}(\theta) + (\rho_\theta - \lambda(\theta))e_{k,r}'(\theta)=0.\]
Taking innerproduct with $e_{k,r}(\theta)$, we get $\langle e_{k,r}(\theta), \rho_\theta'~ e_{k,r}(\theta)\rangle = \lambda_k'(\theta)$. This implies that 
\[\sum_{k=1}^\infty m_k |\lambda_k'(\theta)| \le \sum_{k=1}^\infty \sum_{r=1}^{m_k} |\langle e_{k,r}(\theta), \rho_\theta'~ e_{k,r}(\theta)\rangle| \le \|\rho_\theta'\|_1 < \infty, \quad \text{by assumption.}\] 

\par As observed above, %because of the hypothesis of analyticity, $P_k(\theta) = U(\theta) P_k(0) U(\theta)^*$ and 
$\|P_k(0)\|_1 = \tr P_k(0) = m_k$ and hence $\left\|  \sum_{k=1}^\infty \lambda_k'(\theta) P_k(\theta)\right \|_1 \le \sum_{k=1}^\infty |\lambda_k'(\theta)| m_k$, which converges uniformly by hypothesis. Similarly, the second term in the sum 
\[ \left\| \sum_{k=1}^\infty \lambda_k(\theta) P_k'(\theta) \right \| \le \sum_{k=1}^\infty \lambda_k(\theta) \|P_k'(\theta)\|_1 \le 2\sum_{k=1}^\infty  |\lambda_k(\theta)| \sum_{r=1}^{m_k}  \| e_{k,r}'(\theta)\|,\]
which also converges uniformly in $\theta$ by the hypothesis,  justifying the interchange of differentiation and infinite sum. 
\end{proof}

\par In this article, the various expressions for Fisher Information in quantum theory corresponding to different \emph{forms} of the \emph{Logarithmic Derivative} (LD) of the parametrised state $\{\rho_\theta: \theta \in I\}$ in a separable complex Hilbert space of arbitrary dimensions and their properties are studied.  Furthermore, for simplicity on presentation, the explicit dependence of $\theta$ will often be dropped, e.g. shall write $\rho$ for $\rho_\theta$, $\rho'$ for $\frac{\dd}{\dd\theta} \rho_\theta$, and $P_k'$ for $\frac{\dd}{\dd\theta} P_k(\theta)$, etc.

%\subsection{Various models for LD and QFI} 
\par Logarithm and logarithmic derivatives are well-defined in cases of commutative functions and are routinely used,  in particular in classical estimation theory, as long as the domain does not contain zero. This is not so in quantum theory, even when the underlying Hilbert space is of finite dimension.  There will be multiple ways to define LD, of which we will discuss only a few models. It will be apparent in our discussion that many results exhibit a strong dependence on the specific LD definition chosen. 
\subsection{Boltzmann-von Neumann 
Logarithmic Derivative operator (or BvN-LD for short)}\label{bvn} 
\par Given a parametrised family of states $\rho_\theta,\, \theta \in I$, the BvN prescription for the entropy leads to the Likelihood Operator (LO) 
\begin{equation}\label{e4.1}
 L(\theta) \equiv \ln \rho_\theta \equiv \sum_k \ln \lambda_k(\theta) P_k(\theta),
\end{equation}
which is a self-adjoint operator. The Logarithmic Derivative (LD) operator is given by the derivative of $L$ as 
\begin{equation}\label{e4.2}
H(\theta) = \frac{\dd \ln \rho }{\dd \theta} = \sum_k \frac{\lambda_k'}{\lambda_k} P_k + \sum_k \ln \lambda_k P_k' \equiv H_1(\theta) + H_2(\theta),
\end{equation}
where $H_1(\theta)$ contains the projections $\{P_k\}$ only, and $H_2(\theta)$ contains the family $\{P_k'\}$. The last equality, coming from spectral decomposition of $\rho$,  is valid under the stronger hypothesis of real analyticity of $\theta \mapsto \rho_\theta$. 
\par Another way to derive $H$ is to look at this as a solution $H$ of the Kubo–Mori–Bogoliubov (KMB) equation given in literature in an implicit form (see \cite{haya02, hiai-petz})
\begin{equation}\label{kmb}
\rho' = \int_0^1 \dd t~~ \rho^t H \rho^{1-t},
\end{equation}
and as an integral representation in terms of  $\rho'$ given below  as in \cite{hiai-petz, ruskai-23},
\begin{equation}
H   = \int_0^\infty \dd u  ~\frac{1}{\rho +u} \rho' \frac{1}{\rho +u}.\label{kmb2}
\end{equation}
These forms are also extensively used in literature \cite{AN, PT, haya02, hiai-petz}. It turns out that the solution of the above equation \eqref{kmb} precisely corresponds to the expression derived from the spectral decomposition. This result is demonstrated in the following theorem for finite-dimensional spaces and remains valid in infinite-dimensional settings under certain additional constraints.
 \begin{theorem} \label{kmb-1}
If $\dim \mathcal{H}< \infty$, then $\mathcal{H}(\theta)$, defined by equation \eqref{e4.2} using the spectral decomposition of $\ln \rho$, satisfies the KMB condition given by equation \eqref{kmb}. Conversely, if $H(\theta)$ is defined by equation \eqref{kmb2}, it satisfies equation \eqref{e4.2} derived from spectral decomposition. Therefore, these two expressions are inverses of each other. 
 \end{theorem}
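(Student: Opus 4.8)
The plan is to establish the two implications separately by exploiting the fact that in finite dimensions everything is an honest convergent sum over the finitely many eigenspaces, and that both sides of each identity are diagonalized in the same spectral picture once we work in the eigenbasis of $\rho_\theta$. First I would fix $\theta$ and write $\rho = \sum_k \lambda_k P_k$ with $\lambda_k > 0$ (using Assumption~\ref{as3.4}(i) so that $\ker\rho = \{0\}$), and pass to an orthonormal eigenbasis $\{e_{k,r}\}$. For the first implication, I would start from the spectral formula \eqref{e4.2}, $H = \sum_k \frac{\lambda_k'}{\lambda_k}P_k + \sum_k \ln\lambda_k\, P_k'$, and compute the right-hand side of \eqref{kmb}, namely $\int_0^1 \rho^t H \rho^{1-t}\,\dd t$, by inserting the resolution of identity $\sum_j P_j$ on the left and $\sum_\ell P_\ell$ on the right of $H$. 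The diagonal blocks ($j = \ell$) pick up only the $H_1$ part (since $P_j P_k' P_j = 0$ by Lemma~\ref{l1}(ii)) and contribute $\sum_k \lambda_k' P_k$ after the elementary scalar integral $\int_0^1 \lambda_k^t \cdot \frac{\lambda_k'}{\lambda_k}\cdot \lambda_k^{1-t}\,\dd t = \lambda_k'$; the off-diagonal blocks ($j \neq \ell$) see only $H_2$, and $P_j H_2 P_\ell = \ln\lambda_j\,(P_j P_j' P_\ell) + \ln\lambda_\ell\,(P_\ell' P_\ell\text{-type terms})$—more carefully, using Lemma~\ref{l1}(i) the relevant piece is $P_j(\sum_k \ln\lambda_k P_k')P_\ell$, which after the integral $\int_0^1 \lambda_j^t \lambda_\ell^{1-t}\,\dd t = \frac{\lambda_j - \lambda_\ell}{\ln\lambda_j - \ln\lambda_\ell}$ must be checked to reproduce exactly the off-diagonal part of $\rho' = \sum_k\lambda_k'P_k + \sum_k\lambda_k P_k'$. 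The key bookkeeping identity here is that $\sum_k\lambda_k P_k'$, sandwiched as $P_j(\cdot)P_\ell$, equals $(\lambda_j - \lambda_\ell)\,P_j P_j' P_\ell$ after using $P_j P_k' P_\ell = 0$ unless $k \in \{j,\ell\}$ together with the relations of Lemma~\ref{l1}(i); then the factor $\frac{\lambda_j-\lambda_\ell}{\ln\lambda_j-\ln\lambda_\ell}$ from the integral against $\ln\lambda_j - \ln\lambda_\ell$ (the coefficient of $P_jP_j'P_\ell$ inside $P_j H_2 P_\ell$) telescopes to give exactly $(\lambda_j-\lambda_\ell)P_jP_j'P_\ell$, matching $\rho'$.

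For the converse, I would start from \eqref{kmb2}, $H = \int_0^\infty (\rho+u)^{-1}\rho'(\rho+u)^{-1}\,\dd u$, again sandwich with $\sum_j P_j$ on both sides, and use $(\rho+u)^{-1} = \sum_k (\lambda_k+u)^{-1}P_k$ so that $P_j H P_\ell = \left(\int_0^\infty \frac{\dd u}{(\lambda_j+u)(\lambda_\ell+u)}\right) P_j \rho' P_\ell$. The scalar integral is $\frac{1}{\lambda_j - \lambda_\ell}\ln\frac{\lambda_j}{\lambda_\ell}$ for $j \neq \ell$ and $\frac{1}{\lambda_j}$ for $j = \ell$. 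Substituting $\rho' = \sum_k\lambda_k'P_k + \sum_k\lambda_kP_k'$: the diagonal blocks give $P_j\rho'P_j = \lambda_j'P_j$ (again by Lemma~\ref{l1}(ii)), so $P_jHP_j = \frac{\lambda_j'}{\lambda_j}P_j$, matching the $H_1$ term of \eqref{e4.2}; the off-diagonal blocks give $P_j\rho'P_\ell = (\lambda_\ell - \lambda_j)P_jP_\ell'P_\ell = (\lambda_\ell-\lambda_j)P_jP_j'P_\ell$ (the sign/index shuffling via Lemma~\ref{l1}(i)), times $\frac{\ln\lambda_j - \ln\lambda_\ell}{\lambda_j-\lambda_\ell}$, which yields $(\ln\lambda_\ell - \ln\lambda_j)P_jP_j'P_\ell$—and one checks this is precisely $P_j H_2 P_\ell$ where $H_2 = \sum_k\ln\lambda_k P_k'$. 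Summing $\sum_{j,\ell}P_j(\cdot)P_\ell$ reconstructs $H$ as in \eqref{e4.2}, and since both maps $\rho' \mapsto H$ (via \eqref{kmb}, inverting it) and $H \mapsto \rho'$ are then seen to be given by the same spectral recipe, they are mutual inverses.

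I expect the main obstacle to be the careful treatment of the off-diagonal bookkeeping: tracking which of $P_k'$ survive when sandwiched between $P_j$ and $P_\ell$, getting the signs right (Lemma~\ref{l1}(i) gives $P_j'P_\ell = -P_jP_\ell'$ for $j\neq\ell$, and these cross-terms are easy to mishandle), and verifying that the two scalar integrals $\int_0^1\lambda_j^t\lambda_\ell^{1-t}\dd t$ and $\int_0^\infty\frac{\dd u}{(\lambda_j+u)(\lambda_\ell+u)}$ are genuine reciprocals of the divided-difference $\frac{\lambda_j-\lambda_\ell}{\ln\lambda_j-\ln\lambda_\ell}$ in the appropriate sense, so that the round trip $\rho'\to H\to\rho'$ is the identity. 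The $j = \ell$ case is a sanity check by continuity (both integrals have the expected limiting values $1$ and $1/\lambda_j$ respectively after the divided difference degenerates to $\lambda_j$). Everything is finite-dimensional, so convergence of the integrals and interchange of sum and integral are automatic; no analytic subtleties arise beyond these, which is exactly why the theorem is stated for $\dim\mathcal{H} < \infty$.
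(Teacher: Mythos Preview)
Your proposal is correct and follows essentially the same route as the paper: both insert the spectral resolution on either side of $H$ (respectively $\rho'$), reduce the triple sum via the identities of Lemma~\ref{l1} (in particular that $P_jP_k'P_\ell=0$ when $k\notin\{j,\ell\}$), and identify the two scalar integrals $\int_0^1\lambda_j^t\lambda_\ell^{1-t}\,\dd t$ and $\int_0^\infty\frac{\dd u}{(\lambda_j+u)(\lambda_\ell+u)}$ as the logarithmic mean and its reciprocal. Your block-by-block organisation $P_j(\cdot)P_\ell$ is marginally cleaner than the paper's explicit case-split of the triple sum, but the content is identical; note only that in your converse sketch the step $(\lambda_\ell-\lambda_j)P_jP_\ell'P_\ell=(\lambda_\ell-\lambda_j)P_jP_j'P_\ell$ carries a sign slip (in fact $P_jP_\ell'P_\ell=-P_jP_j'P_\ell$), which propagates to your final expression for $P_jH_2P_\ell$ --- precisely the hazard you yourself flagged.
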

\begin{proof}
Let $\dim \mathcal{H} = N$. Spectral decomposition gives $\rho' = \sum_{j=1}^N \lambda_j' P_j + \sum_{j=1}^N \lambda_j P_j'$. Now, putting explicit form of $H$ in the right hand side of the  equation \eqref{kmb}, one gets 
\begin{align*}
\rho' &=\int_0^1 \dd t ~\left(\sum_{k=1}^N \lambda_k^t P_k\right) \left( \sum_{j=1}^N \frac{\lambda_j'}{\lambda_j} P_j +  \sum_{j=1}^N  \ln \lambda_j P_j' \right)\left(\sum_{l=1}^N \lambda_l^{1-t} P_l\right)\\
&= \sum_{k=1}^N \lambda_k' P_k + \sum_{\substack{j, k,l\\ k \neq l}} \frac{\lambda_k - \lambda_l}{ \ln \lambda_k - \ln \lambda_l } \ln \lambda_j P_k P_j' P_l.
\end{align*}
Observe that the first term of the given equation aligns with the corresponding term in the expression for $\rho'$ involving solely derivatives of eigenvalues $\lambda_k$'s. 
%Furthermore, by virtue of Lemma \ref{l1}, $P_l P_l' P_l = P_l' P_l^\perp P_l =0$.  
Consequently, the second summation can be restricted to the terms where $k \neq l$. Therefore, it suffices to demonstrate the equality of the second terms, involving the derivatives of the eigenprojections. This second term can be further decomposed into a sum of two distinct components as: 
\[ \sum_{\substack{j, k,l\\ k \neq l}} \frac{(\lambda_k - \lambda_l)\ln \lambda_j}{ \ln \lambda_k - \ln \lambda_l }  P_k P_j' P_l = \sum_{\substack{\text{either }j = k \neq l\\ \text{or } j =l \neq k }} \frac{(\lambda_k - \lambda_l)\ln \lambda_j}{ \ln \lambda_k - \ln \lambda_l }  P_k P_j' P_l  + \sum_{\substack{l \neq j \neq k\\ k \neq l}} \frac{(\lambda_k - \lambda_l)\ln \lambda_j}{ \ln \lambda_k - \ln \lambda_l }  P_k P_j' P_l.\]
Condition on the second sum of the right hand side effectively means that all indices $j, \, k, \, l$ are different. Thus  $P_k P_j' P_l = - P_k' P_j P_l =0$ by using Lemma \ref{l1} and the fact that $P_j$ and $P_l$ are orthogonal, this sum is zero. Now, the first sum can further be simplified  as 
\begin{align*}
&\sum_{\substack{\text{either }j = k \neq l\\ \text{or } j =l \neq k }} \frac{(\lambda_k - \lambda_l)\ln \lambda_j}{ \ln \lambda_k - \ln \lambda_l }  P_k P_j' P_l = \sum_{\substack{k,l\\ k \neq l }} \frac{(\lambda_k - \lambda_l)\ln \lambda_k}{ \ln \lambda_k - \ln \lambda_l }  P_k P_k' P_l + \sum_{\substack{k,l \\ k \neq l }} \frac{(\lambda_k - \lambda_l)\ln \lambda_l}{ \ln \lambda_k - \ln \lambda_l }  P_k P_l' P_l \\
=& \sum_{\substack{k,l\\ k \neq l }} \frac{(\lambda_k - \lambda_l)\ln \lambda_k}{ \ln \lambda_k - \ln \lambda_l }  P_k' P_l + \sum_{\substack{k,l \\ k \neq l }} \frac{(\lambda_k - \lambda_l)\ln \lambda_l}{ \ln \lambda_k - \ln \lambda_l }  (-P_k' P_l) 
= \sum_{\substack{k,l\\ k \neq l }} \frac{(\lambda_k - \lambda_l)(\ln \lambda_k -\ln \lambda_l )}{ \ln \lambda_k - \ln \lambda_l } P_k' P_l \\
=& \sum_{k} \lambda_k P_k' \sum_{l \neq k} P_l - \sum_l \lambda_l  \sum_{k \neq l} P_k' P_l
= \sum_{k} \lambda_k P_k' (I- P_k) + \sum_l \lambda_l P_l'P_l= \sum_{k} \lambda_k P_k'.
\end{align*}
The last step is arrived at on using the fact that $\sum_{k\neq l} P_k' =(I -P_l)' = -P_l'$. 

\par Conversely, given equation \eqref{kmb2} and spectral decomposition, it can be shown that 
\begin{align*}
&\int_0^\infty (\sum_k \frac{1}{\lambda_k+t}P_k) (\sum_l \lambda_l' P_l + \sum_l \lambda_l P_l') (\sum_j \frac{1}{\lambda_j+t} P_j)\dd t \\
=& \sum_k \frac{\lambda_k'}{\lambda_k} P_k +  \sum_l \sum_{k \neq j} \lambda_l P_k P_l' P_j \int_0^\infty \frac{\dd t}{(\lambda_k +t)(\lambda_j +t)} = \sum_k \frac{\lambda_k'}{\lambda_k} P_k  + \sum_l \sum_{k \neq j}  \lambda_l \frac{\ln\frac{\lambda_j}{\lambda_k}}{\lambda_j - \lambda_k} P_k P_l' P_j\\
%= \sum_k \frac{\lambda_k'}{\lambda_k} P_k  + \sum_{k \neq j}  \lambda_l \frac{\ln\frac{\lambda_j}{\lambda_k}}{\lambda_j - \lambda_k} P_k P_l' P_j\\
=& \sum_k \frac{\lambda_k'}{\lambda_k} P_k  + \sum_{k \neq j} \lambda_k \frac{\ln\frac{\lambda_j}{\lambda_k}}{\lambda_j - \lambda_k}  P_k'P_j + \sum_{k \neq l} \lambda_l \frac{\ln\frac{\lambda_j}{\lambda_k}}{\lambda_l - \lambda_k} (-P_k'P_l)
= \sum_k \frac{\lambda_k'}{\lambda_k} P_k + \sum_{k \neq l} \ln \frac{\lambda_k}{\lambda_l} P_k' P_l 
\end{align*}
Here we notice that the  other possibilities involve $P_kP_l'P_j = - P_k' P_l P_j =0$ when $l\neq j,k$ and $k \neq j$. Thus, using the above expression 
\begin{align*}
\frac{\dd \ln \rho}{\dd \theta} &= \sum_k \frac{\lambda_k'}{\lambda_k} P_k + \sum_{k \neq l} \ln \frac{\lambda_k}{\lambda_l} P_k' P_l \\ 
&= \sum_k \frac{\lambda_k'}{\lambda_k} P_k +  \sum_k \ln \lambda_k P_k' (\sum_l P_l) - \sum_k P_k' \sum_l \ln\lambda_l P_l\\
&= \sum_k \frac{\lambda_k'}{\lambda_k} P_k  + \sum_k \ln \lambda_k P_k' \quad \text{ since } \sum_k P_k =I.
\end{align*}
\end{proof}
The next theorem gives a set of criteria under which the LD operator $H$ in the BvN scenario in an infinite dimensional Hilbert space will be a bounded operator and satisfy the KMB equation \eqref{kmb}.
\begin{theorem}
Let $\dim\mathcal{H}= \infty$, and let $\{\rho_\theta\}$ be a family of states as before with its spectral decomposition. Assume furthermore that $\left\{\left| \frac{\lambda_k'}{\lambda_k}\right|\right\}_k$ is bounded and $\left\{ |\ln \lambda_k| \|e_k'\|\right\}_k$ is square-summable. Then 
\begin{enumerate}[(i)]
\item the hypothesis of the Theorem 3.4 are satisfied so that $\rho'$ is in $\mathcal{B}_1(\mathcal{H})$ and 
\[\rho'=\sum_k \lambda_k' P_k + \sum_k \lambda_k P_k';\]
\item $\rho_\theta$ is strongly differentiable with respect to $\theta$ and  
\begin{equation}
H(\theta)= \sum_k \frac{\lambda_k'}{\lambda_k} P_k  + \sum_k \ln \lambda_k P_k' \in \mathcal{B(H)};
\end{equation}
\item $H(\theta)$ in (ii) satisfies the KMB equation \eqref{kmb}. 
\end{enumerate}
\end{theorem}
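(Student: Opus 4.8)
The plan is to prove the three items in order. Item (i) reduces to Theorem~\ref{new} via a Cauchy--Schwarz bound; item (ii) splits $H = H_1 + H_2$ with $H_1 = \sum_k \frac{\lambda_k'}{\lambda_k}P_k$ and $H_2 = \sum_k \ln\lambda_k\,P_k'$ and handles the two summands by different means; item (iii) will follow by testing the KMB identity against finite linear combinations of eigenvectors, which collapses every manipulation to a finite one and lets one quote the finite-dimensional computation of Theorem~\ref{kmb-1} verbatim. I expect (iii) to be the main obstacle: in infinite dimensions the integrand $\rho^tH\rho^{1-t}$ need not be trace-class (indeed $\tr\rho^{1-t}$ may diverge) and the resulting double series must be regrouped with care; a second point requiring genuine care is the boundedness of $H_2$ in (ii), where a crude norm bound fails and the structural identities of Lemma~\ref{l1} are needed.

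For (i): by Assumption~\ref{as3.4} the map $\theta \mapsto \rho_\theta$ is real analytic, so it remains only to check that $\sum_k \lambda_k(\theta)\sum_r \|e_{k,r}'(\theta)\|$ converges uniformly in $\theta$ (the hypothesis of Theorem~\ref{new}). Since $\sum_k m_k\lambda_k = \tr\rho_\theta = 1$ and $\ker\rho_\theta = \{0\}$ with $\dim\mathcal{H} = \infty$, one has $0 < \lambda_k < 1$ for every $k$ and $\lambda_k \to 0$; hence $|\ln\lambda_k| \ge 1$ for all but finitely many $k$, so $\sum_k \lambda_k^2/(\ln\lambda_k)^2 \le C + \sum_k \lambda_k^2 \le C + (\sum_k \lambda_k)^2 < \infty$. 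Writing $\lambda_k\|e_k'\| = \frac{\lambda_k}{|\ln\lambda_k|}\cdot|\ln\lambda_k|\,\|e_k'\|$ and using Cauchy--Schwarz,
\[
\sum_k \lambda_k\|e_k'\| \le \Big(\sum_k \frac{\lambda_k^2}{(\ln\lambda_k)^2}\Big)^{1/2}\Big(\sum_k (\ln\lambda_k)^2\|e_k'\|^2\Big)^{1/2} < \infty,
\]
the last factor being finite by the square-summability hypothesis; carrying out the same estimate with the relevant quantities controlled uniformly on compact subintervals of $I$ gives the uniform convergence. Theorem~\ref{new} then yields $\rho_\theta' \in \mathcal{B}_1(\mathcal{H})$ together with the series $\rho_\theta' = \sum_k \lambda_k' P_k + \sum_k \lambda_k P_k'$, both pieces converging in $\mathcal{B}_1$-norm; strong differentiability of $\rho_\theta$ is then immediate from the $\mathcal{B}_1$-norm convergence of this differentiated series.

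For (ii): the coefficients $\lambda_k'/\lambda_k$ are real and bounded, so $H_1$ is a bounded self-adjoint operator, diagonal in the eigenbasis of $\rho_\theta$, with $\|H_1\| \le \sup_k|\lambda_k'/\lambda_k|$. For $H_2$ the sum $\sum_k|\ln\lambda_k|\,\|P_k'\|$ of operator norms need not converge, so I will use Lemma~\ref{l1}: with $A_k := P_k^\perp P_k'P_k$ one has $P_k' = A_k + A_k^*$ and $P_kP_k'P_k = 0$, whence $A_k^*x = P_kP_k'x$ for all $x$. Thus $\{\ln\lambda_k\,A_k^*x\}_k$ is mutually orthogonal and $\|\sum_k \ln\lambda_k\,A_k^*x\|^2 = \sum_k(\ln\lambda_k)^2\|P_kP_k'x\|^2 \le \big(\sum_k(\ln\lambda_k)^2\|P_k'\|^2\big)\|x\|^2$, while $\|\sum_k \ln\lambda_k\,A_kx\| \le \sum_k|\ln\lambda_k|\,\|P_k'\|\,\|P_kx\| \le \big(\sum_k(\ln\lambda_k)^2\|P_k'\|^2\big)^{1/2}\|x\|$ by Cauchy--Schwarz and $\sum_k\|P_kx\|^2 = \|x\|^2$. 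Since $\|P_k'\| \le \|P_k'\|_1 \le 2\sum_r\|e_{k,r}'\|$, the quantity $\sum_k(\ln\lambda_k)^2\|P_k'\|^2$ is finite by the square-summability hypothesis, so both partial-sum sequences are Cauchy in the strong topology; hence $H_2 = T + T^*$ with $T := \sum_k \ln\lambda_k\,A_k$ bounded, so $H_2$, and therefore $H = H_1 + H_2$, is a bounded self-adjoint operator.

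For (iii): both $\int_0^1 \rho^tH\rho^{1-t}\,dt$ and $\rho'$ are bounded, so it suffices to check equality of $\langle y, (\int_0^1 \rho^tH\rho^{1-t}\,dt)\,x\rangle$ and $\langle y, \rho'x\rangle$ for $x,y$ in the dense set of finite linear combinations of eigenvectors. For such $x,y$ one has $\rho^{1-t}x = \sum_{k\in F}\lambda_k^{1-t}P_kx$ and $\langle y, \rho^tz\rangle = \sum_{l\in G}\lambda_l^t\langle P_ly, z\rangle$ with $F,G$ finite, so the matrix element is the finite double sum $\sum_{l\in G}\sum_{k\in F}\lambda_l^t\lambda_k^{1-t}\langle P_ly, HP_kx\rangle$; moreover $\langle P_ly, H_2P_kx\rangle = \sum_j \ln\lambda_j\langle y, P_lP_j'P_kx\rangle$, where by Lemma~\ref{l1} the term $P_lP_j'P_k$ vanishes unless $k \ne l$ and $j \in \{k,l\}$, so only finitely many $j$ contribute. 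Hence $t \mapsto \langle y, \rho^tH\rho^{1-t}x\rangle$ is a finite sum of continuous functions of $t$, the integral may be computed termwise, and the remaining algebra is exactly that of Theorem~\ref{kmb-1}: the $H_1$-part integrates to $\sum_k\lambda_k'\langle y, P_kx\rangle$; the $H_2$-part gives $\sum_{k\ne l}(\lambda_k-\lambda_l)\langle y, P_k'P_lx\rangle$ using $\int_0^1\lambda_k^t\lambda_l^{1-t}\,dt = \frac{\lambda_k-\lambda_l}{\ln\lambda_k-\ln\lambda_l}$, $P_kP_k'P_l = P_k'P_l$ and $P_kP_l'P_l = -P_k'P_l$; and $\sum_{k\ne l}(\lambda_k-\lambda_l)P_k'P_l = \sum_k\lambda_kP_k'$ via $\sum_{l\ne k}P_l = I - P_k$ and $\sum_{k\ne l}P_k' = -P_l'$. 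Adding the two contributions gives $\langle y, \rho'x\rangle$, which is the KMB equation~\eqref{kmb}; as noted, the care needed in passing from the finite sums back to norm-convergent identities is the delicate part.
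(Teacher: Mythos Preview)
Your proofs of (i) and (ii) are essentially the paper's: both use the bound $|\lambda_k'|\le C\lambda_k$ together with a Cauchy--Schwarz estimate for $\sum_k\lambda_k\|e_k'\|$ in (i), and both split $P_k'$ into its ``upper-triangular'' and ``lower-triangular'' pieces (your $A_k$ and $A_k^*$, the paper's $\langle e_k,\cdot\rangle e_k'$ and $\langle e_k',\cdot\rangle e_k$) and bound the two resulting sums separately in (ii).

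Part (iii) is where you diverge. The paper argues by truncation: it sets $\rho^{(N)}=\sum_{k\le N}\lambda_kP_k$ and $H^{(N)}=\sum_{k\le N}\frac{\lambda_k'}{\lambda_k}P_k+\sum_{k\le N}\ln\lambda_k\,P_k'$, observes $\rho^{(N)}\to\rho$ in $\mathcal{B}_1$ and $H^{(N)}\to H$ strongly, and then pushes the finite-dimensional KMB identity through the limit using Schatten-norm continuity of $t\mapsto\rho^t$. Your route---testing $\langle y,\cdot\,x\rangle$ on finite combinations of eigenvectors so that every sum collapses to a finite one and the algebra of Theorem~\ref{kmb-1} applies literally---is valid and arguably more transparent, since it sidesteps the Schatten-ideal bookkeeping entirely; the paper's approach, in exchange, yields the stronger conclusion that the truncated integrals converge in $\mathcal{B}_1$-norm. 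One point you flag as ``delicate'' but do not resolve: your last step invokes the infinite-index identities $\sum_{l\ne k}P_l=I-P_k$ and $\sum_{k\ne l}P_k'=-P_l'$ inside what is really a sum restricted to $k\in G$, $l\in F$. The fix is immediate and does not need those identities at all: insert the same decompositions $x=\sum_{l\in F}P_lx$, $y=\sum_{k\in G}P_ky$ into the right-hand side $\sum_m\lambda_m\langle y,P_m'x\rangle$ (this sum is itself finite, since $\langle y,P_m'x\rangle=0$ once $m\notin F\cup G$ by the same $P_kP_m'P_l$ analysis), and both sides become the identical finite expression $\sum_{k\in G,\,l\in F,\,k\ne l}(\lambda_k-\lambda_l)\langle y,P_k'P_lx\rangle$.
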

\begin{proof}
(i) Since $| \lambda'_k |    = \left| \frac{\lambda_k'}{\lambda_k}\right| .\lambda_k \le C \lambda_k$ , it follows that  $\sum _k \|\lambda'_k\| <\infty$.   Also it is clear that  since $\lambda_k \to  0$ as $k \to \infty$,  $| \ln \lambda_k|  >$ some positive  constant $C'$   for  large $k$. This implies by Cauchy-Schwarz inequality that   $\sum_k \lambda_k \| e'_k \|  < {C'}^{-1}  \sum_k   \lambda_k |\ln \lambda_k |  \|e'_k \|  < \infty$, leading to the required result.
\par(ii) By the hypotheses of this theorem,  $\sum_k \frac{\lambda_k'}{\lambda_k}  P_k$  is in $\mathcal{B(H)}$, since the  sequence $\left\{ \frac{\lambda_k'}{\lambda_k}\right\}$ is a bounded sequence. As in the proof of the Theorem 3.4, the second term in the equation (3.8) for $H$   can be estimated as follows: for $f \in \mathcal{H}$ ,    $\sum_k \ln \lambda_k P_k'   = \sum_k  \ln \lambda_k ( \langle e_k,f \rangle  e'_k +  \langle e'_k,f \rangle  e_k)$, 
of which the square of the norm of the second expression $= \sum_k  |  \ln \lambda_k  \langle e'_k,f \rangle |^2   \le ( \sum_k   | \ln \lambda_k \|e'_k \| |^2  ) \|f\|^2$.  
The norm of the other  term  is $\le \sum_k |  \ln \lambda_k| \| e'_k \|  |\langle e_k,f \rangle |   \le \| f \| ( \sum_k | \ln \lambda_k \|e'_k \| |^2 )^{\frac{1}{2}}$. 
This proves that 
the LD operator $H$  is bounded  and   admits an estimate:  
\[ \|H(\theta) \| \le  \sup_k \left\{\left| \frac{\lambda_k'}{\lambda_k} \right|  \right\}  +  2 ( \sum  _ k  | \ln \lambda_k  \|e'_k\| |^2 )^{\frac{1}{2}}.\]

\par (iii)   Set  $\rho^{(N)}  =  \sum_{k=1}^N \lambda_k P_k$   and $H^{(N)}  = \sum_{k=1}^N \frac{\lambda_k'}{\lambda_k} P_k  + \sum_{k=1}^N \ln \lambda_k P_k'$. Then
 $\rho^{(N)}  \to  \rho$  in $\mathcal{B}_1(\mathcal{H})$-norm    and by (ii)    $H^{(N)} \to H$   strongly .  This implies that for each $t \in (0,1]$,  
        ${\rho^{(N)}}^t  \to  \rho^t$   in $\mathcal{B}_{t^{-1}} (\mathcal{H})$ -  norm where $\mathcal{B}_p(\mathcal{H})$ is Schatten - von Neumann ideal in $\mathcal{B(H)}$.   Furthermore,  note that    
         $\| ( I - \rho^{t-s} ) f ||^2  =   \sum_k (1-\lambda^{t-s} )^2 \| P_k f \|^2  \to 0$  as $t \to s$, by the dominated convergence theorem, leading to   the   result   that   
         $\| \rho^t  - \rho^s \|_{s^{-1}}    = \| \rho^s ( I - \rho^{t-s})\|_{s^{-1}}  \to 0$  as $t>s$  decreases to $s$, since $\rho^s$ is in   $\mathcal{B}_{s^{-1}}(\mathcal{H})$.    
%         Putting all these together ,
%       we conclude that      ${\rho^{(N)}}^t H^{(N)}  {\rho^{(N)}}^{1- t}$    converges to  $\rho^t H  \rho^{1-t}$  in $\mathcal{B}_1(\mathcal{H})$ - norm, uniformly with respect to
%     $t \in [0,1]$, where 
     We also observe that for $t=0$ or $1$ respectively, the integrant becomes $H\rho$ or $\rho H$. Putting all these together, we conclude that  $\lim_{N \to \infty} {\rho^{(N)}}^t H^{(N)} {\rho^{(N)}}^{1-t}~\dd t$ converges to $\int_0^1 \rho^t H \rho^{1-t}~\dd t$ in $\mathcal{B}_1(\mathcal{H})$-norm, uniformly with respect to
     $t \in [0,1]$,
\end{proof}

%%%%%%%%%%%%%%%%%
\section{Other Logarithmic Derivative Operators (LD)}\label{hels}
\par In the preceding section, one of the natural definitions of LD, $H(\theta)$ was given, and its properties were studied. However, the passage from $\rho'$ to the LD of $\rho$ can have a few other possibilities, keeping in mind the formal self-adjoint requirement of the LD, of which we consider only three.  Some of these have been studied in the physics literature \cite{hell-book, holevo1, hayashi2}. 
\subsection{Various models of LDs}
\begin{itemize}
\item{(LD$_1$):}  $\wt{H}(\theta) \equiv \frac{1}{2} [\rho^{-1} \rho' + \rho' \rho^{-1}], $ which in finite-dimensional $\mathcal{H}$, is well-defined since $\ker \rho_\theta =\{0\}, \forall \theta $ by hypothesis. However, if $\dim \mathcal{H}$ is infinite, then this needs careful study, and some results in this direction are given later. As before, the Likelihood Operator (LO) is given by $\wt{L}(\theta) =\int^\theta \wt{H}(\tau) \dd \tau$. 
\item{(LD$_2$):}  Another possible symmetric form of LD is given as $\wh{H}(\theta) \equiv \rho^{-\frac{1}{2}} \rho' \rho^{-\frac{1}{2}}$, which again makes perfect sense when $\dim \mathcal{H} < \infty$. 
\item{(LD$_3$):} Symmetric Logarithmic Derivative or SLD for short \cite{hels-73}. Unlike the previous models, this LD is implicitly defined as the solution of the equation 
\begin{equation}\label{sld}
\frac{1}{2}(\wt{\wt{H}}\rho + \rho \wt{\wt{H}}) =\rho'.
\end{equation}
In this context, we have the following easy result.  
\end{itemize}

\begin{theorem}\label{th4.1}
\begin{enumerate}[(i)]
\item If $\dim \mathcal{H} < \infty$, then the equation \eqref{sld} has a unique solution given by 
\begin{equation} \label{sldsol}
\wt{\wt{H}}(\theta) = \int_0^\infty \dd t ~~e^{-t \frac{\rho}{2}} \rho_\theta' e^{-t \frac{\rho}{2}}. 
\end{equation}
\item If $\dim \mathcal{H} < \infty $, then in all expressions for LDs:
\[ \mathbb{E}_\theta(\{ H,\, \wt{H}, \, \wh{H},\, \wt{\wt{H}} \}) \equiv \tr (\rho \{ H,\, \wt{H}, \, \wh{H},\, \wt{\wt{H}} \})=0,\]
in each case. 
\end{enumerate}
\end{theorem}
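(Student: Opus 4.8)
The plan is to prove (i) by treating the SLD equation \eqref{sld} as a Sylvester-type linear equation for $\wt{\wt{H}}$ and solving it in a spectral basis of $\rho$. Fix an orthonormal eigenbasis $\{e_i\}$ with $\rho e_i = \mu_i e_i$; since $\ker\rho_\theta=\{0\}$ and $\dim\mathcal{H}<\infty$, one has $\mu_{\min}:=\min_i\mu_i>0$. Taking the $(i,j)$ matrix element of \eqref{sld} gives $\tfrac12(\mu_i+\mu_j)\langle e_i,\wt{\wt{H}}e_j\rangle=\langle e_i,\rho_\theta'e_j\rangle$, and because $\mu_i+\mu_j\ge 2\mu_{\min}>0$ the operator $\wt{\wt{H}}$ is uniquely determined, with $\langle e_i,\wt{\wt{H}}e_j\rangle=\tfrac{2}{\mu_i+\mu_j}\langle e_i,\rho_\theta'e_j\rangle$ (self-adjoint, since $\rho_\theta'$ is). It then remains to check that the right-hand side of \eqref{sldsol} is exactly this operator: its matrix elements are $\langle e_i,\rho_\theta'e_j\rangle\int_0^\infty e^{-t(\mu_i+\mu_j)/2}\,\dd t=\tfrac{2}{\mu_i+\mu_j}\langle e_i,\rho_\theta'e_j\rangle$, and the integral converges absolutely since the integrand is norm-bounded by $e^{-t\mu_{\min}}\|\rho_\theta'\|$. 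A basis-free confirmation is also available: $\tfrac{\dd}{\dd t}\big(e^{-t\rho/2}\rho_\theta'e^{-t\rho/2}\big)=-\tfrac12\big(\rho\,e^{-t\rho/2}\rho_\theta'e^{-t\rho/2}+e^{-t\rho/2}\rho_\theta'e^{-t\rho/2}\rho\big)$, and integrating from $0$ to $\infty$ while using $e^{-t\rho/2}\to 0$ yields $\rho_\theta'=\tfrac12(\rho\wt{\wt{H}}+\wt{\wt{H}}\rho)$.

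For part (ii), the uniform mechanism is that $\tr\rho_\theta\equiv 1$ forces $\tr\rho_\theta'=0$, and each of the four expectations collapses to $\tr\rho_\theta'$ under the cyclicity of the trace together with $\rho^a\rho^b=\rho^{a+b}$. Taking the trace of \eqref{sld} gives $\tr(\rho\wt{\wt{H}})=\tfrac12\big(\tr(\rho\wt{\wt{H}})+\tr(\wt{\wt{H}}\rho)\big)=\tr\rho_\theta'=0$; for LD$_1$, $\tr(\rho\wt{H})=\tfrac12\big(\tr(\rho\rho^{-1}\rho')+\tr(\rho\rho'\rho^{-1})\big)=\tfrac12(\tr\rho_\theta'+\tr\rho_\theta')=0$; for LD$_2$, $\tr(\rho\wh{H})=\tr(\rho^{-1/2}\rho\,\rho^{-1/2}\rho_\theta')=\tr\rho_\theta'=0$.

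For the BvN operator $H$ of \eqref{e4.2} the cleanest route is via the KMB representation: by Theorem \ref{kmb-1}, in finite dimensions $\rho_\theta'=\int_0^1\rho^t H\rho^{1-t}\,\dd t$, so $0=\tr\rho_\theta'=\int_0^1\tr\big(\rho^{1-t}\rho^t H\big)\,\dd t=\int_0^1\tr(\rho H)\,\dd t=\tr(\rho H)$. Alternatively one may compute directly from $H=\sum_k\tfrac{\lambda_k'}{\lambda_k}P_k+\sum_k\ln\lambda_k P_k'$: the first sum contributes $\tr\big(\rho\sum_k\tfrac{\lambda_k'}{\lambda_k}P_k\big)=\sum_k\lambda_k'\tr P_k=\tr\rho_\theta'=0$ (using $\tr P_k'=(\tr P_k)'=0$), while the second contributes $\sum_{j,k}\lambda_j\ln\lambda_k\,\tr(P_jP_k')$, each term of which vanishes because $P_k'$ is off-diagonal in the basis $\{P_a\}$ (Lemma \ref{l1}(ii)), whence $\tr(P_jP_k')=\sum_{b\neq j}\tr(P_bP_jP_k')=0$. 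This disposes of all four cases.

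The only real obstacle is the BvN case in (ii): unlike the other three, $H$ is not an algebraic function of $\rho,\rho^{-1},\rho_\theta'$, so one must lean either on the previously established KMB representation (Theorem \ref{kmb-1}) or on the off-diagonality of the $P_k'$ (Lemma \ref{l1}). Everything else is routine --- the existence and uniqueness in (i) is the standard Sylvester argument, with finite-dimensionality entering only to guarantee $\mu_{\min}>0$, which simultaneously ensures that the integral in \eqref{sldsol} converges and that the map $X\mapsto\tfrac12(X\rho+\rho X)$ is invertible.
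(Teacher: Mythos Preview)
Your proof is correct and follows the paper's approach closely: the paper verifies \eqref{sldsol} by the same semigroup--derivative computation you give as your ``basis-free confirmation'', and for part (ii) it uses exactly the KMB identity $\rho'=\int_0^1\rho^t H\rho^{1-t}\,\dd t$ for the BvN case and the trace of the defining relations for the others. The one substantive addition in your argument is the explicit Sylvester/matrix-element computation establishing \emph{uniqueness} of $\wt{\wt{H}}$; the paper's own proof only checks that the integral formula solves \eqref{sld} and does not separately justify uniqueness, so your treatment is in fact more complete on this point.
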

\begin{proof}
\emph{Part (i): } First, since $\{ e^{-t \frac{\rho}{2}}: t \ge 0\}$ is a norm-continuous semigroup with bounded generator $-\frac{1}{2} \rho$, 
\begin{equation}\label{ee1} \frac{1}{2}(\wt{\wt{H}}\rho + \rho \wt{\wt{H}})  = -\int_0^\infty \dd t ~\frac{\dd}{\dd t} \left( e^{-t \frac{\rho}{2}} \rho' e^{-t \frac{\rho}{2}} \right) = \rho'.\end{equation}

\par \emph{Part (ii): } By equation (3.6),  $\tr \rho H = \int_0^1 \dd t ~ \tr \left( \rho^t H \rho^{1-t} \right) = \tr \left( \int_0^1 \dd t~ \rho^t  H \rho^{1-t} \right) = \tr \rho' =0$. 
\par Similarly for $\wt{\wt{H}}$, taking trace in both sides of the \eqref{ee1}, one can obtain $\tr(\rho \wt{\wt{H}}) =\tr \rho' =0$. The other forms are equally simple. 
\end{proof}
\begin{rem}
It is clear that all the three LD-operators defined above are unbounded operators if the $\dim \mathcal{H}$ is infinite, for in such cases, $\lambda_k \to 0$ as $ k \to \infty$. In particular, if $\rho e_k = \lambda_k e_k$, then $\langle e_k, H e_k \rangle  = \frac{1}{2\lambda_k} \langle e_k, \rho' e_k \rangle $ so that $| \langle e_k, H e_k \rangle |  \to \infty$ as $k \to \infty$. 
\end{rem}

We conclude this with an observation that the assumption of commutativity of $\rho$ and $\rho'$ gives a striking result. All four models defined above give the same LD operator as those defined earlier for finite dimensions. 
\begin{theorem}\label{th3.1}
Let $\dim \mathcal{H} < \infty$. If  $\rho$ commutes with $\rho'$, then
$H(\theta)=\wt{H}(\theta)=\wh{H}(\theta) = \wt{\wt{H}}(\theta)$ for all $\theta$
%, and hence
%$\wh{I}(\theta)= \wt{I}(\theta) =I(\theta)$.
\end{theorem}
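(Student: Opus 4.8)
The plan is to show that under the hypothesis $[\rho,\rho']=0$ every one of the four expressions collapses to the single operator $\rho^{-1}\rho'=\rho'\rho^{-1}$, so that equality among them is automatic.

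First I would record the elementary fact that makes everything work: since $\dim\mathcal H<\infty$ and $\rho$ commutes with $\rho'$, the operator $\rho'$ lies in the commutant of $\rho$, which is a $*$-subalgebra of $\mathcal B(\mathcal H)$ containing $g(\rho)$ for every continuous $g$ on $\mathrm{spec}(\rho)$ (indeed each such $g(\rho)$ is a polynomial in $\rho$). Hence $\rho'$ commutes with $\rho^{-1}$, with $\rho^{-1/2}$, with $e^{-t\rho/2}$ and with $(\rho+u)^{-1}$ for all $t,u\ge 0$; note that $\rho>0$ because $\ker\rho=\{0\}$ in finite dimensions, so all of these functional-calculus expressions are legitimate and $\rho^{-1}\rho'$ is self-adjoint. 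With this in hand, the two algebraically defined LDs are immediate: for (LD$_1$), commutativity gives $\rho^{-1}\rho'=\rho'\rho^{-1}$, hence $\wt H=\tfrac12(\rho^{-1}\rho'+\rho'\rho^{-1})=\rho^{-1}\rho'$; for (LD$_2$), $\wh H=\rho^{-1/2}\rho'\rho^{-1/2}=\rho^{-1/2}\rho^{-1/2}\rho'=\rho^{-1}\rho'$.

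For the SLD I would invoke the closed form \eqref{sldsol} established in Theorem \ref{th4.1}(i): $\wt{\wt H}=\int_0^\infty e^{-t\rho/2}\rho' e^{-t\rho/2}\,\dd t=\int_0^\infty e^{-t\rho}\rho'\,\dd t=\bigl(\int_0^\infty e^{-t\rho}\,\dd t\bigr)\rho'=\rho^{-1}\rho'$, where the integral converges to $\rho^{-1}$ since $\rho$ is positive definite. For the BvN operator $H$ I would use Theorem \ref{kmb-1}, which identifies the spectral-decomposition form \eqref{e4.2} with the integral representation \eqref{kmb2}; commutativity then gives $H=\int_0^\infty (\rho+u)^{-1}\rho'(\rho+u)^{-1}\,\dd u=\bigl(\int_0^\infty (\rho+u)^{-2}\,\dd u\bigr)\rho'=\rho^{-1}\rho'$, using $\int_0^\infty (\rho+u)^{-2}\,\dd u=\rho^{-1}$. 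Assembling the four computations yields $H=\wt H=\wh H=\wt{\wt H}=\rho^{-1}\rho'$, which is the claim.

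I do not expect a serious obstacle; the only points needing care are (a) the justification that every function of $\rho$ commutes with $\rho'$, which is a one-line consequence of $[\rho,\rho']=0$ together with finite-dimensionality, and (b) correctly quoting the representations \eqref{sldsol} and \eqref{kmb2} already proved above, so that the $\theta$-independent integrals can be pulled past $\rho'$. As an alternative one could route the argument through Lemma \ref{l1}(II): commutativity forces $\sum_j\lambda_j P_j'=0$, and comparing the $P_jP_k'$ blocks (exactly as in the proof of that lemma, using $P_jP_k'=-P_j'P_k$ for $j\neq k$ and $P_jP_j'=P_j'(I-P_j)$) gives $(\lambda_j-\lambda_l)P_j'P_l=0$ for $l\neq j$, hence $P_k'=0$ for every $k$ on each analytic branch; then $\rho'=\sum_k\lambda_k' P_k$ and each of the four models, evaluated from the spectral decomposition, reduces directly to $\sum_k(\lambda_k'/\lambda_k)P_k$. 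I would, however, present the functional-calculus route as the main argument since it is shorter and leans only on results already in hand.
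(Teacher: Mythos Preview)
Your main argument is correct and lands on the same identity $H=\wt H=\wh H=\wt{\wt H}=\rho^{-1}\rho'$ as the paper, but the routing differs slightly. The paper invokes Lemma~\ref{l1}(II) to obtain $\sum_k\lambda_kP_k'=0$, hence $\rho'=\sum_k\lambda_k'P_k$, and then records each LD as $\rho'\rho^{-1}=\sum_k(\lambda_k'/\lambda_k)P_k$ (for $H$ this amounts to the familiar fact that $(\ln\rho)'=\rho^{-1}\rho'$ once $[\rho,\rho']=0$). You instead bypass the spectral lemma entirely and evaluate each LD directly from its defining formula or integral representation, \eqref{kmb2} for $H$ and \eqref{sldsol} for $\wt{\wt H}$, using only that $\rho'$ commutes with every continuous function of $\rho$. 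Your route is a touch more self-contained and makes the verifications for $H$ and $\wt{\wt H}$ fully explicit; the paper's route has the advantage of exhibiting the common spectral form $\sum_k(\lambda_k'/\lambda_k)P_k$. Your alternative observation that in fact each $P_k'=0$ individually (from $(\lambda_j-\lambda_l)P_j'P_l=0$ together with distinctness of the $\lambda_k$ in the spectral decomposition) is also correct and goes a step beyond what Lemma~\ref{l1}(II) asserts.
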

\begin{proof}
If $\rho_\theta = \sum \lambda_k (\theta) P_k(\theta)$, and if $\rho$ commutes with $\rho'$, then $H(\theta) = \rho' \rho^{-1} = \sum_k \dfrac{\lambda_k'}{\lambda_k} P_k$, since by  Lemma \ref{l1} part II, $\sum_k \lambda_k P_k' =0$, i.e. $\rho'= \sum_k \lambda_k' P_k$. Finally note that 
\[\wh{H} =\wt{H} = \wt{\wt{H}} = \rho' \rho^{-1} =\sum_k \dfrac{\lambda_k'}{\lambda_k} P_k =H.\] 
\end{proof}
%%%%%%%%%%%%%%%%%
\subsection{Cram{\'e}r-Rao bound for finite dimension}\label{s4}

\par The next goal involves proving the Cramér-Rao (CR) bound for all four quantum logarithmic derivative (LD) scenarios previously described, assuming $\dim \mathcal{H} < \infty$. 
 Here, the LD operators are always well-defined. The \emph{Quantum Fisher Information (QFI)} in the scenarios $LD_1$, $LD_2$, and $LD_3$ are defined as the variance of corresponding LD operators and are denoted by $\wt{\bm{I}}, \, \wh{\bm{I}}$, and $\wt{\wt{\bm{I}}}$ respectively. Since by Theorem \ref{th4.1}, $\mathbb{E}(\wt{H}) = \mathbb{E}(\wh{H}) = \mathbb{E}(\wt{\wt{H}}) =0$, $\wt{\bm{I}}(\theta) = \var \wt{H} = \tr \rho \wt{H}^2$,  $\wh{\bm{I}}(\theta) = \var \wh{H} = \tr \rho \wh{H}^2$, and $\wt{\wt{\bm{I}}}(\theta) = \var \wt{\wt{H}} = \tr \rho \wt{\wt{H}}^2$. We postpone the discussion of QFI of the BvN scenario later since that requires a modified definition of variance (see Theorem \ref{th4.5}).

 \par Before we proceed, we note an important corollary of  Theorem \ref{th3.1}.  
\begin{cor}\label{cor4.4}
Let $ \dim \mathcal{H} < \infty$. If $\rho$ commutes with $\rho'$ then, $H(\theta) = H_1(\theta)$ in all versions and 
\[ \wt{\bm{I}}(\theta) = \wh{\bm{I}}(\theta) = \wt{\wt{\bm{I}}}(\theta) = \tr\left[ \sum_k \frac{(\lambda_k')^2}{\lambda_k} P_k(\theta)\right]. \]
\end{cor}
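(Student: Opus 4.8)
The plan is to deduce Corollary \ref{cor4.4} directly from Theorem \ref{th3.1} and part II of Lemma \ref{l1}, so the work is purely bookkeeping. Since $\rho$ commutes with $\rho'$, Theorem \ref{th3.1} gives $\wt H(\theta)=\wh H(\theta)=\wt{\wt H}(\theta)=H(\theta)$, so the three QFIs coincide automatically; it only remains to evaluate the common value. By Lemma \ref{l1}(II), commutativity is equivalent to $\sum_j\lambda_j P_j'=0$, hence from \eqref{3.2} we get $\rho'=\sum_k\lambda_k' P_k$, i.e. $H_2$ vanishes and $H(\theta)=H_1(\theta)=\sum_k(\lambda_k'/\lambda_k)P_k$. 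This is exactly the assertion ``$H(\theta)=H_1(\theta)$ in all versions.''

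Next I would compute $\var H$. Since $\mathbb{E}_\theta(H)=\tr(\rho H)=0$ by Theorem \ref{th4.1}(ii), the Fisher information is $\var H=\tr(\rho H^2)$. Using orthogonality of the $P_k$ and $P_jP_k=\delta_{jk}P_j$, one has
\[
H^2=\Bigl(\sum_k\frac{\lambda_k'}{\lambda_k}P_k\Bigr)^2=\sum_k\frac{(\lambda_k')^2}{\lambda_k^2}P_k,
\qquad
\rho H^2=\sum_k\lambda_k\cdot\frac{(\lambda_k')^2}{\lambda_k^2}P_k=\sum_k\frac{(\lambda_k')^2}{\lambda_k}P_k.
\]
Taking the trace yields $\var H=\tr\bigl[\sum_k\frac{(\lambda_k')^2}{\lambda_k}P_k\bigr]=\sum_k\frac{(\lambda_k')^2}{\lambda_k}\tr P_k=\sum_k m_k\frac{(\lambda_k')^2}{\lambda_k}$, which is the claimed formula; the same value equals $\wt{\bm I}(\theta)$, $\wh{\bm I}(\theta)$, and $\wt{\wt{\bm I}}(\theta)$ because the operators themselves agree.

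There is essentially no obstacle here: the only point that needs care is that all manipulations are legitimate because $\dim\mathcal{H}<\infty$, so every operator in sight is bounded, the spectral sums are finite, and $\rho^{-1}$ exists by the faithfulness hypothesis $\ker\rho_\theta=\{0\}$. One could add a remark that the $LD_1,LD_2$ operators $\wt H,\wh H$ reduce to $\rho'\rho^{-1}=\rho^{-1}\rho'$ under commutativity (since $\rho^{-1}$ then commutes with $\rho'$ as well), which makes the equality with $H_1$ transparent without even invoking the full strength of Theorem \ref{th3.1}; but citing Theorem \ref{th3.1} is the cleanest route. I would keep the proof to three or four lines.
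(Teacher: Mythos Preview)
Your proposal is correct and follows essentially the same route as the paper: the paper's proof is a one-liner invoking Theorem~\ref{th3.1} (which already identifies the common LD operator as $\sum_k(\lambda_k'/\lambda_k)P_k$), and you simply spell out the variance computation that the paper leaves implicit. Your added detail about $H_2=0$ via Lemma~\ref{l1}(II) and the explicit evaluation of $\tr(\rho H^2)$ is helpful but not a different approach.
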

\begin{proof}
This follows from Theorem \ref{th3.1}, as $\wh{H} =\wt{H} = \wt{\wt{H}} $. 
\end{proof}

\subsubsection{Quantum CR bound for LD$_1$} \label{ss:4.1}
In this  scenario,  for a single system for a Hilbert space
$\mathcal{H}$, we have 
\[2\wt{H}(\theta) = \rho^{-1} \rho' + \rho' \rho^{-1},\]  and let  $\Theta$ be an 
unbiased estimator operator for $\theta$, i.e.
$\mathbb{E}_\theta(\Theta) =\theta$, and we also assume that in all
cases, the estimator operator $\Theta$ is not explicitly dependent on
the parameter $\theta$. On differentiating this with respect to
$\theta$ and using the fact that  $\tr \rho_\theta'=0$ we get  
\begin{equation}\label{e5.0}
\tr(\rho_\theta' (\Theta - \theta) ) =1.
\end{equation}

\par Next we need to solve $\rho'$ in terms of $\wt{H}$ in the
equation
\begin{equation} \label{ly}
2\wt{H} = \rho^{-1} \rho' + \rho' \rho^{-1}.
\end{equation}
$\rho^{-1}$ is the 
generator of a (strongly) dissipative  semi-group
$\left\{\exp\left(-\frac{t}{\rho}\right)\right\}_{t \ge 0}$, and note that 
$\left\|\exp\left(-\frac{t}{\rho}\right)\right\| \le
\exp\left(-\frac{t}{\lambda_1}\right)$. Next, we need to solve for
$\rho'$ in terms of $\wt{H}$ in \eqref{ly}, and since this is
a version of the Lyapunov problem, we expect the solution to be
given by 
\begin{equation} \label{eq4.1}
\rho' = \int_0^\infty \exp\left(-\frac{t}{\rho}\right) (2\wt{H}) \exp\left(-\frac{t}{\rho}\right) \dd t.
\end{equation}
Even if $\dim \mathcal{H} =\infty$, for  $f,\,g \in D(\rho^{-1})$ the equation \eqref{ly} takes the form 
\begin{eqnarray*}
&&\int_0^\infty \left\langle  \exp\left(-\frac{t}{\rho}\right)f,  (2\wt{H}) \exp\left(-\frac{t}{\rho}\right) g \right\rangle \dd t\\ 
&=& \int_0^\infty \left\langle  \exp\left(-\frac{t}{\rho}\right)f,
(\rho^{-1} \rho' + \rho' \rho^{-1}) \exp\left(-\frac{t}{\rho}\right) g
\right\rangle \dd  t \\
&=&  \int_0^\infty-\frac{\dd}{\dd t} \left \langle
\exp\left(-\frac{t}{\rho}\right)f, \rho'
\exp\left(-\frac{t}{\rho}\right)g \right \rangle \dd t \\
&=& - \left \langle  \exp\left(-\frac{t}{\rho}\right)f, \rho'
\exp\left(-\frac{t}{\rho}\right)g \right \rangle \bigg|_{t=\infty}+
\left\langle f, \rho' g\right\rangle  =\left\langle f, \rho'
g\right\rangle.
\end{eqnarray*}
This shows that the densely-defined quadratic form on $D(\rho^{-1}) \times D(\rho^{-1})$, such that the right hand side of equation \eqref{eq4.1} has an extension as a $\mathcal{B(H)}$ element satisfying \eqref{eq4.1}. (More on this is in Section \ref{s5}.) 
This proves the equation \eqref{eq4.1} and  implies from \eqref{e5.0} that (for $\dim\mathcal{H} < \infty$) 
\begin{eqnarray*}
1 = \tr \left[ \rho'(\Theta -\theta)\right]&=& \int_0^\infty \tr \left[
\exp\left(-\frac{t}{\rho}\right) (2\wt{H}) \exp\left(-\frac{t}{\rho}\right)
(\Theta -\theta)\dd t \right]\\
&=& 2 \int_0^\infty \left\langle \wt{H} \exp\left(-\frac{t}{\rho}\right),
\exp\left(-\frac{t}{\rho}\right) (\Theta -\theta)
\right\rangle_{\mathcal{B}_2} \dd t.
\end{eqnarray*}
Since this  is a positive number, using the Cauchy-Schwartz inequality
for the inner product in $\langle\cdot,\cdot\rangle_{\mathcal{B}_2}$
in ${\mathcal{B}_2}$ and the integral, we get that   
\begin{eqnarray*}
1 &=& 4 \left| \int_0^\infty \left\langle \wt{H}
\exp\left(-\frac{t}{\rho}\right), \exp\left(-\frac{t}{\rho}\right)
(\Theta -\theta) \right\rangle_{\mathcal{B}_2} \dd t \right|^2 \\
& \le & 4 \int_0^\infty  \left\|   \wt{H}
\exp\left(-\frac{t}{\rho}\right) \right\|_2^2 \dd t ~~  \int_0^\infty
\left\|\exp\left(-\frac{t}{\rho}\right) (\Theta -\theta) \right\|_2^2
\dd t \\
& \le & 4\int_0^\infty \tr \left[ \exp\left(-\frac{t}{\rho}\right)
\wt{H}^2 \exp\left(-\frac{t}{\rho}\right) \right]\dd t
~~~\int_0^\infty \tr \left[ \exp\left(-\frac{t}{\rho}\right) (\Theta
-\theta)^2 \exp\left(-\frac{t}{\rho}\right) \right] \dd t   \\
&=& 4 ~~ \tr \left[\wt{H}^2 \int_0^\infty
\exp\left(-\frac{2t}{\rho}\right) \dd t \right] ~~\tr \left[ (\Theta
-\theta)^2 \int_0^\infty \exp\left(-\frac{2t}{\rho}\right) \dd t
\right]    
\end{eqnarray*}
By the spectral theorem, $\int_0^\infty
\exp\left(-\frac{2t}{\rho}\right) \dd t = \frac{1}{2} \rho$. As a
result, the above inequality leads to  
\begin{eqnarray*}
1 &\le & \tr\left( \rho \wt{H}(\theta)^2 \right) \tr \left(\rho(\Theta
- \theta)^2 \right)
= \wt{\bm{I}}(\theta)\var_\theta (\Theta) 
\quad \text{or, } ~~ \var_\theta (\Theta) \ge \frac{1}{\wt{\bm{I}}(\theta)}.
\end{eqnarray*}

\par Now we move on to a compound $n$-system.  Denote $ \mathfrak{H} =
\mathcal{H}^{\otimes n}$ as the Hilbert space of $n$-copies of Hilbert
space $\mathcal{H}$ of a single system. Then $n$ independent observables $X_1, \,
X_2, \cdots, X_n$ can be described in  $\mathfrak{H}$ as 
\[ \mathfrak{X}_j = \underbrace{I \otimes \cdots \otimes  I}_{j-1}
\otimes X_j \otimes \underbrace{I \otimes \cdots \otimes I}_{n-j+1},
\quad 1 \le j \le n.\]
Let  $\Theta \equiv \Theta(\mathfrak{X}_1, \cdots, \mathfrak{X}_n)$
(assumed to be independent of $\theta$)  in $\mathfrak{H}$ be the
estimator of $\theta$, and furthermore let $\Theta$  be an unbiased
estimator of $\theta$ in the state $\bm{\varrho}_\theta =
{\rho_1}_\theta \otimes \cdots \otimes {\rho_n}_\theta$  (with
${\rho_j}_\theta$ a state of single particle for each $j$)  in
$\mathfrak{H}$, i.e.
\begin{eqnarray}\label{e5.1}
\mathbb{E}_\theta (\Theta(\mathfrak{X}_1, \cdots, \mathfrak{X}_n)) &=&
\tr_{\mathfrak{H}} \left(\bm{\varrho}(\theta)  \Theta(\mathfrak{X}_1,
\cdots, \mathfrak{X}_n)\right) =\theta, \quad \text{ or } \\
\tr \bm{\varrho}(\theta) (\Theta -\theta) &=&0. \nonumber
\end{eqnarray}
Since each $\rho_j(\theta)$ is assumed to be differentiable with
respect to $\theta$, the compound state $\bm{\varrho}(\theta)$ is also
differentiable. Differentiating  \eqref{e5.1} with respect to $\theta$,  we get
\begin{equation}\label{cr1}
1 \equiv \tr \left[ \bm{\varrho}'(\theta) \Theta(\mathfrak{X}_1, \cdots, \mathfrak{X}_n)\right],
\end{equation}
where
$\bm{\varrho}'= \sum_{j=1}^n \rho_1 \otimes \cdots \rho_{j-1} \otimes
\rho_j' \otimes \rho_{j+1} \otimes \cdots \otimes \rho_n.$
We set 
\begin{equation}\label{crld1}
\wt{\bm{H}}(\theta) = \sum_{j=1}^n \underbrace{I\otimes
\cdots\otimes I}_{j-1 \text{ times}} \otimes \wt{H}_j (\theta) \otimes
\underbrace{I \otimes \cdots\otimes I}_{n-j \text{ times}}, 
\end{equation}
and  $\bm{\varrho}^{-1} = \rho_1^{-1} \otimes \cdots \otimes
\rho_n^{-1}$ (recalling that the $\rho_j$'s are invertible), and as before, we need to solve $\bm{\varrho}'$ in terms of
$\wt{\bm{H}}$ in the equation
%\begin{equation} \label{ly2}
$\wt{\bm{H}} = \frac{1}{2^n}(\bm{\varrho}^{-1} \bm{\varrho}' +
\bm{\varrho}' \bm{\varrho}^{-1}).$
%\end{equation}
Denote $\bm{t}$ as the vector $\bm{t} =(t_1, \cdots, t_n)$, the
multiple-integral $\int_0^\infty \cdots \int_0^\infty \dd t_1 \cdots
\dd t_n$ by $\int_{\bm{0}}^{\bm{\infty}} \bm{\mathrm{d}t}$, and
$\exp\left(-\frac{\bm{t}}{\bm{\varrho}}\right) =
\exp\left(-\frac{t_1}{\rho_1}\right) \otimes \cdots \otimes
\exp\left(-\frac{t_n}{\rho_n}\right)$, and 
consider the multiple integral:
\begin{eqnarray*}
 && \int_{\bm{0}}^{\bm{\infty}} \bm{\mathrm{d}t}
\exp\left(-\frac{\bm{t}}{\bm{\varrho}}\right) \left( 2^n \bm
{\wt{H}}(\theta)  \right)
\exp\left(-\frac{\bm{t}}{\bm{\varrho}}\right) \\
&=& \int_{\bm{0}}^{\bm{\infty}} \bm{\mathrm{d}t} \left[\sum_{j=1}^n
2\exp\left(-\frac{2t_1}{\rho_1}\right) \otimes \cdots \otimes
2\exp\left(-\frac{2t_{j-1}}{\rho_{j-1}}\right) \otimes
\exp\left(-\frac{t_j}{\rho_j}\right)2\wt{H}_j
\exp\left(-\frac{t_j}{\rho_j}\right) \right.\\
&& \hspace{1cm}\left.\otimes 2\exp\left(-\frac{2t_{j+1}}{\rho_{j+1}}\right)
\otimes \cdots \otimes 2\exp\left(-\frac{2t_n}{\rho_n}\right) \right]\\
&=& \sum_{j=1}^n \rho_1 \otimes  \cdots \otimes \rho_{j-1} \otimes
\rho_j' \otimes \rho_{j+1} \otimes \cdots \otimes \rho_n 
= \bm{\varrho}',
\end{eqnarray*}
since  $ \int_0^\infty \dd t_j~~ \exp
\left(-\frac{t_j}{\rho_j}\right) 2\wt{H}_j \exp
\left(-\frac{t_j}{\rho_j}\right)  =\rho_j'$ by equation \eqref{eq4.1}. Next, differentiating the equation \eqref{e5.1} with respect to $\theta$, we are led to  
 $\tr_\mathfrak{H} \bm{\varrho}'(\Theta - \theta) =1$, since
 \newline
$\tr_\mathfrak{H} \bm{\varrho}' = \tr_\mathfrak{H} \sum_{j=1}^n
\rho_1 \otimes \cdots \otimes \rho_{j-1} \otimes \rho_j' \otimes
\rho_{j+1} \otimes \cdots \otimes \rho_n = \sum_{j=1}^n
\tr_{\mathcal{H}_j} \rho_j' =0.$ 
Using this, we get 
\begin{eqnarray*}
1&=&\tr_\mathfrak{H}\left\{ \left(\int_{\bm{0}}^{\bm{\infty}} \bm{\mathrm{d}t}
\exp\left(-\frac{\bm{t}}{\bm{\varrho}}\right) \left( 2^n \bm
{\wt{H}}(\theta)  \right) 
\exp\left(-\frac{\bm{t}}{\bm{\varrho}}\right)\right) (\Theta -\theta)\right\}\\
 &=& \int_{\bm{0}}^{\bm{\infty}} \bm{\mathrm{d}t} ~~~2^n \left\langle
\wt{\bm{H}} \exp\left(-\frac{\bm{t}}{\bm{\varrho}}\right),
\exp\left(-\frac{\bm{t}}{\bm{\varrho}}\right) (\Theta -\theta)
\right\rangle_{\mathcal{B}_2(\mathfrak{H})}  \\
&\le & 2^n  \sqrt{\int_{\bm{0}}^{\bm{\infty}} \bm{\mathrm{d}t} \tr \left\{
\exp\left(-\frac{2\bm{t}}{\bm{\varrho}}\right) {\bm{\wt{H}}^2 }\right\}}\cdot
\sqrt{ \int_{\bm{0}}^{\bm{\infty}} \bm{\mathrm{d}t} ~~\tr \left\{(\Theta -\theta)^2
\exp\left(-\frac{2\bm{t}}{\bm{\varrho}}\right) \right\}}.
\end{eqnarray*}
%Using the fact that the first term of the inequality is $1$ and
Finally, since 
$\int_{\bm{0}}^{\bm{\infty}} \bm{\mathrm{d}t}
\exp\left(-\dfrac{\bm{2t}}{\bm{\varrho}}\right) =2^{-n}\bm{\varrho}$,
the above inequality  can be written as 
\newline $1 \le \sqrt{ \tr_{\mathfrak{H}}(\bm{\varrho} \wt{\bm{H}}^2)  }
\sqrt{ \tr_{\mathfrak{H}}(\bm{\varrho} (\Theta -\theta)^2)  } .  $ 
This implies the inequality 
\[\var_{\Theta, \mathfrak{H}} (\Theta) \ge \frac{1}{\wt{\bm{I}}^{(n)}
(\theta)},\]
where $\wt{\bm{I}}^{(n)} (\theta) = \tr_{\mathfrak{H}}(\bm{\varrho} \wt{\bm{H}}^2)$, the QFI for compound $n$-system. 
If the observables $X_1, \cdots, X_n$ are distributed identically,
i.e. $\rho_j=\rho$ for $j=1, \cdots, n$, then $\wt{\bm{H}} =
\sum_j\underbrace{I \otimes \cdots \otimes  I}_{j-1}
\otimes\wt{H}\otimes \underbrace{I \otimes \cdots \otimes I}_{n-j} $,
which implies $\wt{\bm{I}}^{(n)} (\theta)=n\wt{\bm{I}} (\theta)$, and hence the
above inequality takes the form,
\[\var_{\theta, \mathfrak{H}} (\Theta) \ge \frac{1}{n\wt{\bm{I}} (\theta)}.\]

\subsubsection{CR-bound for LD$_2$: } \label{ss:4.3}

\par In this scenario, if we set the likelihood estimator 
$\wh{L}(\theta) = \int^\theta \wh{H}(\theta)~\dd \theta,$
with 
$ \wh{H}(\theta) = {\rho}^{-\frac{1}{2}} \rho' \rho^{-\frac{1}{2}},$ 
then \eqref{e5.0} implies that 
\[1 = \tr\left[ \rho^{\frac{1}{2}}  \wh{H}(\theta)  {\rho}^{\frac{1}{2}}
(\Theta -\theta)\right] = \left\langle  \wh{H}(\theta)
{\rho}^{\frac{1}{2}},  {\rho}^{\frac{1}{2}} (\Theta
-\theta)\right\rangle_{\mathcal{B}_2},\]
 which by Cauchy-Schwartz inequality gives 
 \begin{eqnarray*}
 1\le \left\| H(\theta) {\rho}^{\frac{1}{2}} \right\|_2 ~ \left\|
{\rho}^{\frac{1}{2}} (\Theta -\theta)\right\|_2 
 &=& \left[\tr \left(  {\rho}^{\frac{1}{2}}  \wh{H}(\theta)
{\rho}^{\frac{1}{2}} \right)\right]^{\frac{1}{2}} \left[\tr \left(
{\rho}^{\frac{1}{2}}  (\Theta -\theta)  {\rho}^{\frac{1}{2}}
\right)\right]^{\frac{1}{2}}\\
 &=& \left(  {\mathbb{E}}_\theta ( \wh{H}(\theta)^2)\right)^{\frac{1}{2}}
\left(\var_{\theta} (\Theta)\right)^{\frac{1}{2}}, \quad \text{ or }
\\
\var_\theta(\Theta) &\ge & \frac{1}{  {\mathbb{E}}_\theta \left(
\wh{H}(\theta)^2\right)}= \frac{1}{\wh{\bm{I}}(\theta)},
 \end{eqnarray*}
 which is the QCR bound in this scenario for a single quantum system. 
  
\par The $n$-system from  of CR-bound can also be calculated directly
using  the structure given in \S\ref{ss:4.1}. Let 
$ \wh{\bm{H}} = \bm{\varrho}^{-\frac{1}{2}}\bm{\varrho}'
\bm{\varrho}^{-\frac{1}{2}}$ which is equal to $\sum_j\underbrace{I
\otimes \cdots \otimes  I}_{j-1} \otimes \wh{H}_j \otimes \underbrace{I
\otimes \cdots \otimes I}_{n-j} $ and 
\[\bm{\varrho}' =  \bm{\varrho}^{\frac{1}{2}}\wh{\bm{H}}
\bm{\varrho}^{\frac{1}{2}} = \sum_j \rho_1 \otimes \cdots \otimes
\rho_{j-1} \otimes \rho_j^{\frac{1}{2}} \wh{H}_j \rho_j^{\frac{1}{2}}
\otimes \rho_{j+1} \otimes \cdots \otimes \rho_n,\]
 then it is easy to check that
\begin{eqnarray*}
1 = \tr_\mathfrak{H} \bm{\varrho}'(\Theta -\theta)
&=& \tr_\mathfrak{H} \left( \bm{\varrho}^{\frac{1}{2}}\wh{\bm{H}}
\bm{\varrho}^{\frac{1}{2}} (\Theta -\theta)\right)
= \left| \left\langle \bm{\varrho}^{\frac{1}{2}}\wh{\bm{H}},
\bm{\varrho}^{\frac{1}{2}} (\Theta
-\theta)\right\rangle_{\mathcal{B}_2} \right|\\
&\le & \sqrt{\tr_\mathfrak{H}\bm{\varrho}^{\frac{1}{2}}\wh{\bm{H}}^2
\bm{\varrho}^{\frac{1}{2}}} \sqrt{ \tr_\mathfrak{H} (\Theta -\theta)
\bm{\varrho}  (\Theta -\theta)} \quad \text{ by Cauchy-Schwartz,}\\
&=& \sqrt{\tr_\mathfrak{H}(\bm{\varrho}\wh{\bm{H}}^2) ~~\tr_\mathfrak{H}(
\bm{\varrho} (\Theta -\theta)^2 )}.
\end{eqnarray*}
Thus $1 \le \sqrt{ \wh{\bm{I}}^{(n)}(\theta) \var_{\theta, \mathfrak{H}}
(\Theta)}$, where $\wh{\bm{I}}^{(n)}(\theta) =
\tr_\mathfrak{H}(\bm{\varrho}\wh{\bm{H}}^2)$ is the quantum Fisher
information for compound $n$-systems. Thus 
\[\var_{\theta, \mathfrak{H}} (\Theta) \ge
\frac{1}{\wh{\bm{I}}^{(n)}(\theta)}.\]
Furthermore, if the observables $X_1, \cdots, X_n$ are distributed
identically, i.e. $\rho_j=\rho$ so that  $\wh{H}_j=\wh{H}$ for
all $j=1, \cdots, n$ and $\tr_\mathfrak{H} \bm{\varrho} \wh{\bm{H}}^2 = n
\wh{\bm{I}}(\theta)$ and 
\[\var_{\theta, \mathfrak{H}} (\Theta) \ge \frac{1}{n\wh{\bm{I}}(\theta)}.\] 
  
  \subsubsection{CR bound for LD$_3$} 
  In finite dimension, we have seen that the SLD is defined implicitly by the equation  $\wt{\wt{H}} \rho + \rho \wt{\wt{H}} = 2 \rho'$ and on the other hand, if one measures an observable $\Theta$ for the value of $\theta$ such that it is unbiased estimator in the state $\rho_\theta$, then one gets as before, 
  \[\tr \rho' (\Theta - \theta) =1.\]
 Eliminating $\rho'$ in favour of $\wt{\wt{H}}$ one gets 
 \begin{eqnarray*}
1&=& \frac{1}{2} \tr \left[ (\wt{\wt{H}} \rho + \rho \wt{\wt{H}})(\Theta - \theta) \right] \\
2 &=& \tr \left( (\wt{\wt{H}} \rho^{\frac{1}{2}} ) ( \rho^{\frac{1}{2}} (\Theta - \theta)\right ) + \tr \left((\rho^{\frac{1}{2}} \wt{\wt{H}}) ( (\Theta - \theta)  \rho^{\frac{1}{2}}\right) \\
&=& \left\langle \rho^{\frac{1}{2}} \wt{\wt{H}}, \rho^{\frac{1}{2}} (\Theta - \theta) \right\rangle_2 + \left\langle \wt{\wt{H}} \rho^{\frac{1}{2}}, (\Theta -\theta) \rho^{\frac{1}{2}} \right\rangle_2\\
&\le& \left\| \rho^{\frac{1}{2}} \wt{\wt{H}} \right\|_2 \left\| \rho^{\frac{1}{2}} (\Theta - \theta) \right\|_2  + \left\| \wt{\wt{H}} \rho^{\frac{1}{2}}\right\|_2 \left\|  (\Theta -\theta) \rho^{\frac{1}{2}} \right\|_2 = 2 \tr (\rho \wt{\wt{H}}^2) \tr(\rho(\Theta - \theta)^2),
 \end{eqnarray*}
 leading to $\var_\theta (\Theta) \ge \dfrac{1}{\tr (\rho \wt{\wt{H}}^2)} = \dfrac{1}{\wt{\wt{\bm{I}}}(\theta)}$. 
The calculations for $n$-systems follow a similar set of steps as given in the previous computations. 

\subsubsection{Quantum CR bound for BvNLD ($\dim \mathcal{H} < \infty$).} As earlier, we start with the equation \eqref{kmb} connecting the LD operator $H$ with $\rho'$ as $\rho' = \int_0^1 \rho^t H \rho^{1-t} ~\dd t$. 

\begin{theorem}\label{th4.5}
Let $\{\rho_\theta: \theta \in I\} $ be a family of states satisfying Assumption \ref{as3.4}. Then,
\begin{enumerate}[(i)]
\item the variance of the estimator operator $\Theta$ satisfies a quantum CR bound with a modified QFI.
\item Furthermore, the likelihood operator associated with the BvN LD satisfies the maximality property in expectation. 
\item Using the Kullback - Leibler expression for quantum relative entropy, one can also derive the new QFI associated with the BvN LD. 
\end{enumerate}
\end{theorem}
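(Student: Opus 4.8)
\emph{Strategy.} The plan is to run all three parts through the Kubo--Mori--Bogoliubov (KMB) bilinear form
\[
\langle A,B\rangle_{\mathrm{KM}}\ :=\ \int_0^1 \tr\!\big(\rho^{t}A^{*}\rho^{1-t}B\big)\,\dd t\ =\ \tr\big(A^{*}\,\mathcal J_{\rho}B\big),\qquad \mathcal J_{\rho}B:=\int_0^1\rho^{t}B\,\rho^{1-t}\,\dd t,
\]
which is precisely the inner product attached to the implicit KMB equation \eqref{kmb}, since \eqref{kmb} reads $\rho'=\mathcal J_{\rho}H$. On self-adjoint operators $\langle\,\cdot\,,\,\cdot\,\rangle_{\mathrm{KM}}$ is a genuine inner product, positive-definite because $\rho_\theta$ is faithful, and $\mathcal J_{\rho}$ is self-adjoint and positive for the trace pairing. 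I would then \emph{define} the modified variance of a self-adjoint $X$ as $\wt{\var}_{\theta}(X):=\langle X-\tr(\rho X),X-\tr(\rho X)\rangle_{\mathrm{KM}}$ and the BvN quantum Fisher information as
\[
\bm I_{\mathrm{BvN}}(\theta)\ :=\ \wt{\var}_{\theta}\!\big(H(\theta)\big)\ =\ \langle H,H\rangle_{\mathrm{KM}}\ =\ \tr(\rho' H),
\]
the last equality being \eqref{kmb} read backwards together with $\tr(\rho H)=0$ from Theorem~\ref{th4.1}. Diagonalising $\rho=\sum_k\lambda_k P_k$ shows $\langle X,X\rangle_{\mathrm{KM}}=\sum_{j,k}\frac{\lambda_j-\lambda_k}{\ln\lambda_j-\ln\lambda_k}|X_{jk}|^{2}$ (the $j=k$ term being $\lambda_j|X_{jj}|^2$), which is $\le\sum_{j,k}\frac{\lambda_j+\lambda_k}{2}|X_{jk}|^{2}=\tr(\rho X^{2})$ by logarithmic-mean $\le$ arithmetic-mean, so $\wt{\var}_{\theta}\le\var_{\theta}$ always.

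\emph{Part (i).} From unbiasedness $\tr(\rho_\theta\Theta)=\theta$ and $\tr\rho_\theta'=0$, differentiation gives $\tr\big(\rho'(\Theta-\theta)\big)=1$, i.e.\ \eqref{e5.0}. Substituting $\rho'=\mathcal J_{\rho}H$ and using self-adjointness of $H$ and of $\Theta-\theta$ rewrites this as $\langle H,\Theta-\theta\rangle_{\mathrm{KM}}=1$; Cauchy--Schwarz for $\langle\,\cdot\,,\,\cdot\,\rangle_{\mathrm{KM}}$ then yields $1\le\langle H,H\rangle_{\mathrm{KM}}\langle\Theta-\theta,\Theta-\theta\rangle_{\mathrm{KM}}=\bm I_{\mathrm{BvN}}(\theta)\,\wt{\var}_{\theta}(\Theta)$, i.e.\ the modified Cram\'er--Rao bound $\wt{\var}_{\theta}(\Theta)\ge 1/\bm I_{\mathrm{BvN}}(\theta)$, and hence also $\var_{\theta}(\Theta)\ge 1/\bm I_{\mathrm{BvN}}(\theta)$ since $\wt{\var}_{\theta}\le\var_{\theta}$. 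The compound $n$-system bound follows by the same Cauchy--Schwarz argument applied to $\langle\,\cdot\,,\,\cdot\,\rangle_{\mathrm{KM}}$ on $\mathcal B_2(\mathfrak H)$: because $\ln\bm{\varrho}=\sum_j I\otimes\cdots\otimes\ln\rho_j\otimes\cdots\otimes I$, the compound BvN LD $\bm H$ is a sum of single-site terms, $\bm{\varrho}'=\mathcal J_{\bm{\varrho}}(\bm H)$ still holds, and one obtains $\wt{\var}_{\theta,\mathfrak H}(\Theta)\ge 1/\bm I_{\mathrm{BvN}}^{(n)}(\theta)$ with $\bm I_{\mathrm{BvN}}^{(n)}=\tr(\bm{\varrho}'\bm H)$, equal to $n\,\bm I_{\mathrm{BvN}}$ in the i.i.d.\ case, exactly as in \S\ref{ss:4.1}.

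\emph{Parts (ii) and (iii).} The likelihood operator is $L(\theta)=\ln\rho_\theta$, so by \eqref{e4.2}, $L'=H$ and $L''=H'$. Differentiating $\tr(\rho H)=0$ in $\theta$ gives $\tr(\rho'H)+\tr(\rho H')=0$, hence $\mathbb E_{\theta}\big(L''(\theta)\big)=\tr(\rho H')=-\tr(\rho'H)=-\bm I_{\mathrm{BvN}}(\theta)\le 0$, which is the claimed maximality of $L$ in expectation, in exact parallel with $\mathbb E_{\theta}(\ell'')=-I(\theta)$ of Section~\ref{s2}. For (iii), fix $\theta$ and put $g(\eta):=S(\rho_\theta\,\|\,\rho_\eta)=\tr(\rho_\theta\ln\rho_\theta)-\tr(\rho_\theta\ln\rho_\eta)$, the Umegaki (Kullback--Leibler) relative entropy. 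Using the integral representation \eqref{kmb2} to differentiate $\tr(\rho_\theta\ln\rho_\eta)$ in $\eta$ gives $g'(\eta)=-\tr\big(\rho_\theta H(\eta)\big)$ and $g''(\eta)=-\tr\big(\rho_\theta H'(\eta)\big)$; evaluating at $\eta=\theta$ yields $g(\theta)=0$, $g'(\theta)=-\tr(\rho_\theta H(\theta))=0$ (Theorem~\ref{th4.1}), and, by the identity obtained in (ii), $g''(\theta)=-\tr(\rho_\theta H'(\theta))=\tr(\rho_\theta'H(\theta))=\bm I_{\mathrm{BvN}}(\theta)$. Therefore $S(\rho_\theta\,\|\,\rho_{\theta+\vep})=\tfrac12\,\bm I_{\mathrm{BvN}}(\theta)\,\vep^{2}+o(\vep^{2})$: the BvN-QFI is the Hessian at the diagonal of the quantum relative entropy, which in particular re-derives $\bm I_{\mathrm{BvN}}(\theta)\ge 0$.

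\emph{Main obstacle.} In finite dimensions every step above is immediate; the real work is in $\dim\mathcal H=\infty$, where $H$ is unbounded. One must show that $\rho^{t}H\rho^{1-t}\in\mathcal B_1(\mathcal H)$ for $t\in(0,1)$ with $\int_0^1\|\rho^{t}H\rho^{1-t}\|_1\,\dd t<\infty$, so that \eqref{kmb} holds genuinely in $\mathcal B_1(\mathcal H)$ and $\bm I_{\mathrm{BvN}}(\theta)=\langle H,H\rangle_{\mathrm{KM}}$ is finite; that $\Theta-\theta$ pairs admissibly against $H$ in $\langle\,\cdot\,,\,\cdot\,\rangle_{\mathrm{KM}}$ and that the trace/integral interchanges and Cauchy--Schwarz remain valid; and that differentiation commutes with the trace (in (ii)) and with both the trace and the $u$-integral of \eqref{kmb2} (in (iii)), so that $H'$ itself is well defined. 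All of this should follow from the real-analyticity hypothesis of Assumption~\ref{as3.4} together with the summability conditions of Theorem~\ref{new} and the boundedness criterion for $H$ in Section~\ref{bvn}, but carefully assembling these technical justifications is the heart of the proof.
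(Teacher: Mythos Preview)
Your proof is correct and follows essentially the same route as the paper: the KM inner product you introduce coincides with the paper's modified variance $\breve V_\rho$, the Cauchy--Schwarz step in (i) and the differentiation of $\tr(\rho H)=0$ in (ii) are identical to the paper's, and your second-order expansion of $S(\rho_\theta\|\rho_\eta)$ in $\eta$ is a minor (and slightly cleaner) variant of the paper's expansion of $S(\rho_{\theta+\vep}\|\rho_\theta)$ in $\vep$. Your extra observation that $\wt{\var}_\theta\le\var_\theta$ via the logarithmic/arithmetic mean inequality, hence also $\var_\theta(\Theta)\ge 1/\bm I_{\mathrm{BvN}}(\theta)$, is a correct bonus not in the paper; note, however, that this theorem sits in the finite-dimensional subsection, so the infinite-dimensional technicalities you flag as the ``main obstacle'' are not actually needed here.
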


\begin{proof}
(i) The equation  \eqref{e5.0}:  $\tr \rho' (\Theta -\theta) =1,$
leads to 
\begin{align}
1 &= \left| \tr (\rho' (\Theta -\theta)) \right|^2 = \left| \tr \int_0^1 \rho^t H \rho^{1-t} (\Theta - \theta) \dd t \right|^2 \nonumber\\
&= \left| \int_0^1 \tr \left[ \left( \rho^{\frac{t}{2}} H \rho^{\frac{1- t}{2}} \right) \left( \rho^{\frac{1- t}{2}} (\Theta - \theta) \rho^{\frac{t}{2}} \right) \right]~\dd t \right|^2 \nonumber \\
&\le \left[ \int_0^1  \dd t~ \left\| \rho^{\frac{1- t}{2}} H \rho^{\frac{t}{2}} \right\|_2^2 \right] \left[ \int_0^1  \dd t~ \left\| \rho^{\frac{1- t}{2}} (\Theta -\theta) \rho^{\frac{t}{2}} \right\|_2^2 \right] \label{bvncr1}
\end{align}
If we now introduce a different definition of variance of an observable $Y$ in quantum theory, viz. 
\[\breve{V}_\rho(Y) \equiv \int_0^1 \left\| \rho^{\frac{1- t}{2}} (Y - \tr(\rho Y)) \rho^{\frac{t}{2}} \right\|_2^2 ~\dd t,\]
(as has been done in Hiai-Petz \cite{hiai-petz}, and Amari- Nagayoka \cite{AN}, which coincides with the usual definition of variance in commutative theory),  and if we designate the associated definition of QFI: ${\bm{I}}(\theta)\equiv \breve{V}_\rho(H)=  \int_0^1 \dd t \|\rho^{\frac{1-t}{2}} H \rho^{\frac{t}{2}}\|_2^2 $,   since $0= \tr(\rho')
     =   \int_0^1 \tr ( \rho^t H \rho^{1-t} ) =  \tr ( \rho H )$. Then we have the following quantum CR-bound: 
\[ \breve{V}_\rho(\Theta) \ge \breve{{\bm{I}}}(\theta)^{-1}.\]

(ii)  Since by Theorem \ref{kmb-1} the expression $H= \dfrac{\dd \ln\rho_\theta}{\dd \theta}$ satisfies equation \eqref{kmb}: $\rho'= \int_0^1 \dd t~ \rho^t H \rho^{1-t}$, we note that, $0= (\tr \rho)' = \tr \rho' = \tr \int_0^1 \dd t (\rho^t H \rho^{1-t} ) = \tr (\rho H)$. Thus 
\begin{align}
\mathbb{E}_\theta (H')  \equiv \tr (\rho H') = \tr \left( \rho \frac{\dd^2\ln \rho_\theta}{\dd\theta^2}\right)
&=(\tr \rho H)' - \tr(\rho' H) = -\tr(\rho' H) \nonumber \\
&=-\int_0^1 \tr(\rho^t H \rho^{1-t} H) \equiv -I(\theta), \label{4.11}
\end{align}
which is the QFI for the BvN scenario  (as per the new definition above). Thus, \newline $\mathbb{E}_\theta(L'')= -I(\theta)<0$, thereby justifying the terminology of $L$ being a maximal likelihood operator in expectation for the estimation of $\theta$. 

\par 
(iii) Given two states $\rho$ and $\sigma$, the Kullback - Leibler quantum relative entropy (entropy of $\rho$ relative to $\sigma$) \cite{OP} is defined as:  $S_{rel} (\rho\|\sigma)= \tr (\rho(\ln \rho - \ln \sigma))$.  In order that this makes sense (even when $\dim \mathcal{H} = \infty$), we assume as before $\ker \rho = \ker \sigma = \{0\}$. The relation between relative entropy and QFI from the BvN LD scenario is as follows \cite{haya02}:
\[\frac{1}{\vep^2} \tr  \left[\rho_{\theta+ \vep} (\ln \rho_{\theta+ \vep} - \ln \rho_\theta) \right] = \tr \frac{(\rho_{\theta+ \vep} - \rho_\theta)}{\vep} \frac{(\ln \rho_{\theta+ \vep} - \ln\rho_\theta)}{\vep} + \tr \rho_\theta\frac{(\ln \rho_{\theta+ \vep} - \ln\rho_\theta)}{\vep^2}. \]
As $\vep \to 0$, the first term converges to $\tr \rho_\theta' \frac{\dd \ln\rho_\theta}{\dd \theta} = \tr(\rho' H)$. Since $\tr\rho H= \tr \left(\rho \frac{\dd \ln \rho_\theta}{\dd \theta} \right) =0$,
\begin{align*}
\lim_{\vep \to 0} \frac{1}{\vep^2} \tr \left[ \rho_{\theta +\vep} \left( \ln \rho_{\theta +\vep} - \ln \rho_\theta \right)\right] &=\tr (\rho' H) + \lim_{\vep \to 0} \tr \left[\rho_\theta \left\{\frac{(\ln \rho_{\theta+\vep} - \ln \rho_\theta) - \vep \frac{\dd\ln\rho_\theta}{\dd \theta}}{\vep^2} \right\}\right]\\
&= \tr(\rho' H) + \frac{1}{2} \tr \left( \rho \frac{\dd^2 \ln \rho}{\dd \theta^2} \right) = \tr(\rho' H) + \frac{1}{2} \tr(\rho H')\\
&= \tr \left( \int_0^1 \rho^t H \rho^{1-t} H \dd t\right) +  \frac{1}{2} \tr(\rho H') = \frac{1}{2} \bm{I}(\theta),
\end{align*}
where we have used from \eqref{4.11} that $\tr(\rho H') = - \tr (\rho' H) = - \bm{I}(\theta)$.
Thus  
\newline $\lim_{\vep \to 0} \frac{2}{\vep^2} S_{rel} (\rho_{\theta+\vep}\| \rho_\theta) = \bm{I}(\theta)$. 
\end{proof}

\begin{rem}
\begin{enumerate}[(i)]
\item  If $\rho$ commutes with $\rho'$, all the LD's are equal with each other, and therefore $I(\theta)$  equals all other three QFI's as in Corollary \ref{cor4.4}. 
\item For the BvN LD scenario, $\bm{H}$ for the composite $n$-system is 
\[ \bm{H} = \sum_{j=1}^n I \otimes \cdots \otimes I \otimes \underbrace{H}_{j\text{-th position}} \otimes I \otimes \cdots \otimes I,\]
just as in equation \eqref{crld1}, and therefore $\bm{I}^{(n)}(\theta) = n \bm{I}(\theta)$. Exactly the same consideration is valid for the SLD case and both lead to identical estimates.  
\item Thus, it may be mentioned that only with this modified definition of QFI, we have been able to prove the maximality of the likelihood operator for the  BvN LD case. 

\end{enumerate}
\end{rem}

%%%%%%%%%%%%%%%%%%%%%%%%%%%%
\section{LD and QFI for infinite dimensional case}\label{s5}

\par Some of the results given in the earlier sections, in particular Lemma \ref{l1}, are valid even if
the $\mathcal{H}$ is of infinite dimension. 
However, the definitions of likelihood operator \eqref{e4.1}, logarithmic derivative, and quantum Fisher
Information 
are valid as well, with a caveat that the operators themselves may be
unbounded, even if they exist.

\begin{theorem}
Let $\rho_.: I \ni \theta \mapsto \rho_\theta \in \mathcal{B}_{1,+}(\mathcal{H})$ be weak*-differentiable family of quantum states satisfying Assumption 3.2.  Define the following sesquilinear quadratic forms:
\begin{enumerate}[(i)]
\item $F_{{H}} [u,v]= \int_0^\infty \dd x \left\langle \frac{1}{\rho +x} u, (\rho' + \|\rho'\| +1)\frac{1}{\rho +x} v \right \rangle.$
\item $F_{\wt{H}} [u,v] = \int_0^\infty \dd s \left \langle e^{- \frac{s}{2\rho}} u, (\rho' + \|\rho'\| +1)  e^{- \frac{s}{2\rho}} v \right \rangle. $
\item $F_{\wh{H}}[u,v] = \left\langle \rho^{-\frac{1}{2}} u, (\rho' + \|\rho'\| +1) \rho^{-\frac{1}{2}} v \right \rangle. $
\item 
$ F_{\wt{\wt{H}}} [u,v] = \int_0^\infty \dd s \left \langle e^{-s \frac{\rho}{2}} u, (\rho' + \|\rho'\| +1)  e^{-s \frac{\rho}{2}} v \right \rangle .$

\end{enumerate}
For the cases (i), (iii), and (iv), the form domain is defined to be $D(\rho^{-\frac{1}{2}})$, and for the second case, the form domain is $D(\rho^{-1})$. Corresponding to each of the above forms, there exist unique self-adjoint operators denoted respectively by $H,  \, \wt{H}, \, \wh{H},$ and $\wt{\wt{H}}$. 
\end{theorem}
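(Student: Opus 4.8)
The plan is to bring each of the four forms under the representation theorem for closed, densely defined, semibounded symmetric sesquilinear forms \cite[Ch.~VI]{kato}: once such a form is recognised, it is represented by a unique self-adjoint operator. The unifying device is to write $F_\bullet[u,v]=\langle V_\bullet u,V_\bullet v\rangle$ for an explicit linear map $V_\bullet$ into an auxiliary Hilbert space and to check that $V_\bullet$ is a closed operator; the self-adjoint operator attached to the form is then $V_\bullet^*V_\bullet$.

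First I would record the preliminaries. Put $B:=\rho'+\|\rho'\|+1$. Since $\rho'$ is trace-class and self-adjoint, it is bounded and $-\|\rho'\|\le\rho'\le\|\rho'\|$, so $B$ is bounded, self-adjoint, and $I\le B\le(2\|\rho'\|+1)I$; hence $B$ is boundedly invertible with a bounded, positive, boundedly invertible square root $B^{1/2}$. Writing $\rho=\sum_k\lambda_kP_k$ with $0<\lambda_k\le1$ and $\ker\rho=\{0\}$: the operators $\rho^{-1/2}$ and $\rho^{-1}$ are densely defined, positive and self-adjoint (hence closed), with $D(\rho^{-1})\subseteq D(\rho^{-1/2})$ and both dense; $(\rho+x)^{-1}$ is bounded and self-adjoint for $x>0$; $\{e^{-s\rho^{-1}/2}\}_{s\ge0}$ is a strongly continuous self-adjoint contraction semigroup; and $\{e^{-s\rho/2}\}_{s\ge0}$ is a norm-continuous self-adjoint contraction semigroup. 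Since $B$ and each of these ``kernels'' are self-adjoint, every $F_\bullet$ is symmetric, $F_\bullet[u,v]=\overline{F_\bullet[v,u]}$, and nonnegative, e.g.\ $F_{H}[u,u]=\int_0^\infty\|B^{1/2}(\rho+x)^{-1}u\|^2\,\dd x\ge0$ and likewise for the others.

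The substantive point is finiteness on the stated domains together with closedness. Using the spectral decomposition, Tonelli's theorem, and the elementary identities $\int_0^\infty(\lambda+x)^{-2}\,\dd x=\lambda^{-1}$, $\int_0^\infty e^{-s\lambda}\,\dd s=\lambda^{-1}$, and $\int_0^\infty e^{-s/\lambda}\,\dd s=\lambda$, one obtains for $\bullet\in\{H,\wh{H},\wt{\wt{H}}\}$ and $u\in D(\rho^{-1/2})$
\[
\|\rho^{-1/2}u\|^2\ \le\ F_\bullet[u,u]\ \le\ (2\|\rho'\|+1)\,\|\rho^{-1/2}u\|^2 ,
\]
and for the second case, for $u\in D(\rho^{-1})$,
\[
\langle u,\rho u\rangle\ \le\ F_{\wt{H}}[u,u]\ \le\ (2\|\rho'\|+1)\,\langle u,\rho u\rangle ,
\]
the two bounds in each line coming from $I\le B\le(2\|\rho'\|+1)I$. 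Hence for $\bullet\in\{H,\wh{H},\wt{\wt{H}}\}$ the form norm $(F_\bullet[u,u]+\|u\|^2)^{1/2}$ is equivalent to the graph norm of the closed operator $\rho^{-1/2}$, so $(D(\rho^{-1/2}),F_\bullet)$ is complete, i.e.\ $F_\bullet$ is closed with form domain exactly $D(\rho^{-1/2})$; equivalently the densely defined maps $V_{\wh{H}}u=B^{1/2}\rho^{-1/2}u$, $(V_{H}u)(x)=B^{1/2}(\rho+x)^{-1}u$, $(V_{\wt{\wt{H}}}u)(s)=B^{1/2}e^{-s\rho/2}u$ (valued in $\mathcal H$, respectively in $L^2((0,\infty);\mathcal H)$) are closed operators, since $B^{1/2}$ is bounded with bounded inverse and $\rho^{-1/2}$ is closed, so that the graph norm of $V_\bullet$ is equivalent to that of $\rho^{-1/2}$. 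For case (ii) the same estimate shows $F_{\wt{H}}$ is bounded on $D(\rho^{-1})$, hence closable, and $V_{\wt{H}}u=\big(s\mapsto B^{1/2}e^{-s\rho^{-1}/2}u\big)$ extends by continuity to a bounded operator $\overline{V_{\wt{H}}}$ on $\mathcal H$.

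Finally one invokes the representation theorem. For a closed, densely defined operator $V$ between Hilbert spaces, $V^*V$ is self-adjoint and nonnegative (von Neumann's theorem), $D\big((V^*V)^{1/2}\big)=D(V)$, and $\langle Vu,Vv\rangle=\langle(V^*V)^{1/2}u,(V^*V)^{1/2}v\rangle$; by \cite[Ch.~VI]{kato} this $V^*V$ is the unique self-adjoint operator representing the closed, symmetric, nonnegative form $u,v\mapsto\langle Vu,Vv\rangle$ (with $D(V)$ a form core). Taking $V=V_{H},V_{\wh{H}},V_{\wt{\wt{H}}}$ — and $V=\overline{V_{\wt{H}}}$ in case (ii) — yields the asserted unique self-adjoint operators $H,\wh{H},\wt{\wt{H}}$ and $\wt{H}$. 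I expect the main obstacle to be the displayed two-sided spectral estimate: getting it with the correct exponents is what simultaneously yields closedness and pins down the natural form domain, and it requires care in justifying, via Tonelli / monotone convergence, the interchange of the spectral sum with the parameter integral; once that is settled, the remainder is the standard machinery of closed semibounded forms.
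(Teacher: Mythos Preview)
Your argument is correct and follows essentially the same route as the paper: for cases (i), (iii), (iv) you bound $F_\bullet[u,u]$ below by $\|\rho^{-1/2}u\|^2$ (the paper uses only this lower bound), conclude that the form is closed on $D(\rho^{-1/2})$ because $\rho^{-1/2}$ is closed, and invoke Kato's representation theorem; your $V^*V$ packaging is a standard equivalent formulation of the same step. One place where you are actually more careful than the paper: for case (ii) the paper merely asserts the reasoning is ``very similar'', but as your spectral computation shows, the integral there produces $\langle u,\rho u\rangle$ rather than $\|\rho^{-1/2}u\|^2$, so the form is \emph{bounded} and cannot be closed on the proper dense subspace $D(\rho^{-1})$ --- one must pass to its closure on all of $\mathcal H$, exactly as you do by extending $V_{\wt H}$ to a bounded operator.
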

\begin{proof}
Among these four statements, we give the proof of the statement (iv).   The spirit of the reasoning for the other three cases is very similar. 
Since $\rho_\theta$ is a family of quantum states, and it is assumed that $\ker(\rho_\theta) = \{0\}$, all its eigenvalues $\lambda_j$'s are in the interval $0< \lambda_j<1$. Thus $\rho^{-\frac{1}{2}} >1$ and $\rho' + \|\rho'\| +1 \ge 1$ enabling the definition of the quadratic form $F_{\wt{\wt{H}}} [u,v]$ as 
\[ F_{\wt{\wt{H}}} [u,v] = \int_0^\infty \dd s \left \langle e^{-s \frac{\rho}{2}} u, (\rho' + \|\rho'\| +1)  e^{-s \frac{\rho}{2}} v \right \rangle \quad \text{ with } u,\, v \in D(\rho^{-\frac{1}{2}}). \]
It is easy to see that this is a positive-definite symmetric form giving rise to an inner product with respect to which $D(\rho^{-\frac{1}{2}})$ is a Hilbert space, since $F_{\wt{\wt{H}}} [u,u] \ge  \|\rho^{-\frac{1}{2}} u\|^2 > \|u\|^2$, and since $\rho^{-\frac{1}{2}}$ is a closed self-adjoint operator. Therefore $F_{\wt{\wt{H}}}$ is a closed symmetric form bounded below, and by Theorem 2.1 of \cite[pages 309 -- 323]{kato}, this defines a self-adjoint operator $\wt{\wt{K}}$ with %(by subtracting $(\|\rho'\| +1)I$) as the associated LD. 
\[ F_{\wt{\wt{H}}}[u,v] = \langle u, \wt{\wt{K}} v\rangle,\]
 for $u \in D( \rho^{-\frac{1}{2}})$ and $v \in D(\wt{\wt{K}}) \subseteq D(\rho^{-\frac{1}{2}})$. Finally, $\wt{\wt{H}}$ is set as $\wt{\wt{K}} -(\|\rho'\| +1)$ with \newline $D(\wt{\wt{H}}) = D(\wt{\wt{K}})$. 
\end{proof}

\begin{theorem}
The operators  $\rho^{\frac{1}{2}} \wh{H}(\theta) \rho^{\frac{1}{2}}$, 
$\rho^{\frac{1}{2}}H(\theta) \rho^{\frac{1}{2}}$, and $\rho^{\frac{1}{2}}\wt{\wt{H}}(\theta)\rho^{\frac{1}{2}}$ are bounded
and trace class operators.  
\end{theorem}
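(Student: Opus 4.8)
The plan is to reduce all three assertions to the single fact that $\rho'\equiv\rho_\theta'$ is trace class and self‑adjoint (true by the weak$^{*}$‑differentiability hypothesis; see item~(i) of Section~\ref{s3}). The first step is to rewrite each of the three products. Sandwiching the (regularized) sesquilinear forms of the preceding theorem between $\rho^{1/2}$ on each side, using $\rho^{-1/2}\rho^{1/2}=I$ (legitimate since $\ker\rho_\theta=\{0\}$), and undoing the regularization via the spectral identities
\[
\int_0^\infty\frac{dx}{(\rho+x)^2}=\rho^{-1}=\int_0^\infty e^{-x\rho}\,dx ,
\]
which make the $\|\rho'\|$‑dependent terms cancel, I would obtain
\[
\rho^{1/2}\wh H\rho^{1/2}=\rho',\qquad
\rho^{1/2}H\rho^{1/2}=\int_0^\infty K_x\,\rho'\,K_x\,dx,\qquad
\rho^{1/2}\wt{\wt H}\rho^{1/2}=\int_0^\infty K_x\,\rho'\,K_x\,dx ,
\]
with $K_x=\rho^{1/2}(\rho+x)^{-1}$ in the BvN case and $K_x=\rho^{1/2}e^{-x\rho/2}$ in the SLD case, each a bounded positive function of $\rho$. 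The structural point — again from the two spectral identities — is that in both cases $\int_0^\infty K_x^{2}\,dx=I$ (equivalently, in the eigenbasis of $\rho$ these are the Schur multipliers of $\rho'$ with the contractive symbols $\sqrt{\lambda\mu}\,\tfrac{\ln\lambda-\ln\mu}{\lambda-\mu}$ and $\tfrac{2\sqrt{\lambda\mu}}{\lambda+\mu}$, whose diagonal equals $1$).

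The trace‑norm estimate is then short. Write $\rho'=|\rho'|^{1/2}\,\operatorname{sgn}(\rho')\,|\rho'|^{1/2}$ by the spectral theorem, with $\operatorname{sgn}(\rho')$ bounded self‑adjoint and $\|\operatorname{sgn}(\rho')\|\le1$; then $K_x\rho'K_x=(K_x|\rho'|^{1/2})\operatorname{sgn}(\rho')(|\rho'|^{1/2}K_x)$, and since $|\rho'|^{1/2}$ is Hilbert–Schmidt, the Hölder inequality for Schatten norms $\|ABC\|_1\le\|A\|_2\|B\|\,\|C\|_2$ yields
\[
\|K_x\rho'K_x\|_1\le\|K_x|\rho'|^{1/2}\|_2\,\||\rho'|^{1/2}K_x\|_2=\tr\big(|\rho'|K_x^{2}\big).
\]
Integrating and interchanging the nonnegative integral with the trace (Tonelli, via an orthonormal‑basis expansion),
\[
\Big\|\int_0^\infty K_x\rho'K_x\,dx\Big\|_1\le\int_0^\infty\tr\big(|\rho'|K_x^{2}\big)\,dx
=\tr\Big(|\rho'|\!\int_0^\infty K_x^{2}\,dx\Big)=\tr|\rho'|=\|\rho'\|_1<\infty .
\]
Hence $\rho^{1/2}H\rho^{1/2}$ and $\rho^{1/2}\wt{\wt H}\rho^{1/2}$ are trace class with trace norm at most $\|\rho'\|_1$, while $\rho^{1/2}\wh H\rho^{1/2}=\rho'$ is trace class with trace norm exactly $\|\rho'\|_1$; and trace class operators are, of course, bounded.

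The step I expect to be the main obstacle is making the first move rigorous, i.e.\ the precise sense in which these products "close up". Because $H,\wh H,\wt{\wt H}$ are, by the preceding theorem, only self‑adjoint \emph{unbounded} operators defined through forms on $D(\rho^{-1/2})$, one must first verify that $\rho^{1/2}\wh H\,v=\rho'\rho^{-1/2}v$ for every $v\in D(\wh H)$ — by unwinding the definition of the associated form operator, testing against vectors $\rho^{1/2}\psi$, and tracking that the regularizing term is annihilated — and likewise that $\rho^{1/2}Hv$ and $\rho^{1/2}\wt{\wt H}v$ equal the corresponding integral operators applied to $\rho^{-1/2}v$. One then notes that $D(\wh H)$ is dense in the form domain $D(\rho^{-1/2})$ in the form norm, hence a core for $\rho^{-1/2}$, so $\rho^{-1/2}\big(D(\wh H)\big)$ is dense in $\mathcal H$; thus the densely‑defined product $\rho^{1/2}\wh H\rho^{1/2}$ admits $\rho'$ as its unique bounded extension, and similarly for the other two. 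Two side remarks: LD$_1$ is (correctly) absent from the statement, since $\rho^{1/2}\wt H\rho^{1/2}=\tfrac12(\rho^{-1/2}\rho'\rho^{1/2}+\rho^{1/2}\rho'\rho^{-1/2})$ contains an uncompensated $\rho^{-1/2}$ and need not even be bounded; and the Hölder step cannot be replaced by the cruder bound $\int_0^\infty\|K_x\|^2\|\rho'\|_1\,dx$, which diverges at $x=0$ — absorbing $|\rho'|^{1/2}$ into the Hilbert–Schmidt factors so that $\int_0^\infty K_x^{2}\,dx=I$ can be used is exactly what makes the estimate go through.
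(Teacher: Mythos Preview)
Your proposal is correct and follows essentially the same route as the paper: write $\rho^{1/2}\wh H\rho^{1/2}=\rho'$, express the other two as $\int_0^\infty K_x\rho'K_x\,dx$ with $K_x=\rho^{1/2}(\rho+x)^{-1}$ resp.\ $K_x=\rho^{1/2}e^{-x\rho/2}$, factor $\rho'$ through $|\rho'|^{1/2}$, and use the Hilbert--Schmidt/H\"older bound together with $\int_0^\infty K_x^2\,dx=I$ to get $\|\,\cdot\,\|_1\le\|\rho'\|_1$. The only cosmetic differences are that the paper treats the BvN and SLD cases separately (using an eigenbasis of $|\rho'|^{1/2}$ in the latter) rather than your unified $K_x$ notation, and that you are more explicit than the paper about the form-domain justification for passing from the unbounded $H,\wh H,\wt{\wt H}$ to the sandwiched products.
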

\begin{proof}
By assumption,  $\rho^{\frac{1}{2}} \wh{H}(\theta) \rho^{\frac{1}{2}} =
\rho_\theta' \in \mathcal{B}_1(\mathcal{H})$ proving the first part of the theorem.\\
In the BvN-LD, we have  
\[\rho^{\frac{1}{2}}{H}(\theta) \rho^{\frac{1}{2}} = \int_0^\infty
\dd u \frac{\rho^{\frac{1}{2}}}{\rho + u} \rho_\theta'
\frac{\rho^{\frac{1}{2}}}{\rho + u}.\]
Note that, by using polar decomposition  of $\rho' = |\rho'|^{\frac{1}{2}} V |\rho'|^{\frac{1}{2}}$ where $V$ is a partial isometry, we have 
\[\int_0^\infty
\dd u \left\|\frac{\rho^{\frac{1}{2}}}{\rho + u} \rho_\theta'
\frac{\rho^{\frac{1}{2}}}{\rho + u}\right\|_1  
\le \int_0^\infty
\dd u \left\| |\rho'|^{\frac{1}{2}} \left( \frac{\rho^{\frac{1}{2}}}{\rho + u}\right)  \right\|_2^2,\]
 and by equation (3.7) that 
\[\left\|  |\rho'|^{\frac{1}{2}} \frac{\rho^{\frac{1}{2}}}{\rho +u}
\right\|_2^2 = \tr \left[ |\rho'|^{\frac{1}{2}}  \frac{\rho}{(\rho
+u)^2} |\rho'|^{\frac{1}{2}} \right] = \tr \left[ |\rho'|  \frac{\rho}{(\rho
+u)^2} \right].\]
On the other hand,  $f - \int_0^M \dfrac{\rho}{(\rho + u)^2} f = \dfrac{\rho}{\rho+M} f$, which converges strongly to $0$ as $M \to \infty$ for any $f \in \mathcal{H}$, which combined with the above expression proves that 
\[\tr\left[ |\rho'| \int_0^\infty \frac{\rho}{(\rho + u)^2} \dd u \right] = \lim_{N \to \infty} \tr\left[ |\rho'| \int_0^N \frac{\rho}{(\rho + u)^2} \dd u \right] = \tr |\rho'| = \| \rho'\|_1.\]
Hence $\rho^{\frac{1}{2}} {H}(\theta)
\rho^{\frac{1}{2}}$ is a trace-class operator and therefore   
\begin{align*}
\tr \rho^{\frac{1}{2}} {H}(\theta) \rho^{\frac{1}{2}} &=
\int_0^\infty \dd u ~\tr \left(\frac{\rho^{\frac{1}{2}}}{\rho +u} \rho' \frac{\rho^{\frac{1}{2}}}{\rho +u} \right)\\
&= \int_0^\infty \dd u ~\tr \left(\frac{\rho}{(\rho+u)^2} \rho' \right)\\
&= \tr \left(\rho' \int_0^\infty \dd u ~\frac{\rho}{(\rho+u)^2} \right)= \tr\rho' =0.
\end{align*}

\par For $\wt{\wt{H}}(\theta)$, we  get by (4.2) 
$\rho^{\frac{1}{2}} \wt{\wt{H}} \rho^{\frac{1}{2}} = \int_0^\infty \dd s~~ \rho^{\frac{1}{2}} e^{-s \frac{\rho}{2}} \rho' e^{-s \frac{\rho}{2}} \rho^{\frac{1}{2}}$ and that 
\newline $\int_0^\infty \dd s ~~\| \rho^{\frac{1}{2}} e^{-s \frac{\rho}{2}} \rho' e^{-s \frac{\rho}{2}} \rho^{\frac{1}{2}}\|_1 \le \int_0^\infty \dd s~~\| |\rho'|^{\frac{1}{2}} \rho^{\frac{1}{2}} e^{-s \frac{\rho}{2}} \rho^{\frac{1}{2}}\|_2^2$. 
\par Consider a complete orthonormal system $\{f_j\}$ consisting of the eigenvectors of $|\rho'|^{\frac{1}{2}}$ corresponding to the eigenvalues $\mu_j$. Then $\rho^{\frac{1}{2}} e^{-s \frac{\rho}{2}} |\rho'|^{\frac{1}{2}} f_j = \mu_j \rho^{\frac{1}{2}} e^{-s \frac{\rho}{2}}$, and $\sum_j \mu_j^2 = \| \rho'\|_1 < \infty$. Thus 
%The last term in the above inequality  can be written as 
\begin{align*}
\int_0^\infty \||\rho'|^{\frac{1}{2}} \rho^{\frac{1}{2}} e^{-s \frac{\rho}{2}}\|_2^2 ~\dd s 
&=\int_0^\infty \| \rho^{\frac{1}{2}} e^{-s \frac{\rho}{2}} |\rho'|^{\frac{1}{2}}\|_2^2 \dd s = \int_0^\infty \dd s ~\sum_j \|  \rho^{\frac{1}{2}} e^{-s \frac{\rho}{2}} |\rho'|^{\frac{1}{2}} f_j\|_2^2 \\
&= \sum_j \mu_j^2 \langle f_j, \int_0^\infty \rho e^{-s\rho} \dd s~ f_j\rangle 
%= \sum_j \mu_j^2 \|f_j\|^2
 = \sum_j \mu_j^2 <\infty.
\end{align*}
The third equality uses the fact that the strong integral $\int_0^\infty \rho e^{-s\rho} \dd s = I$, and this proves the statement. Furthermore,
\[\tr (\rho^{\frac{1}{2}}\wt{\wt{H}}\rho^{\frac{1}{2}}) = \int_0^\infty \tr ( \rho^{\frac{1}{2}} \rho e^{-t\frac{\rho}{2}} \rho' \rho e^{-t\frac{\rho}{2}} \rho^{\frac{1}{2}} ) \dd t = \tr (\rho' \int_0^\infty \rho e^{-t\rho} \dd t)= \tr \rho' =0.\] 
\end{proof}
\begin{rem}
\begin{enumerate}[(i)]
\item In view of this result, it is natural that in infinite dimensional Hilbert space, for these LD's we redefine the expectation of the LD's in the state $\rho_\theta$ as $\mathbb{E}_\theta(\cdot) = \tr(\rho_\theta^{\frac{1}{2}} \cdot \rho_\theta^{\frac{1}{2}})$. As it is clear, if $\dim \mathcal{H}<\infty$, then much of these discussions are unnecessary and expectation (given above) reverts back to the more standard form. 
\item Furthermore, if $\dim(\mathcal{H}) = \infty$, the QFI in all the  four scenarios will not, in general, exist
        since the LD operators  are unbounded in each case.   However,  they may exist in  specific 
       cases ( see  Example 5.6).  
\end{enumerate} 
\end{rem}

\begin{eg}
Let $\mathcal{H}= \ell_2$ and $\{\ket{e_j}\}$ be a set of orthonormal
vectors. Consider the density matrix 
\[\rho_\theta =\sum_{j=0}^\infty \lambda_j(\theta) \ketbra{e_j}, \quad \text{ with } \lambda_j(\theta) =e^{-j \theta } (1 - e^{-\theta}), \; \, \theta > 0,\, j = 0,1,2,\cdots;  \]
$e_j$'s are independent of $\theta$ and $\sum_{j=0}^\infty \lambda_j(\theta)=1$. Thus 
\begin{eqnarray*}
\rho_\theta' &=& \sum_{j=0}^\infty \lambda_j'(\theta) \ketbra{e_j} \quad
\text{and} \quad \lambda_j'(\theta) = e^{-j\theta } [e^{-\theta} - j (1-
e^{-\theta})],
\end{eqnarray*}
so that  $\dfrac{\lambda_j'(\theta)}{\lambda_j(\theta)} = \dfrac{1}{e^\theta -1} -j $ which implies that 
$\left|\dfrac{\lambda_j'(\theta)}{\lambda_j(\theta)} \right| \to \infty $  as $j \to \infty$. However 
\newline $\sum_{j=1}^\infty |\lambda_j'(\theta)|< \infty$, implying that, though $\rho_\theta'$ is  $\mathcal{B}_1(\mathcal{H})$, $H_1(\theta) $ is an unbounded self-adjoint operator for each $\theta$ (see Theorem 3.6).  Here $H_2(\theta) =0$, since eigenvectors are independent of $\theta$. In fact  $\sum_{k=0}^\infty \dfrac{\lambda_k'(\theta)}{\lambda_k(\theta)} P_k$ is bounded if and only if $\sup_k \left|\dfrac{\lambda_k'(\theta)}{\lambda_k(\theta)}\right|$ is finite for each $\theta$. 

\end{eg}

\begin{eg}\label{pbox}
Consider the case of a particle in a box whose space
$\mathcal{H} = L^2[0,1]$ with a periodic boundary condition and 
 whose density matrix $\rho_\theta$ which is constructed in
the way, given below. Let the Laplacian on $\mathcal{H}$ be  given by 
\[D(\triangle_P) \equiv \{f \in L^2 :  \text{$f$ and $f'$ are both absolutely continuous, } f''\in L^2, \, f(1)=f(0)e^{\imath \theta}\}.\]
The eigenvalues are given by $\lambda_k (\theta) = (2 \pi k + \theta)^2$,
and eigenvectors are given by 
\newline $e_k(\theta) = \exp[\imath (2k\pi
+\theta )x]$, for all $k \in \mathbb{Z}$.  Construct the family of
density matrices as  
\begin{align*}
\rho_\theta &\equiv \frac{1}{c(\theta)}\sum_{k \in \mathbb{Z}} \frac{1}{(2k\pi +\theta )^2+1}  \ketbra{e_k(\theta)}, \quad \text{ where }\\
c(\theta) &= \sum_{k \in \mathbb{Z}} \frac{1}{(2k\pi +\theta )^2+1}, ~~\text{ is the normalising constant, and } \\
c'(\theta) &= \sum_{k \in \mathbb{Z}} \frac{-2(2k \pi +\theta)}{((2k\pi +\theta )^2+1)^2}.
\end{align*}
The eigenvalues of $\rho_\theta$ are 
\[\lambda_k(\theta) = \frac{c^{-1}(\theta)}{(2k\pi +\theta )^2+1}\quad k \in \mathbb{Z}.\]
Note that $|c'(\theta)| \le 2c(\theta)$, and  $\|e_k'(\theta)\|^2 = \int_0^1 x^2 \dd x=\frac{1}{3}$. $\sum_k \lambda_k(\theta) \|e_k'(\theta) \| = \frac{1}{\sqrt{3}} \sum_k \lambda_k(\theta) = \frac{1}{\sqrt{3}} < \infty$. Thus
\[\left| \frac{\lambda_k'(\theta)}{\lambda_k(\theta)}\right| = \left| - \frac{c'(\theta)}{c(\theta)} - \frac{2(2k\pi +\theta)}{(2k\pi +\theta)^2 	+1}\right| \le 3.\]
Therefore by %{\color{red}previous discussions} 
$\sum_k|\lambda_k'(\theta)| = \sum_{k\in \mathbb{Z}}\left| \frac{\lambda_k'(\theta)}{\lambda_k(\theta)}\right| \lambda_k(\theta) < \infty$ and $\sum \lambda_k(\theta) \|e_k'(\theta) \|\le \frac{1}{\sqrt{3}} \sum\lambda_k(\theta)=\frac{1}{\sqrt{3}} $, which implies by Theorem \ref{new}  that $\rho_\theta' \in \mathcal{B}_1$. In this example $H_1(\theta) \in \mathcal{B(H)}$  by Theorem 3.6(i) and $H_2(\theta)$ is unbounded, self-adjoint operator with its domain containing all polynomials (note that the second hypothesis in the Theorem 3.6 is not satisfied ). 
\end{eg}

\begin{eg}[Coherent states in $\Gamma(\mathbb{C})$]
Let $\Gamma(\mathbb{C}) = \bigoplus \mathbb{C}^n$ be the bosonic Fock space with $\mathbb{C}^0 \simeq \mathbb{C}^1 = \mathbb{C} $ \cite{krp4}. It is convenient to have an unitary isomorphism of $\Gamma(\mathbb{C})$ with $L^2(\mathbb{R})$ by setting for every $z \in \mathbb{C}$, the exponential vector 
\begin{equation} \label{expv}
\bm{e}(z) = \sum_{n=0}^\infty \frac{z^n}{\sqrt{n!}} h_n \equiv \sum_{n=0}^\infty \frac{z^n}{\sqrt{n!}} \ket{n},
\end{equation}
where the family $\{h_n(x) \equiv C_n H_n(x) e^{-\frac{x^2}{2}}: \, n =0,1,2,\cdots\}$ $H_n$   the (monic) Hermite-polynomials,  is a complete orthogonal basis in $L^2(\mathbb{R})$ with $C_n$ suitable normalising constants (whose exact expression we shall not need here) and $\ket{n}$ is a notation for the same as $n$-particle state, more familiar to the physicists. In this notation, $\mathbb{C} \ni 1 \leftrightarrow \bm{e}(0) \equiv \ket{0}$, is called the vaccum. For other purposes, we introduce the normalised exponential vector 
\begin{align*}
\ket{z} &\equiv e^{-\frac{|z|^2}{2}} \bm{e}(z), \quad \text{ so that }\\
\langle z, w\rangle &= \exp \left( -\frac{1}{2}(|z|^2 + |w|^2) + \bar{z}w \right), \quad \text{ for } z, w \in \mathbb{C};
\end{align*}
leading to the \emph{continuous} resolution of the identity 
\begin{equation}\label{oc}
\int_\mathbb{C} \ketbra{z} \dd^2 z = I_{L^2(\mathbb{R})}.
\end{equation}
Furthermore, one has a canonically associated family of unitary Weyl operators defined as: for $z, w \in \mathbb{C}$,
\begin{align} \label{weyl1}
W(z) \bm{e}(w) &= \exp\left( -\frac{|z|^2}{2} - \bar{z}w \right) \bm{e} (z+w), \quad \text{ leading to } \\
W(z) \ket{w} &= \exp\left( - i ~\mathrm{Im} (\bar{z}w) \right) \ket{z+w}. \nonumber
\end{align}
In this background, we consider an expression 
\begin{equation}\label{weyl2}
\sigma = \int_\mathbb{C} \ketbra{z} \phi(z,\bar{z}) ~\dd^2 z, 
\end{equation}
where $ \phi(z,\bar{z}) $ is a probability density function and $ \dd^2 z$ is the Lebesgue measure in $\mathbb{C}$. Then we have: 
\begin{prop} \label{pro5.7}
$\sigma \in \mathbb{B}_{1,+} $ and $\tr \sigma =1$. 
\end{prop}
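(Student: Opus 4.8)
The plan is to read the integral in \eqref{weyl2} as a weak (sesquilinear–form) integral, verify that it produces a bounded non-negative operator, and then compute its trace by summing the diagonal over an orthonormal basis and exchanging sum and integral by Tonelli's theorem. First I would record that $\|\ket{z}\|^2 = \langle z,z\rangle = 1$ for every $z \in \mathbb{C}$, so by Cauchy--Schwarz $|\langle f,z\rangle\langle z,g\rangle| \le \|f\|\,\|g\|$ for all $f,g \in L^2(\mathbb{R})$. Since $\phi(\cdot,\bar\cdot) \ge 0$ and $\int_\mathbb{C} \phi(z,\bar z)\,\dd^2 z = 1$, the expression
\[
B(f,g) \equiv \int_\mathbb{C} \langle f,z\rangle\,\langle z,g\rangle\,\phi(z,\bar z)\,\dd^2 z
\]
converges absolutely and defines a bounded sesquilinear form with $|B(f,g)| \le \|f\|\,\|g\|$; hence there is a unique $\sigma \in \mathcal{B}(L^2(\mathbb{R}))$ with $\|\sigma\| \le 1$ and $\langle f,\sigma g\rangle = B(f,g)$, and this $\sigma$ is the operator meant by \eqref{weyl2}. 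It is non-negative, since $\langle f,\sigma f\rangle = \int_\mathbb{C} |\langle z,f\rangle|^2\,\phi(z,\bar z)\,\dd^2 z \ge 0$.

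Next I would establish trace-class-ness together with $\tr\sigma = 1$. Fix any orthonormal basis $\{g_n\}_{n \ge 0}$ of $L^2(\mathbb{R})$. Since $\sigma \ge 0$, its trace (finite or not) equals $\sum_n \langle g_n,\sigma g_n\rangle$, independent of the basis, which by the previous step is $\sum_n \int_\mathbb{C} |\langle z,g_n\rangle|^2\,\phi(z,\bar z)\,\dd^2 z$. All summands and integrands are non-negative and measurable, so Tonelli's theorem permits interchanging summation and integration, giving
\[
\sum_n \langle g_n,\sigma g_n\rangle = \int_\mathbb{C} \Big(\sum_n |\langle z,g_n\rangle|^2\Big)\phi(z,\bar z)\,\dd^2 z = \int_\mathbb{C} \|\ket{z}\|^2\,\phi(z,\bar z)\,\dd^2 z = \int_\mathbb{C} \phi(z,\bar z)\,\dd^2 z = 1,
\]
where I used Parseval's identity and $\|\ket{z}\| = 1$. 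Thus $\sum_n \langle g_n,\sigma g_n\rangle = 1 < \infty$, so $\sigma$ is trace class with $\tr\sigma = 1$, and being non-negative it lies in $\mathcal{B}_{1,+}$, which is the assertion.

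The only step that is not entirely routine is the first one, namely making precise that the formal integral \eqref{weyl2} of the family of rank-one operators $\ketbra{z}$ against the density $\phi$ genuinely defines an operator. The normalisation $\|\ket{z}\| = 1$ makes this painless here, reducing the matter to a uniformly bounded, absolutely convergent scalar integral; but it is worth isolating, since for more singular ``weight functions'' $\phi$ (as arise in $P$-representations of non-classical states) the same manipulation would fail. Once $\sigma$ is identified with the form operator of $B$, the Tonelli interchange in the second step is unproblematic.
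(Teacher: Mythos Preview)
Your proof is correct and takes essentially the same approach as the paper's: both hinge on $\|\ket{z}\|=1$, Cauchy--Schwarz/Parseval, and interchanging the sum over an orthonormal basis with the integral over $\mathbb{C}$. The paper's only variation is that it establishes $\sigma\in\mathcal{B}_1$ directly via the bi-orthonormal characterization $\sum_m|\langle f_m,\sigma g_m\rangle|\le 1$ rather than first isolating boundedness and non-negativity as you do; your presentation is slightly more careful in making the formal integral \eqref{weyl2} into a genuine operator before computing its trace.
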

\begin{proof}
Let $\{f_m\}$ and $\{g_m\}$ be a family of bi-orthonormal set in $L^2(\mathbb{R})$ and we note that 
\[\sum_m \left| \langle f_m, \sigma g_m \rangle \right|  \le \int_\mathbb{C} \sum_m |\langle f_m, z \rangle | ~|\langle z, g_m \rangle| \phi(z, \bar{z}) ~\dd^2z.\]
This, on using Cauchy-Schwartz inequality, yields that 
\[  \sum_m \left| \langle f_m, \sigma g_m \rangle \right|  \le \int_\mathbb{C} \| \ket{z} \|^2 \phi(z, \bar{z}) \dd^2 z =1, \]
showing that $\sigma \in \mathbb{B}_1$. That $\sigma$ is non-negative is clear since $\phi$ is non-negative and 
\[\tr \sigma = \sum_n \langle h_n, \sigma h_n \rangle = \int_\mathbb{C} \sum_n \langle h_n, z \rangle \langle z,  h_n \rangle \phi(z, \bar{z}) \dd^2 z = \int_\mathbb{C} \| \ket{z} \|^2 \phi(z, \bar{z}) \dd^2 z =1. \]
\end{proof}
Next, we consider one real-parameter ($\theta$) of states 
\[ \rho_\theta \equiv \int_\mathbb{C} (\ketbra{z}) \phi_\theta (z,\bar{z}) ~\dd^2 z,\]
where the density function $\phi_\theta$ is a displaced Gaussian (parametrised by $M$), i.e. 
\[ \phi_\theta (z,\bar{z}) = \frac{1}{\pi M} \exp \left[ - \frac{(x-\theta)^2 +y^2}{M}\right], \]
where $z = x + i y$, with $M >0, ~ \theta \in \mathbb{R}$. Then we have 
\begin{theorem} \label{th5.8}
\begin{enumerate}[(i)]
\item If we set $\phi (z,\bar{z}) = \dfrac{1}{\pi M} \exp\left[ -\dfrac{1}{M} (x^2+y^2)\right]$ (with $z = x +i y$) and 
\newline $\rho_0 = \int_\mathbb{C} \ketbra{z} \phi (z,\bar{z}) ~\dd^2 z$, then $\rho_{0,+} \in \mathbb{B}_{1,+}$ with $\tr \rho_0 =1$. Furthermore, $\rho_\theta =W_0 (\theta) \rho_0 W_0 (\theta)^*$,  for all $ \theta \in \mathbb{R}$, where   we have written $W_0 (\theta) =W((\theta,0))$. 
\item $\mathbb{R} \ni \theta \mapsto \rho_\theta \in \mathbb{B}_1$ is real-analytic. 
\item $\rho_0$ admits the spectral representation $\rho_0 =\sum_{k=0}^\infty \lambda_k P_k \equiv \sum_{k=0}^\infty \lambda_k \ketbra{k}$, where 
\newline $\ket{k} \sim h_k \in L^2(\mathbb{R})$ are the $k$-particle states, and $\lambda_k = M^{-1} \left( 1 + \dfrac{1}{M} \right)^{-(1+k)}, \, (k =0,1,2, \cdots).$  
\item $\rho_\theta = W_0(\theta) \rho_0 W_0(\theta)^* = \sum_{k=0}^\infty \lambda_k P_k(\theta) $,  where  $P_k(\theta) = W_0(\theta) P_k W_0(\theta)^*$. 

\item $H_1(\theta)$, the first part of the LD operator in all scenarios is equal to $0$ identically, and $H_2(\theta)$(respectively $\wt{H}_2(\theta),\, \wh{H}_2(\theta),\, \wt{\wt{H}}_2(\theta)$) $= W_0(\theta) H_2(0)$ ($\wt{H}_2(0),\, \wh{H}_2(0),\, \wt{\wt{H}}_2(0)$) $W_0(\theta)^*  $. 

\item Let $\bm{I}_1(\theta)$ (respectively $\wt{\bm{I}}_1(\theta),\, \wh{\bm{I}}_1(\theta),\, \wt{\wt{\bm{I}}}_1(\theta)$) be component of the QFI involving derivatives of eigenvalues, and define $\bm{I}_2(\theta)\equiv \bm{I}(\theta)- \bm{I}_1(\theta)$ (respectively $\wt{\bm{I}}_2(\theta),\, \wh{\bm{I}}_2(\theta),\, \wt{\wt{\bm{I}}}_2(\theta)$). Then,  $\bm{I}_1(\theta)=0$ in all scenarios and $\bm{I}_2(\theta)$ (respectively $\wt{\bm{I}}_2(\theta),\, \wh{\bm{I}}_2(\theta),\, \wt{\wt{\bm{I}}}_2(\theta)$) $=\bm{I}_2(0)$ (respectively $\wt{\bm{I}}_2(0),\, \wh{\bm{I}}_2(0),\, \wt{\wt{\bm{I}}}_2(0)$).
\item In two scenarios,  the QFI's are  $\bm{I}(0)=\bm{I}_2(0)= 2 \ln \left( 1+ \dfrac{1}{M} \right)$, and 
\newline  $\wh{\bm{I}}(0) = \wh{\bm{I}}_2(0) = \dfrac{2+2M(1+M)(3+M)}{(1+M)^4}$. 
\end{enumerate}
\end{theorem}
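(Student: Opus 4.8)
The plan is to reduce every assertion to the point $\theta=0$ by exploiting the unitary covariance $\rho_\theta=W_0(\theta)\rho_0W_0(\theta)^*$ of (i), then to diagonalise $\rho_0$ by one Gaussian integral, and finally to carry out two explicit trace computations in the number basis $\{\ket k\}$. For (i): writing $z=x+iy$ and regarding the parameter as $\theta+i0\in\mathbb C$, the displaced Gaussian obeys $\phi_\theta(z,\bar z)=\phi_0(z-\theta,\bar z-\theta)$, so the substitution $z\mapsto z+\theta$ together with the relation $W_0(\theta)\ket z=e^{-i\,\mathrm{Im}(\theta z)}\ket{z+\theta}$ read off from \eqref{weyl1} (the unimodular phase cancelling in $\ketbra{z+\theta}=W_0(\theta)\ketbra zW_0(\theta)^*$) gives $\rho_\theta=W_0(\theta)\rho_0W_0(\theta)^*$, with $\rho_0\in\mathbb B_{1,+}$ and $\tr\rho_0=1$ by Proposition \ref{pro5.7}. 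For (ii), differentiation under the integral sign in $\rho_\theta=\int_{\mathbb C}\ketbra z\,\phi_\theta(z,\bar z)\,\dd^2z$ is legitimate (Gaussian tails, $\|\ketbra z\|_1=1$), so $\theta\mapsto\rho_\theta\in\mathcal B_1$ is real-analytic and Assumption \ref{as3.4} holds; one also records that $\{W_0(\theta)\}$ is the strongly continuous unitary group generated by $a^\dagger-a$ (the creation/annihilation operators on the number states), which is what powers the conjugation arguments below. Part (iii) is the Gaussian computation $\langle m|\rho_0|n\rangle=(\pi M\sqrt{m!\,n!})^{-1}\int_{\mathbb C}z^m\bar z^n e^{-(1+M^{-1})|z|^2}\,\dd^2z$ (using $\ket z=e^{-|z|^2/2}\bm e(z)$ and \eqref{expv}): the angular integration removes the off-diagonal terms, the radial Gamma integral gives $\lambda_n=M^{-1}(1+M^{-1})^{-(1+n)}$, and the geometric series sums to $1$. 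Part (iv) is then immediate from (i) and (iii), conjugation by a unitary carrying a spectral resolution to a spectral resolution; crucially the eigenvalues are $\theta$-independent.

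That last fact drives (v) and (vi). Since $\lambda_k'\equiv0$, the term $\sum_k(\lambda_k'/\lambda_k)P_k$ of \eqref{e4.2} vanishes, and likewise for $\wt H,\wh H,\wt{\wt H}$: each is assembled from a ``diagonal'' factor ($\rho^{-1}$, $\rho^{\pm1/2}$ or $e^{-s\rho/2}$) and the ``off-diagonal'' $\rho'=\sum_k\lambda_kP_k'$, whose diagonal compression $P_j\rho'P_j$ is zero by Lemma \ref{l1}(I)(ii); hence $H_1=0$ in all four models. Differentiating $P_k(\theta)=W_0(\theta)P_k(0)W_0(\theta)^*$ and using $[a^\dagger-a,W_0(\theta)]=0$ gives $P_k'(\theta)=W_0(\theta)P_k'(0)W_0(\theta)^*$, whence $H_2(\theta)=W_0(\theta)H_2(0)W_0(\theta)^*$; the same conjugation (invoking uniqueness of the solution of \eqref{sld} for the SLD) handles $\wt H_2,\wh H_2,\wt{\wt H}_2$, which is (v). Since the trace, the Hilbert--Schmidt norm and the modified variance $\breve V_\rho$ of Theorem \ref{th4.5} are all invariant under simultaneous conjugation of $\rho$ and the LD by $W_0(\theta)$, every QFI is $\theta$-independent; combined with $H_1\equiv0$ this yields $\bm I_1\equiv0$ and $\bm I_2(\theta)=\bm I_2(0)$ (and the tilde/hat analogues), which is (vi).

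For (vii) I would compute at $\theta=0$. Here $P_k'(0)=[a^\dagger-a,\ketbra k]=\sqrt{k+1}\,Q_k-\sqrt k\,Q_{k-1}$ with $Q_k:=\ket{k+1}\bra k+\ket k\bra{k+1}$ (and $Q_{-1}:=0$), so $H(0)=H_2(0)=\sum_k\ln\lambda_k\,P_k'(0)$ telescopes on $\mathrm{span}\{\ket k\}$: the coefficient of $Q_k$ is $\sqrt{k+1}(\ln\lambda_k-\ln\lambda_{k+1})=\sqrt{k+1}\,\ln(1+M^{-1})$, giving $H(0)=\ln(1+M^{-1})\,(a+a^\dagger)$. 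Then $\bm I(0)=\breve V_{\rho_0}(H(0))=\int_0^1\|\rho_0^{(1-t)/2}H(0)\rho_0^{t/2}\|_2^2\,\dd t$; computing the Hilbert--Schmidt norm in the number basis and the elementary integral $\int_0^1\lambda_{k+1}^{1-t}\lambda_k^{t}\,\dd t=(\lambda_k-\lambda_{k+1})/\ln(1+M^{-1})$ reduces this to $2\ln(1+M^{-1})\sum_k(k+1)(\lambda_k-\lambda_{k+1})$, and an Abel summation shows $\sum_k(k+1)(\lambda_k-\lambda_{k+1})=\sum_k\lambda_k=1$, so $\bm I(0)=2\ln(1+M^{-1})$. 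For LD$_2$, $\rho_0'=[a^\dagger-a,\rho_0]$ has $(\rho_0')_{k+1,k}=\sqrt{k+1}(\lambda_k-\lambda_{k+1})$, so $(\wh H(0))_{k+1,k}=\sqrt{k+1}(\lambda_k-\lambda_{k+1})(\lambda_k\lambda_{k+1})^{-1/2}$, which by $\lambda_{k+1}/\lambda_k=(1+M^{-1})^{-1}$ is a $k$-independent multiple of $\sqrt{k+1}$; thus $\wh H(0)$ is again proportional to $a+a^\dagger$, and $\wh{\bm I}(0)=\tr(\rho_0\wh H(0)^2)$ then follows from $\tr(\rho_0(a+a^\dagger)^2)=2\langle N\rangle+1$ together with $\langle N\rangle=\sum_kk\lambda_k=M$.

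The hard part will be the infinite-dimensional bookkeeping rather than any single algebraic identity. The LD operators are unbounded, so one must check that the series above are well defined and essentially self-adjoint on the number-state core --- straightforward, since each $\ket k$ meets only finitely many summands --- and, more delicately, that the $\rho_0^{1/2}$-sandwiches of the LDs and of the $\wh H(0)^2$-type operators are trace class, so that the modified expectation $\tr(\rho_0^{1/2}\cdot\rho_0^{1/2})$ and the variance $\breve V_{\rho_0}$ are finite and the QFIs in (vii) genuinely exist; the geometric decay of the $\lambda_k$ against at most polynomial growth is exactly what makes the relevant traces converge. Everything else reduces to Gaussian integrals and telescoping geometric sums.
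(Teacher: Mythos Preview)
Your treatment of (i)--(vi) is essentially the paper's, only spelled out more fully in places (e.g.\ you make the conjugation $P_k'(\theta)=W_0(\theta)P_k'(0)W_0(\theta)^*$ explicit via the generator $a^\dagger-a$, where the paper simply says (v) and (vi) follow from (i) and (iv)).

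For (vii) your route is genuinely different, and cleaner. The paper keeps $H_2(0)=\sum_k\ln\lambda_k\,P_k'(0)$ as an infinite sum, inserts it into $\int_0^1\tr(\rho_0^tH\rho_0^{1-t}H)\,\dd t$, and reduces the resulting four-index sum $\sum_{k\neq l,m,n}$ by noting that $P_kP_m'P_lP_n'$ vanishes unless the indices are within one of each other; it then tabulates the eight surviving traces (Table~1) and sums a geometric series to obtain $2\ln(1+M^{-1})$. For $\wh{\bm I}$ it handles the triple sum $\sum_{k,l,m}(\lambda_k\lambda_l/\lambda_m)P_k'P_mP_l'$ analogously. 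Your observation that the series $\sum_k\ln\lambda_k\,P_k'(0)$ \emph{telescopes} to $\ln(1+M^{-1})\,(a+a^\dagger)$ --- because $\ln\lambda_k-\ln\lambda_{k+1}$ is $k$-independent --- collapses all of this to a single Gaussian moment $\tr(\rho_0(a+a^\dagger)^2)=2\langle N\rangle+1=2M+1$ and the elementary integral $\int_0^1\lambda_k^{1-t}\lambda_{k+1}^{t}\,\dd t$. This is considerably more transparent, and as you note it also makes the trace-class bookkeeping immediate (geometric decay against linear growth).

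One consequence worth flagging: carrying your LD$_2$ computation through gives $\wh H(0)=(M(M+1))^{-1/2}(a+a^\dagger)$ and hence $\wh{\bm I}(0)=(2M+1)/\bigl(M(M+1)\bigr)$, which does \emph{not} match either expression the paper records for $\wh{\bm I}(0)$ (indeed the statement and the proof give two different formulas). A direct check at $M=1$ --- where $\lambda_k=2^{-(k+1)}$ and $\tr(\rho_0'\rho_0^{-1}\rho_0')=\tfrac{3}{8}\sum_{k\ge0}(k+1)2^{-k}=\tfrac32$ --- agrees with your value, so your method is correct and exposes a computational slip in the paper's evaluation of $\wh{\bm I}_2(0)$.
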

\begin{proof}
\begin{enumerate}[(i)]
\item This follows from Proposition \ref{pro5.7}, since $\int_\mathbb{C} \phi(z,\bar{z}) \dd^2 z =1$. Furthermore, a simple computation shows that $W_0(\theta) \ket{z} = \ket{z + (\theta, 0)} \equiv \ket{ (x+\theta) +i y}$ and hence 
\begin{align*} 
W_0(\theta) \rho_0 W_0(\theta)^* &= \int_\mathbb{C} \ketbra{x+\theta + i y} \phi(x,y)~ \dd^2 z \\
&= \int_\mathbb{C} \ketbra{z} \frac{1}{\pi M} \exp \left[ -\frac{1}{M} \left\{ (x-\theta)^2 +y^2 \right\}\right] ~\dd^2 z = \rho_\theta .
\end{align*} 
\item For every $B \in \mathcal{B}(L^2(\mathbb{R}))$, 
\[ \tr (\rho_\theta B) = \int_\mathbb{C} \langle z, Bz\rangle \frac{1}{\pi M} \exp \left[ -\frac{1}{M} \left\{ (x-\theta)^2 +y^2 \right\}\right] ~\dd^2 z,\]
and an application of Lebesgue dominated convergence theorem yields analyticy, in fact, entirety, of the map $\mathbb{R} \ni \theta \mapsto \tr(\rho_\theta B)$. 
\item Applying $\rho_0$ on $n$-particle vector in $\mathcal{H}$, we get 
\begin{align*}
\rho_0 \ket{n} &= \frac{1}{\pi M}  \int_\mathbb{C} \ket{z} \langle z, n \rangle e^{-\frac{|z|^2}{M}}~\dd^2 z \\
&= \frac{1}{\pi M}  \int_\mathbb{C} e^{-|z|^2 \left( 1 + \frac{1}{M} \right) } \frac{\bar{z}^n}{\sqrt{n!}} \left( \sum_{m=0}^\infty \frac{z^m}{\sqrt{m!}} \ket{m} \right) ~\dd^2 z,
\end{align*}
which, on an application of Fubini's theorem justifying the interchange of integration summation and on observing furthermore that $\int_0^{2\pi} \bar{z}^n z^m ~\dd t = 2\pi r^{n+m} \delta_{nm}$ (where $z = r e^{i t}$), leads to $\rho_0 \ket{n} = \left( M^{-1} \left( 1 + \frac{1}{M} \right)^{-(n+1)} \right) \ket{n}$. Thus, the $n$-particle vectors are eigenvectors of $\rho_0$ with eigenvalues $\lambda_n$ and the completeness of the orthonormal set $\{ \ket{n}\}$ gives the spectral representation. 
\item This is an immediate consequence of the second part of (i), and so are the conclusions of (v) and (vi). 
\item[(vii)]   $\bm{I}_2(\theta) = \bm{I}_2(0) = \int_0^1 \tr \left( \rho_0 H_2(0) \rho_0^{1-t} H_2(0) \right) ~\dd t$, and $H_2(0)  =\sum_{k=0}^\infty \ln \lambda_k P_k'(0)$ (formally). First to compute 
\begin{align*}
P_n'(0) &\equiv \frac{\dd}{\dd\theta}  \left[ W_0(\theta) \ketbra{n} W_0(\theta)^*\right] \big|_{\theta=0}\\
&= \sqrt{n+1}\left(\ket{n+1}\bra{n} + \ket{n}\bra{n+1}\right) - \sqrt{n}\left(\ket{n}\bra{n-1} + \ket{n-1}\bra{n}\right) .
\end{align*}
Since  $\int_0^1 \dd t~\rho_\theta^t H(\theta) \rho_\theta^{1-t} H(\theta)  =  W(\theta ) [\int_0^1\dd t~ \rho_0^t H(0) \rho_0^{1-t} H(0)]  W(\theta)^* \equiv W(\theta ) J W(\theta )^*$
         the relevant   operator in the expression for $\bm{I}_2(\theta)$  is $=   W(\theta )   [\int_0^1 \dd t~ \rho_0^t H(0) \rho_0^{1-t} H(0)] W(\theta)^* = W(\theta) J W(\theta)^*$, which,   on integration,  reduces to 
         \begin{equation}
          =W(\theta) \left[ \sum_{\substack{k,l,m,n \\ k \neq l}} \left(\frac{\lambda_k-\lambda_l}{\ln \lambda_k- \ln\lambda_l}\right)  (\ln\lambda_m \ln \lambda_n)  (P_k P_m' P_l P_n') \right] W(\theta)^*
          \end{equation}

Next, we note from 
       the structure  of $P_l'(0)$ and Lemma  \ref{l1}(iii)  that  for every fixed $k =0,1,2,3,\cdots$,    the product    $P_k P_m'P_l P_n' = 0$  if $k=l$  and if $|k-m|,\, | l-m| ,\, | l-n|$ are  each  $\ge2$; and therefore it is a  rank-4 operator, with   estimate   $\| P_kP_m'P_lP_n' \| \le$  a polynomial in $k$, $Q(k)$.
       This reduces the (apparently) fourfold-infinite sum to a single infinite sum over  $\{k=0,1,2,3,... \} $. Furthermore,  it is easy to see that
       $  \frac{\lambda_k- \lambda_{k+1}}{ \ln \lambda_k - \ln \lambda_{k+1}}    =   \frac{\text{Constant}(M)} {(1+\frac{1}{M} )^{  (k+2)}},$  leading to  the estimate  that 
       \begin{equation}
 \|   \sum_{\substack{k,l,m,n \\ k \neq l}} \left(\frac{\lambda_k-\lambda_l}{\ln \lambda_k- \ln\lambda_l}\right)  (\ln\lambda_m \ln \lambda_n)  (P_k P_m' P_l P_n')   \|_1 \le \text{Constant}(M) \sum_{k=0}^\infty Q(k) \frac{1} {(1+\frac{1}{M} )^{  (k+2)}},
       \end{equation}
which converges. This shows that the operator in the expression of $\bm{I}_2$    is trace-class and therefore its trace is independent of $\theta$.  To calculate exactly the trace needs  the following simple evaluations given in Table 1. 
%\pagebreak 
\begin{table}[!h]
 \begin{center}
\begin{tabular}{ |r|r||r|r| } 
 \hline
\rowcolor{gray!25} \rule{0pt}{15pt} Operator & Trace &Operator & Trace \\ \hline 
$P_k P_k' P_{k+1} P_k'$ & $k+1$ & 
$P_k P_{k+1}' P_{k+1} P_k'$ & $-(k+1)$ \\\hline
$P_k P_k' P_{k-1} P_k'$ & $k$ &
$ P_k P_{k-1}' P_{k-1} P_k'$ & $-k$ \\ \hline 
$P_k P_{k+1}' P_{k+1} P_{k+1}'$ & $k+1$  &
$P_k P_k' P_{k+1} P_{k+1}'$  & $-(k+1)$ \\ \hline 
$P_k P_{k-1}' P_{k-1} P_{k-1}'$ & $k$  &
$P_k P_k' P_{k-1} P_{k-1}'$ & $-k$ \\ \hline
\end{tabular}
\end{center}
\caption{List of terms with non-zero trace values.}
\end{table}
\pagebreak
\newline Putting these values in the above expression and using the fact that $\ket{0-1} = 0$, we get the final sum, after simplification, as 
\[
= \frac{1}{M^2} \ln\left(1+\frac{1}{M} \right) \left[ \frac{1}{\left(1+\frac{1}{M} \right)^2} + \sum_{k=1}^\infty \left\{ \frac{k+1}{\left(1+\frac{1}{M} \right)^{k+2}} + \frac{k}{\left(1+\frac{1}{M} \right)^{k+1}} \right\} \right]
= 2 \ln \left(1+\frac{1}{M} \right),\]
which is the expression of QFI in the BvN scenario.  

\par Similar considerations are also valid for the calculation of $\wh{\bm{I}}_2$  and we need to look at   $\wh{J} =   \rho'_0 \rho_0^{-1} \rho_0' = \sum_{ k,l,m} 
         \frac{\lambda_k \lambda_l }{    \lambda_m} P'_k P_m P'_l$ and note  that  $P'_k P_m P'_l = 0$, unless   for each fixed $k =0,1,2, \cdots,\; 
         | k-m| , \, | m- l |$  are each $\le1$, thereby reducing the triple-infinite sum to a single one over the index $k$ only.  Furthermore ,
       $\|   P'_k P_m P'_l \| \le $   a polynomial in $k$ and   for example ,  
       $\frac{\lambda_k \lambda_{k+1}} {\lambda_{k-1}}  =$  Constant$(M) \cdot
        ( 1+ M^{-1} )^{-(k+3)}$, thereby leading to the  conclusion that $\wh{J}$  is in $\mathcal{B}_1(\mathcal{H})$ and  a simple  computation   gives the value 
        \[ \wh{\bm{I}}_2(0) =  \frac{2+ M (2+M) (3+2M)}{(1+M)^4}. \]  
\end{enumerate}
\end{proof}
\end{eg}
\begin{rem}
\begin{enumerate}[(i)]
\item As can be seen from Theorem \ref{th5.8} ((iv), (v),  and (vi)), the QFIs in all scenarios are independent of the parameter $\theta$. This phenomenon, namely the independence of the QFI from the \emph{location} parameter $\theta$, is a routine property in the classical theory also, and this is due to the property of translation invariance. 
\item It is interesting to note   from the expressions  of QFI's in the 2 scenarios given in the previous theorem 
    that       $\lim_{M\to \infty} M \bm{I}(0,M)   = \lim_{M \to  \infty}  M  \wh{\bm{I}}(0,M)  = 2$.
This also implies that in the limit of $M\to \infty$, 
       the `Gaussian', intuitively, reduces to a  \emph{uniform}  distribution and in such a limit, both the QFI's  converge to 0. This
     can be interpreted as  the  \emph{complete  loss of  information}, as may  be  intuitively expected .
     \end{enumerate}
\end{rem}
   
 %%%%%%%%%%%%%%%%%%%%%%%%%%%%
   \section{Quantum two-level systems} \label{s6}
   \par In this section, we give explicit calculations for Fisher
information of a two-level system. Here  $\mathcal{H}= \mathbb{C}^2$. 
\subsection{Two level system with $\theta$ dependent eigenvalues and eigenvectors} We impose analyticity and set
\begin{eqnarray}
\rho_\theta &=& \lambda(\theta) \begin{bmatrix}
\cos^2 \theta & \sin\theta \cos \theta \\
\sin\theta \cos \theta & \sin^2\theta \end{bmatrix} +(1-\lambda(\theta))  \begin{bmatrix}
\sin^2 \theta & -\sin\theta \cos \theta \\
-\sin\theta \cos \theta & \cos^2\theta \end{bmatrix} \nonumber\\
&\equiv& \lambda(\theta) P_1(\theta) +(1- \lambda(\theta)) P_2(\theta), \label{2l1}
\end{eqnarray}
where $0<\lambda(\theta) <1$. Note that $P_1(\theta) + P_2(\theta)=I_2$,
and are mutually orthogonal. $I_2$ denotes the identity matrix in
this space. If we define 
\[P_1(0) =\begin{bmatrix}
1 & 0 \\
0 & 0 \end{bmatrix} \equiv P(0), \quad \text{ and } \quad P_2(0) =\begin{bmatrix}
0 & 0 \\
0 & 1 \end{bmatrix},\]
and if we set 
\[ U(\theta) = \begin{bmatrix}  \cos \theta & \sin\theta \\
-\sin \theta & \cos \theta \end{bmatrix},\]
(the matrix of rotations in two dimensions),  
then $U(\theta)^* P(0)U(\theta) = P(\theta)$ for all $\theta \in \mathbb{R}$ and $\theta \mapsto \lambda(\theta) $ can be any real valued analytic function taking values in $(0,1)$. 
 In this case $P_1(\theta) =P(\theta), \, P_2(\theta) =I_2-P(\theta) =P_1(\theta)^\perp,$ and 
$\rho_\theta = \lambda(\theta) P(\theta) + (1 - \lambda(\theta)) P(\theta)^\perp,$
for which 
\[\frac{\dd \rho_\theta}{\dd \theta} = \rho_\theta' = \lambda'(\theta)
\left(2P(\theta)-1\right) +(2\lambda(\theta) -1) P'(\theta),\]
with suitable domain of $\theta$ such that $\mathrm{ker} \rho_\theta
=\{0\}$. Since 	
\[\rho_\theta =U(\theta)^* \left[ \lambda(\theta)  P(0) + (1 -
\lambda(\theta)) P(0)^\perp \right] U(\theta),\]
where $U(\theta)$ are unitaries, this happens only if $\lambda(\theta)
\neq 0,\, 1$. Using the fact that
$P'(\theta) = U(\theta)^* P'(0)U(\theta),$ 
where $P'(0) = \begin{bmatrix} 0 & 1 \\1 & 0 \end{bmatrix}$, and
$P'(\theta)^2 = P'(0)^2 =I$.

   \noindent{\bf Fisher information for BvN scenario}
   In this case
   \begin{eqnarray*}
   {H}(\theta) = \int_0^\infty \frac{\dd u}{\rho +u} \rho'
\frac{1}{\rho +u}  &=& \frac{\dd}{\dd \theta} \ln \rho_\theta
   =  \frac{\lambda'(\theta)}{\lambda(\theta)} P- \frac{\lambda'(\theta)}{1-\lambda(\theta)}
P^\perp +\left( \ln \frac{\lambda(\theta)}{1- \lambda(\theta)} \right)P'.
   \end{eqnarray*}
 Using the fact  $P'(\theta)^2 = P'(0)^2=I$, one can show that  
 \[
 \bm{I}(\theta) = \mathbb{E}_\theta {H}(\theta)^2 \nonumber\\
 =\frac{\lambda'(\theta)^2}{\lambda(\theta)(1-\lambda(\theta))} +
\left(\ln  \frac{ \lambda(\theta)}{1- \lambda(\theta)} \right)^2
 = \bm{I}_1 (\theta) + \bm{I}_2 (\theta).
 \]
 \newline \noindent {\bf 2-level LD$_1$ scenario:} 
Fisher information in this model is given by 
\[\wt{\bm{I}}(\theta) = \frac{3}{4} \tr(\rho_\theta' \rho_\theta^{-1} \rho_\theta') + \frac{1}{4} \tr(\rho_\theta\rho_\theta' \rho_\theta^{-2} \rho_\theta').\]
Using Lemma \ref{l1} and the fact that there are only two orthonormal projections, namely $P(\theta)$ and $P(\theta)^\perp$, QFI takes the form 
 \begin{align*}
 \wt{\bm{I}}(\theta) &= \frac{3}{4} \left[ \frac{\lambda(\theta)'^2}{\lambda(\theta)(1
-\lambda(\theta))} + 2\frac{(2\lambda(\theta) -1)^2}{ \lambda(\theta)(1 -\lambda(\theta))}
\right] +\frac{1}{4} \left[ \frac{\lambda(\theta)'^2}{\lambda(\theta)
(1-\lambda(\theta)) } + \frac{2(2\lambda(\theta)-1)^2 (1-3\lambda(\theta)
+3\lambda(\theta)^2)}{\lambda(\theta)^2(1-\lambda(\theta))^2} \right] \nonumber\\
&= \frac{\lambda(\theta)'^2}{\lambda(\theta)(1 -\lambda(\theta))} + \frac{(2\lambda(\theta)
-1)^2}{2\lambda(\theta)^2 (1-\lambda(\theta))^2}
 = \wt{\bm{I}}_1 (\theta) + \wt{\bm{I}}_2 (\theta). 
 \end{align*}
  \noindent {\bf 2-level system in LD$_2$:}
 In this logarithmic derivative operator takes the form 
 \[
 \wh{H}(\theta) = \rho_\theta^{-\frac{1}{2}} \rho_\theta' \rho_\theta^{-\frac{1}{2}} 
 = \left[ \frac{\lambda(\theta)'}{\lambda(\theta)} P(\theta) - \frac{\lambda(\theta)'}{1- \lambda(\theta)}
P(\theta)^\perp  \right] + \frac{(2\lambda(\theta)-1)}{\sqrt{\lambda(\theta)(1-\lambda(\theta))}}  P(\theta)'.\]
Using the fact that $\tr P' P^2 = \tr P P' P=0$ and $\tr(P'^2)
=2 $, the Fisher information takes the form 
 %\begin{eqnarray*}
 \[\wh{\bm{I}}(\theta) = \tr \rho_\theta \wh{H}(\theta)^2 
 = \tr \left( \rho_\theta' \rho_\theta^{-1} \rho_\theta' \right) = \frac{\lambda(\theta)'^2}{\lambda(\theta)(1 -\lambda(\theta))} + \frac{2(2\lambda(\theta) -1)^2}{
\lambda(\theta)(1 -\lambda(\theta))}  
 = \wh{\bm{I}}_1(\theta) + \wh{\bm{I}}_2(\theta). \]

 \noindent {\bf LD$_3$ i.e. SLD scenario:}
 SLD is defined as the solution $\wt{\wt{H}}(\theta)$ of 
 $\frac{1}{2} (\wt{\wt{H}}(\theta) \rho_\theta + \rho_\theta \wt{\wt{H}}(\theta)) = \frac{\dd}{\dd \theta} \rho_\theta,$
 which is given by 
% \begin{equation}\label{sld2}
$ \wt{\wt{H}}(\theta) = \int_0^\infty e^{-t\frac{\rho_\theta}{2}}\rho'_\theta e^{-t\frac{\rho_\theta}{2}} \dd t$.%\end{equation}
 Inserting the spectral decomposition of $\rho_\theta$ and the corresponding expression for $\rho_\theta'$, we get 
 \[ \wt{\wt{H}}(\theta) = \left(\frac{\lambda(\theta)'}{\lambda(\theta)} P(\theta) - \frac{\lambda(\theta)'}{1-\lambda(\theta)}  P(\theta)^\perp\right) +2(2\lambda(\theta) -1)P'(\theta) = \wt{\wt{H}}_1(\theta) + \wt{\wt{H}}_2(\theta).\]
 In this, the Fisher Information takes the form 
\[\wt{\wt{\bm{I}}}(\theta)= \tr \rho_\theta \wt{\wt{H}}(\theta)^2 = \frac{\lambda(\theta)'^2}{\lambda(\theta)(1-\lambda(\theta))} + 4(2\lambda(\theta)-1)^2 = \wt{\wt{\bm{I}}}_1(\theta) + \wt{\wt{\bm{I}}}_2(\theta).\]
The above expression is simplified by using formulae given in Lemma \ref{l1}.

 \par In each of these three models, we have written the
expression in two parts: viz. the first one contains terms with
$\lambda'$, which we may call this as  \emph{neo-classical} part,
whereas the second part comes from the derivative of the projection
operators, which we may call as  \emph{truly quantum} part. Furthermore the
first term, i.e. neo-classical term  in each of the expressions of QFI are the same,
i.e. 
 \[\bm{I}_1(\theta)=  \wt{\bm{I}}_1(\theta) =\wh{\bm{I}}_1(\theta) =\wt{\wt{\bm{I}}}_1(\theta)  =
\frac{\lambda(\theta)'^2}{\lambda(\theta)(1 -\lambda(\theta))}.\]
 The value of the second term, i.e. the quantum term, is given in the
Table \ref{t1} below. 
 \begin{table}[h]
 \begin{center}
\begin{tabular}{ |c|c|c|c| } 
 \hline
\rowcolor{gray!25} \rule{0pt}{20pt}$\bm{I}_2(\theta)$ & $\wt{\bm{I}}_2(\theta)$& $\wh{\bm{I}}_2(\theta)$ & $\wt{\wt{\bm{I}}}_2(\theta)$ \\ \hline 
\rule{0pt}{25pt}$\left[\ln  \left(\dfrac{
\lambda(\theta)}{1-\lambda(\theta)}\right) \right]^2 $  &
$\dfrac{(2\lambda(\theta) -1)^2}{2\lambda^2(\theta) (1-\lambda(\theta))^2}  $ &
$\dfrac{2(2\lambda(\theta) -1)^2}{ \lambda(\theta)(1 -\lambda(\theta))}$ &
$4(2\lambda(\theta) -1)^2$\\
 \hline
\end{tabular}
\end{center}
\caption{Comparison of the second term of QFI  in different models for two level systems. }\label{t1}
\end{table}
\par Furthermore, if $\rho_\theta$ commutes with $\rho_\theta'$ for
each $\theta$, then in all the above expressions, the second term i.e.
${I}_2(\theta),\,\wt{I}_2(\theta), \,\wh{I}_2(\theta)$, and $\wt{\wt{I}}_2(\theta)$ vanishes.
The same thing happens if $\lambda =1/2$ as well. Also, for any value of $\theta$, $\wt{\bm{I}}_2(\theta) \ge \wh{\bm{I}}_2(\theta)\ge \wt{\wt{\bm{I}}}_2(\theta)$.

\subsection{ 2-level system with $\theta$-independent eigenvalues and $\theta$-dependent eigenvectors} The next family of states is taken from an example of  Hayashi \cite{haya02}, where the $\theta$ we have used in the previous example was replaced by $\frac{\theta}{2}$. Hence, only brief calculations are given for QFI under different models. 
Consider the family of states  given by 
\begin{eqnarray}
 \rho_\theta &=& \frac{1+r}{2} \begin{bmatrix}
\cos^2\theta & \sin\theta \cos\theta \\
\sin\theta \cos\theta & \sin^2 \theta
\end{bmatrix} + \frac{1-r}{2} \begin{bmatrix}
\sin^2\theta & -\sin\theta \cos\theta \\
-\sin\theta \cos\theta & \cos^2 \theta
\end{bmatrix}\nonumber\\
&=&\frac{1+r}{2} P(\theta) +  \frac{1-r}{2} P(\theta)^\perp, \quad \text{ with } 0\le r <1. \label{2l2}
\end{eqnarray}
In equation \eqref{2l1}, replacing  the terms with $\lambda(\theta)$ by $\frac{1+r}{2}$,  one obtains the state in \eqref{2l2}. 
Furthermore, here the eigenvalues are $\theta$ independent, whereas the eigen-projections $P(\theta)$ and $P(\theta)^\perp$ are $\theta$ dependent. As a result, the (classical) part of LDs and hence QFIs, coming from derivatives of eigenvalues will all be zero, i.e. for the QFIs, ${\bm{I}}_1 = \wt{\bm{I}}_1=\wh{\bm{I}}_1  = \wt{\wt{\bm{I}}}_1=0$. Observe that in the second term, i.e. truly quantum term in the various models of QFI's, all are independent of the parameter $\theta$, and  can be shown in the tabular format as, 

 \begin{table}[h]
 \begin{center}
\begin{tabular}{ |c|c|c|c| } 
 \hline
\rowcolor{gray!25} \rule{0pt}{20pt}$\bm{I}_2(0)$ & $\wt{\bm{I}}_2(0)$& $\wh{\bm{I}}_2(0)$ & $\wt{\wt{\bm{I}}}_2(0)$ \\ \hline 
\rule{0pt}{25pt}$\left[\ln  \left(\dfrac{
1+r}{1-r}\right) \right]^2 $  &
$\dfrac{8r^2}{(1-r^2)^2}  $ &
$\dfrac{8r^2}{1-r^2}$ &
$4r^2$\\
 \hline
\end{tabular}
\end{center}
\caption{Comparison of the second term of QFI  in different models. }\label{t2}
\end{table}
Observe that $\bm{I}_2(0),\, \wt{\bm{I}}_2(0)$, and $\wh{\bm{I}}_2(0) \to \infty$ as $r \to 1$, whereas $\wt{\wt{I}}_2(0) \to 4$ as $r \to 1$. Furthermore, $\wt{\bm{I}}_2(0) \ge \wh{\bm{I}}_2(0) \ge \wt{\wt{\bm{I}}}_2(0)$.  At $r = 0$, i.e. $\rho_\theta =\frac{1}{2}I$, the state is totally degenerate, and all these quantities are equal to zero. 

\par Note that in this case,  the derivatives of the eigenvalues are equal to zero, as they are not $\theta$ dependent. But $\rho' \neq 0$ via the eigenvectors only. Therefore, the $H_1$'s are all equal to zero; subsequently, all the $\bm{I}_1$'s are equal to zero.  If eigenvalues are independent of $\theta$, and yet their contributions are not considered, then this will lead to an inconsistent conclusion because QFI due to such an error becomes 0. The examples 5.6 for coherent states and 6.2 for two-level systems emphasise this point.

\section*{Acknowledgement}
The first author is grateful to the Indian National Science Academy
for support through its Senior Scientist Programme and Prof.
Probal Chaudhuri for numerous fruitful discussions on classical
theory. 
%%%%%%%%%%%%%%%%%%%%%%%%%%%%%%%%%%%%%
\providecommand{\bysame}{\leavevmode\hbox to3em{\hrulefill}\thinspace}
\providecommand{\MR}{\relax\ifhmode\unskip\space\fi MR }
% \MRhref is called by the amsart/book/proc definition of \MR.
\providecommand{\MRhref}[2]{%
  \href{http://www.ams.org/mathscinet-getitem?mr=#1}{#2}
}
\providecommand{\href}[2]{#2}


\begin{thebibliography}{10}

\bibitem{AN}
Shun-ichi Amari and Hiroshi Nagaoka, \emph{Methods of information geometry},
  Translations of Mathematical Monographs, vol. 191, American Mathematical
  Society, Providence, RI; Oxford University Press, Oxford, 2000, Translated
  from the 1993 Japanese original by Daishi Harada. \MR{1800071}

\bibitem{MR1009162}
Nelson Dunford and Jacob~T. Schwartz, \emph{Linear operators. {P}art {I}},
  Wiley Classics Library, John Wiley \& Sons, Inc., New York, 1988, General
  theory, With the assistance of William G. Bade and Robert G. Bartle, Reprint
  of the 1958 original, A Wiley-Interscience Publication. \MR{1009162}

\bibitem{haya02}
Masahito Hayashi, \emph{Two quantum analogues of {F}isher information from a
  large deviation viewpoint of quantum estimation}, J. Phys. A \textbf{35}
  (2002), no.~36, 7689--7727. \MR{1947128}

\bibitem{hayashi2}
\bysame, \emph{Quantum information theory}, second ed., Graduate Texts in
  Physics, Springer-Verlag, Berlin, 2017, Mathematical foundation. \MR{3558531}

\bibitem{hels-69}
Carl~W. Helstrom, \emph{Quantum detection and estimation theory}, J. Statist.
  Phys. \textbf{1} (1969), 231--252. \MR{250623}

\bibitem{hels-73}
\bysame, \emph{Cram\'{e}r-{R}ao inequalities for operator-valued measures in
  quantum mechanics}, Internat. J. Theoret. Phys. \textbf{8} (1973), 361--376.
  \MR{373505}

\bibitem{hell-book}
\bysame, \emph{Quantum detection and estimation theory}, Mathematics in Science
  and Engineering, Academic Press, London, 1976.

\bibitem{hels-67}
C.W. Helstrom, \emph{Minimum mean-squared error of estimates in quantum
  statistics}, Physics Letters A \textbf{25} (1967), no.~2, 101--102.

\bibitem{hiai-petz}
Fumio Hiai and D\'enes Petz, \emph{Introduction to matrix analysis and
  applications}, Universitext, Springer, Cham; Hindustan Book Agency, New
  Delhi, 2014. \MR{3184500}

\bibitem{holevo1}
Alexander Holevo, \emph{Probabilistic and statistical aspects of quantum
  theory}, second ed., Quaderni. Monographs, vol.~1, Edizioni della Normale,
  Pisa, 2011, With a foreword from the second Russian edition by K. A. Valiev.
  \MR{2797301}

\bibitem{kato}
Tosio Kato, \emph{Perturbation theory for linear operators}, Classics in
  Mathematics, Springer-Verlag, Berlin, 1995, Reprint of the 1980 edition.
  \MR{1335452}

\bibitem{krishna-krp}
M.~Krishna and K.~R. Parthasarathy, \emph{An entropic uncertainty principle for
  quantum measurements}, Sankhy\=a{} Ser. A \textbf{64} (2002), no.~3,
  842--851, Special issue in memory of D. Basu. \MR{1981514}

\bibitem{OP}
Masanori Ohya and D\'enes Petz, \emph{Quantum entropy and its use}, Texts and
  Monographs in Physics, Springer-Verlag, Berlin, 1993. \MR{1230389}

\bibitem{krp4}
K.~R. Parthasarathy, \emph{An introduction to quantum stochastic calculus},
  Monographs in Mathematics, vol.~85, Birkh\"auser Verlag, Basel, 1992.
  \MR{1164866 (93g:81062)}

\bibitem{krp-hyp}
\bysame, \emph{On consistency of the maximum likelihood method in testing
  multiple quantum hypotheses}, Stochastics in finite and infinite dimensions,
  Trends Math., Birkh\"{a}user Boston, Boston, MA, 2001, pp.~361--377.
  \MR{1797096}

\bibitem{krp21}
\bysame, \emph{On the philosophy of {C}ram\'er-{R}ao-{B}hattacharya
  inequalities in quantum statistics}, Sankhya A \textbf{83} (2021), no.~2,
  521--548. \MR{4284572}

\bibitem{PT}
D\'enes Petz and G\'abor T\'oth, \emph{The {B}ogoliubov inner product in
  quantum statistics}, Lett. Math. Phys. \textbf{27} (1993), no.~3, 205--216.
  \MR{1217021}

\bibitem{ruskai-23}
Mary~Beth Ruskai and Jon Yard, \emph{Local additivity revisited}, J. Math.
  Phys. \textbf{64} (2023), no.~3, Paper No. 032201, 15. \MR{4562545}

\end{thebibliography}
\end{document}